\documentclass[
reprint,
superscriptaddress,
 amsmath,amssymb,
 aps,
 prx,
longbibliography,
floatfix,
]{revtex4-2}

\usepackage{xcolor}
\usepackage{soul}
\definecolor{mylightblue}{HTML}{1172ba}
\definecolor{mygray}{HTML}{989898}
\definecolor{myorange}{HTML}{f15a29}
\definecolor{mygreen}{HTML}{009444}
\usepackage{graphicx}
\usepackage{dcolumn}
\usepackage{bm}
\usepackage{braket}
\usepackage{amsthm}
\usepackage{tikz-cd}
\usepackage{thm-restate}
\usetikzlibrary{cd,fit}
\usepackage{hyperref}
\usepackage{bbold}
\hypersetup{
    colorlinks=true,
    linkcolor=blue,
    urlcolor=blue, 
    citecolor=blue,
}
\usepackage{nicematrix, siunitx}

\newtheorem{definition}{Definition}
\newtheorem{remark}{Remark}
\newtheorem{theorem}{Theorem}

\newtheorem{proposition}{Proposition}
\newtheorem{lemma}{Lemma}

\newcommand*\bigcdot{\mathpalette\bigcdot@{.5}}

\newcommand{\cA}[0]{A_\bullet}
\newcommand{\cAi}[1]{A[#1]_\bullet}
\newcommand{\cG}{G_\bullet}
\newcommand{\cGi}[1]{G[#1]_\bullet}
\newcommand{\cC}[0]{C_\bullet}
\newcommand{\coA}[0]{A^\bullet}
\newcommand{\coC}[0]{C^\bullet}
\newcommand{\cD}[0]{D_\bullet}
\newcommand{\coD}[0]{D^\bullet}
\newcommand{\im}[0]{\mathrm{im}}
\newcommand{\id}[0]{\mathrm{id}}
\newcommand{\cone}[0]{\mathrm{cone}}
\newcommand{\bF}[0]{\mathbb{F}}
\newcommand{\cc}[0]{\mathtt{CC}}

\newcommand{\Span}{\mathrm{span}}

\newcommand\IBM{
    IBM Quantum, 
    IBM T.~J. Watson Research Center, 
    Yorktown Heights, NY 10598, USA
    }

\newcommand\MIT{
    Department of Mathematics, 
    Massachusetts Institute of Technology, 
    Cambridge, MA 02139, USA
    }
    
\newcommand\Yale{
    Department of  Physics, 
    Yale University, 
    New Haven, CT 06520, USA
    }
    
\newcommand\YQI{
    Yale Quantum Institute, 
    Yale University, 
    New Haven, CT 06511, USA
    }

\begin{document}

\title{Constant-Time Surgery on 2D Hypergraph Product Codes \\
with Near-Constant Space Overhead}

\author{Kathleen~(Katie)~Chang}
\altaffiliation{These authors contributed equally to this work.}
\affiliation{\IBM}
\affiliation{\Yale}
\affiliation{\YQI}

\author{Zhiyang~He}
\altaffiliation{These authors contributed equally to this work.}
\affiliation{\MIT}

\author{Theodore~J.~Yoder}
\affiliation{\IBM}

\author{Guanyu~Zhu}
\affiliation{\IBM}

\author{Tomas~Jochym-O'Connor}
\affiliation{\IBM}

\begin{abstract}
    Generalized code surgery is a versatile and low-overhead technique for performing fault-tolerant computation on quantum low-density parity-check (qLDPC) codes. 
    In many settings, surgery exhibits practical space overheads, while its time overhead remains a bottleneck at $O(d)$ syndrome rounds per operation.
    In this work, we construct surgery gadgets that perform parallel logical measurements on 2D hypergraph product codes in constant time overhead ($O(1)$) and near-constant space overhead ($\tilde{O}(1)$).
    The reduced time overhead is a result of amortization, as we show, following the formulation by Cowtan et al. (\href{https://arxiv.org/abs/2510.14895}{arXiv:2510.14895}), that performing $d$ surgery operations in $O(d)$ time is fault tolerant.
    Our gadgets combine the strengths of different approaches to fault-tolerant logical operations: they partially retain the flexibility of surgery while achieving overheads comparable to transversal gates.
    Consequently, they are well-suited for near-term experimental realization and demonstrate new possibilities in the design of gadgets for fast logical computation.
\end{abstract}

\maketitle

\section{Introduction}

A central goal in the field of quantum error correction (QEC) is to design fault-tolerant quantum computing (FTQC) systems with low space- and time overhead. 
In recent years, progress in the study of quantum low-density parity-check (qLDPC) 
codes have demonstrated that by using high-rate qLDPC codes~\cite{gottesman2014fault-tolerant,bravyi2024high-threshold2,xu2024constant-overhead}, the asymptotic and practical space overhead of implementing fault-tolerant quantum memory can be significantly reduced compared to using low-rate topological codes~\cite{dennis2002topological}. 
Such progress motivated research into methods of performing fault-tolerant logical computation on qLDPC memories, which has since became a popular and active area of research.

A versatile and low-overhead technique to compute with qLDPC codes is code surgery, which is a generalization of lattice surgery~\cite{horsman2012surface} and has seen rapid developments in the past few years~\cite{cohen2022low-overhead,cowtan2024ssip,cross2024improved,williamson2024low-overhead,Ide_2025,swaroop2024universal,zhang2025time-efficient,cowtan2025parallel,he2025extractors,yoder2025tour,zheng2025high,zhang2025accelerating,baspin2025fast,cowtan2025fast}.
We refer readers to Section~3.2 of Ref.~\cite{he2025extractors} for a high-level review of recent progress, and supplement it by Section~\ref{sec:intro_fast_surgery} of this work.
Computationally, code surgery implements fault-tolerant measurements of logical Pauli operators on stabilizer codes. 
Such measurements can be used to implement Pauli $\pi/4$- and $\pi/8$-rotations on logical qubits (when assisted by magic states), which are sufficient for universal quantum computation.
This model of universal computation is known as Pauli-based computation (PBC)~\cite{bravyi2016trading}, which is implemented by modern lattice surgery-based surface code architectures~\cite{litinski2019game,gidney2025factor} as well as surgery-based qLDPC code architectures under the umbrella of extractor architectures~\cite{he2025extractors,yoder2025tour,webster2026pinnacle}.

Procedurally, surgery introduces an ancilla system of qubits and stabilizers to the original code (also called the \textit{base code}), which deforms it into a \textit{deformed code}. 
The ancilla system is designed such that the operators to be measured are products of newly introduced low-weight stabilizers.
As a result, upon learning the outcomes of the stabilizers in the deformed code reliably, one can deduce the measurement of the logical Pauli. 
Generically, learning the value of these stabilizers takes $O(d)$ rounds of measurement in the presence of noise, where $d$ is the code distance.
This non-trivial time overhead motivates a natural line of inquiry: under what assumptions can we perform surgery with fewer rounds of measurements, and can this time overhead be reduced to constant asymptotically?
To answer these questions, past works have studied two different forms of fast surgery, which we distinguish as \textit{single-shot surgery} and \textit{constant-time surgery}. 
We discuss their differences and the constructions from prior works in Section~\ref{sec:intro_fast_surgery}.

In this work, we introduce constant time overhead surgery gadgets for arbitrary two-dimensional hypergraph product (HGP) codes~\cite{tillich2014quantum,eczoo_hypergraph_product} which enable flexible and parallel measurements of logical Pauli operators. 
Our gadgets cost $\tilde{O}(n)$ ancilla qubits each, where $\tilde{O}(\cdot)$ denotes omission of poly-logarithmic factors. 
Together, the multiplicative spacetime overhead is $\tilde{O}(1)$, which we expect to be a small constant in practice.

Our construction features two conceptual novelties relative to the rich existing literature on qLDPC computation and code surgery. 
First of all, prior works~\cite{hillmann2024single-shot,Tan2025Single} have observed that constant spacetime overhead surgery can be performed on three- and higher-dimensional HGP codes. 
Notably, these codes are known to admit single-shot state preparation, which means their stabilizer checks in the $X$ (or $Z$) basis have enough redundancy such that they can be reliably measured in $O(1)$ time. 
This powerful property largely implies the plausibility of single-shot surgery~\cite{hillmann2024single-shot}. 
In contrast, generic two-dimensional HGP codes do not admit such redundancy in their stabilizer checks. 
In fact, most families of 2D HGP codes (including the surface code) require $O(d)$ rounds of syndrome measurements to reliably infer stabilizer values. 
From this perspective, our construction of constant-time surgery on 2D HGP codes is perhaps surprising. 
We discuss the connection to algorithmic fault-tolerance~\cite{zhou2024algorithmic} in Section~\ref{sec:intro_fast_surgery}.

More broadly speaking, among the proposed schemes for qLDPC computation, surgery is known for being highly versatile (capable of performing a large set of gates), while constant-depth logical gates are favored for their minimal overhead. 
Our construction is a rare instance which achieves close to the best of both worlds: it partially inherits the addressability from surgery, at roughly the same asymptotic cost of transversal gates. 
Moreover, the construction applies to all 2D HGP codes, and can be generalized to higher-dimensional HGP codes as well. 
As an example, we derive a detailed case study on the toric code, where our surgery gadgets have multiplicative constant spacetime overhead. 
We further expect our construction to be extendable to balanced product~\cite{breuckmann2021balanced} and lifted product codes~\cite{panteleev2021degenerate,panteleev2022asymptotically}

The rest of this paper is organized as follows. 
We first review existing formulations of fast surgery at a high-level in Section~\ref{sec:intro_fast_surgery}. 
In Section~\ref{sec:summary_construction}, we summarize our main construction for fast surgery gadgets and walk through an explicit, small example of our construction. 
We briefly review prior works in Section~\ref{sec:previous-works}.
We present the necessary technical background for understanding existing surgery techniques in Section~\ref{sec:prelims}. 
In Section~\ref{sec:hgp-code}, we present our main construction on 2D HGP codes and prove its properties. 
In Section~\ref{sec:examples_applications}, we discuss various ways to apply our construction, such as entangling two different HGP codes that share one common component code. 
In Section~\ref{sec:toric-code}, we present an intuitive and detailed case study on the 2D toric code that achieves constant space and time overhead.
Finally, we discuss some future directions in Section~\ref{sec:future}.

\subsection{Formulations of Fast Code Surgery}
\label{sec:intro_fast_surgery}

Most prior works on code surgery, whether on surface codes or generic qLDPC codes, take as standard $O(d)$ time per operation. 
This barrier is due to the aforementioned fact that for generic codes, extracting the stabilizer measurement values reliably under the impact of noise requires $O(d)$ rounds of repeated measurement. 
To overcome this barrier, two forms of fast surgery has been proposed and studied.

The most direct form is where we measure the base code, the deformed code, and then the base code again each for a single round, and the protocol is fault tolerant with respect to a reasonable measure, such as fault distance. 
This form of surgery, which we henceforth call \textit{single-shot surgery}, is possible for certain base codes which admit single-shot state preparation~\protect\footnote{We note that single-shot state preparation is different from, and often stronger than, the property of single-shot error correction~\protect\cite{bombin2015single,campbell2019theory}, which has several variations in the literature and is known to hold on a wider collection of qLDPC codes~\protect\cite{fawzi2018constant,gu2024single-shot,scruby2024high-threshold}.}, such as three-dimensional HGP codes~\cite{hillmann2024single-shot,Tan2025Single}. 
However, for more general codes this style of surgery is no longer fault-tolerant: when we return to the base code after deformation, we need to measure out the ancilla system, which non-deterministically induces a \textit{frame error} on the base code. We briefly explain what a frame error is in footnote~\footnote{To understand what a frame error is, consider the usual state preparation protocol where we prepare $n$ physical qubits in $\ket{+}$ state, and then measure the stabilizers of a CSS code. 
In the case where no errors occur, all $X$ stabilizers will measure to $+1$, while the $Z$ stabilizers will have non-deterministic outcomes. 
In other words, the $Z$ frame experienced a random shift, hence the name ``frame error''.
This frame shift will be further clouded by stochastic errors, which makes it difficult to be recovered and corrected. 
The usual method is therefore to repeatedly measure the stabilizers for $O(d)$ rounds, so that the syndromes are reliable and the frame shift may be recovered. 
Some codes, such as three- and higher-dimensional HGP codes, admit single-shot state preparation, which means there is enough redundancy in the stabilizers such that $O(1)$ rounds of measurement is sufficient for us to extract the frame shift.
This is why single-shot surgery is possible for such codes.}. 
For most base codes which do not admit single-shot state preparation, this frame error cannot be corrected from a single round of syndrome information, and will corrupt further logical operations if we proceed without correction.
This perspective is detailed further in Section~1.2 and 1.3 of Ref.~\cite{cowtan2025fast}.

\begin{figure}[t]
    \centering
    \includegraphics[width=0.8\linewidth]{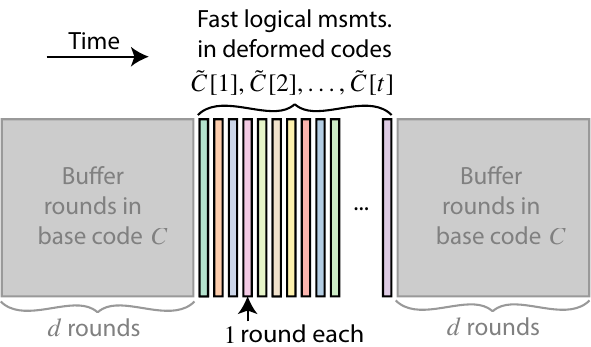}
    \caption{\textbf{Constant-time Surgery.} We abstractly represent the spacetime volume of $t$ surgery operations, where time flows from left to right. 
    The thin slices represent switching between a sequence of deformed codes, $\tilde{C}[1]...\tilde{C}[t]$, each performing a different logical measurement. 
    Importantly, we only need to spend a single round in each deformed code.
    To ensure fault tolerance in the presence of noisy measurements, these $t\geq d$ fast surgery measurements must be preceded and followed by $O(d)$ rounds of syndrome measurements in the base code $C$.}
    \label{fig:amortized}
\end{figure}

To design more broadly applicable fast surgery protocols, we therefore need a different framework.
In this paper, we work with the \textit{fast hypergraph surgery} methods of Ref.~\cite{cowtan2025fast}. 
Instead of asking for one surgery operation to be fault-tolerant within $O(1)$ rounds, Ref.~\cite{cowtan2025fast} asks for $t\ge O(d)$ surgery operations to be fault-tolerant within $O(t)$ rounds. 
More precisely, suppose we have base code $C$ and a sequence of 
deformed codes $\tilde{C}[1], \cdots, \tilde{C}[t]$.
The fast surgery procedure is to start with $O(d)$ rounds of syndrome measurements of the base code, then measure the stabilizers of $\tilde{C}[1], \cdots, \tilde{C}[t]$ in order, each for $O(1)$ rounds, and finally finish with $O(d)$ syndrome measurement rounds of the base code.
In this way, each surgery operation costs a constant number of rounds in amortization~\footnote{In general, one may choose to perform $t<d$ logical operations in between buffering syndrome measurement rounds of the base code; however, this means that the overhead per surgery operation is increased.}.
We henceforth refer to this second form of fast surgery as \textit{constant-time surgery}.
See Figure~\ref{fig:amortized} for an illustration.

Why can such an amortized protocol be fault-tolerant, while each individual operation, strictly speaking, is not? 
We give a brief intuitive argument and refer the reader to Ref.~\cite{cowtan2025fast} for detailed proofs. 
On a high level, performing one surgery operation in a constant number of rounds does not generate sufficient syndrome information to ensure fault-tolerance. 
However, by performing a sequence of $t$ surgery operations (which satisfy certain conditions, see Lemma~\ref{lem:fast-surgery-conditions}) in $O(t)$ time, buffered by $O(d)$ syndrome rounds on the base code, we can generate enough syndrome information to decode all of them together and recover from the sequence of frame errors. 
In technical terms, while each surgery operation introduces new stabilizers and frame errors, a sufficient number of detectors are generated at each step, and we have an $\Omega(d)$ length chain of detectors to decode for fault-tolerance. 
This idea of using detectors across multiple logical operations to ensure fault tolerance in amortization is akin to algorithmic fault-tolerance~\cite{zhou2024algorithmic} for transversal gates.

We note that in a common computational model, sequential Pauli-based computation \cite{bravyi2016trading,litinski2019game}, constant-time surgery has the potential to reduce the overall time overhead by a factor of $O(d)$, because logical measurements are the only operations in the computation, provided a supply of high-quality magic states at constant rate. 
If we further include unitary logical Clifford operations (e.g.~transversal gates), we would expect the fault-tolerance guarantee for constant-time surgery to be preserved as the detector chains stay intact. 
However, in a setting where logical non-Cliffords are performed unitarily and frequently, this would break up the detector chains, complicating the use of constant-time surgery.

We note that the recent work of Ref.~\cite{baspin2025fast} studied fast surgery as well, in the following sense:
they perform a single surgery operation in $O(1)$ time, preceded and followed by perfect syndrome measurements on the base code. 
They derived a set of conditions (Theorem~5 of Ref.~\cite{baspin2025fast}) to guarantee that the above procedure is fault tolerant, and presented constructions which use $\tilde{O}(nkd)$ ancilla qubits per operation.
As discussed earlier, obtaining the effect of perfect syndrome information on generic codes takes $O(d)$ rounds of noisy syndrome measurement~\footnote{We note that there are a few ways to obtain reliable syndrome data on generic codes. The most straightforward way is to measure the stabilizers for $O(d)$ rounds. One could also use Steane-style measurement~\protect\cite{steane1997active,knill2005quantum} which requires a well-prepared ancilla code state.
In this case, the code state takes $O(d)$ rounds to prepare, but the syndrome measurement on target code takes $O(1)$ time.
Finally, one could also readout all the physical qubits of the code, which unfortunately destroy the code state.}. 
Therefore, the surgery gadgets in Ref.~\cite{baspin2025fast} are fast in the sense of single-shot surgery when the base code admits single-shot state-preparation, and cost $O(d)$ time per operation otherwise.

\subsection{Overview of Construction}\label{sec:summary_construction}

We now describe, at a high-level, the constant-time surgery gadgets we construct in this paper.  
Let the code we would like to perform surgery on be the hypergraph product of two classical codes, $C$ and $D$. 
We will attach a surgery gadget to the classical code $C$, with known techniques~\cite{williamson2024low-overhead}, creating an augmented classical code that we refer to as $\cone(g)$ that measures a particular classical codeword (this notation comes from the fact that the augmented classical code is a mapping cone with respect to a chain map $g$, see Section~\ref{sec:prelims}). 
Then, the final, deformed code is constructed by taking hypergraph product of $\cone(g)$ with $D$.
This deformed code measures, in parallel, all \textit{quantum} logical operators whose canonical representatives on the product code correspond with the measured classical codeword (see Section~\ref{sec:prelims}).
By taking a sequence of different surgery gadgets on the classical code, giving $\cone(g_1), \cdots, \cone(g_t)$, we obtain a sequence of deformed codes.
Importantly, we will show that this sequence of deformed codes satisfies all the properties required for fault-tolerant constant-time surgery in Ref.~\cite{cowtan2025fast}.

\begin{figure*}[t]
    \centering
    \includegraphics[width=0.8\linewidth]{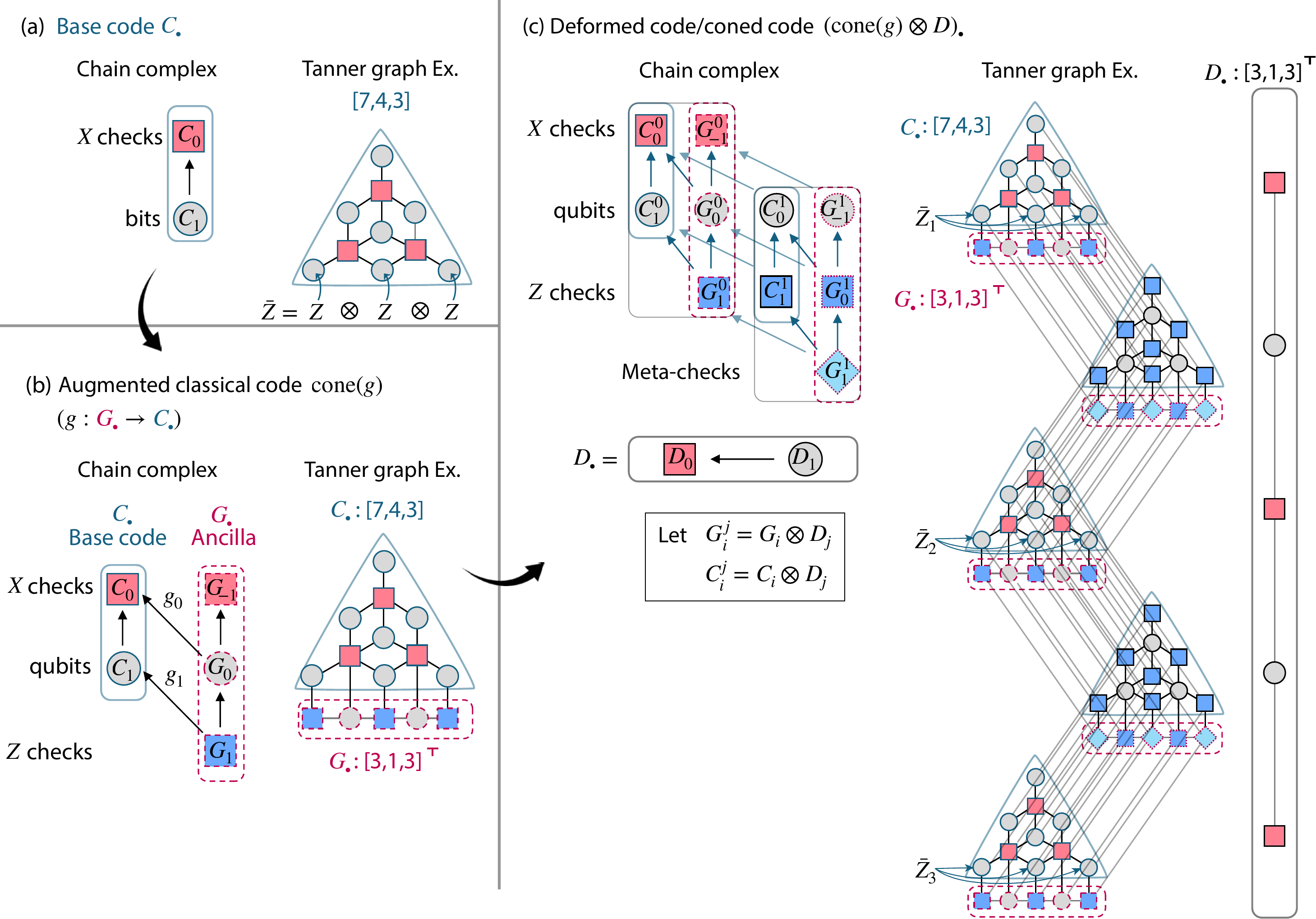}
    \caption{
    \textbf{(a)} A classical code, $C_\bullet=C_1\rightarrow C_0$ used in the base hypergraph product code $Q$, $(C\otimes D)_\bullet$. 
    A concrete example, the $[7,4,3]$ Hamming code, is shown with a highlighted logical operator $\bar{Z}$. 
    \textbf{(b)} The augmented classical code, $\cone(g)$, which is the classical code $C_\bullet$ attached to an ancilla complex $G_\bullet: G_1\rightarrow G_0\rightarrow G_{-1}$  through a chain map, $g_i:G_i\rightarrow C_i$.
    The Tanner graph of the augmented $[7,4,3]$ code is shown.
    Notably, the product of all $Z$ checks of $G_\bullet$ is equal to the previously highlighted $\bar{Z}$ logical.
    \textbf{(c)} The final deformed code or coned code, $(\cone(g)\otimes D)_\bullet$, that we switch into to perform fast surgery on $(C\otimes D)_\bullet$. $(\cone(g)\otimes D)_\bullet$ is the tensor product of the augmented code in~\ref{fig:toric-code}(b) with $D_\bullet$.
    This deformed code contains meta-checks (in $G_{1}^1$ of the chain complex), which check for measurement errors of $Z$ checks.
    We show the Tanner graph of a concrete example of a deformed code by taking the hypergraph product of the augmented code example in~\ref{fig:toric-code}(b) with the dual of the $[3,1,3]$ repetition code.
    This coned code measures the highlighted logical representatives, $\bar{Z}_1, \bar{Z_2}$ and $\bar{Z}_3$, in parallel.
    }    
    \label{fig:summary}
\end{figure*}

We show an explicit example of one such deformed code in Fig.~\ref{fig:summary}.
For concreteness, let $C$ be the $[7,4,3]$ Hamming code (in Fig.~\ref{fig:summary}~(a)) and let $D$ be the dual of the linear $3$-bit repetition code.
The hypergraph product $Q=C\otimes D$ of these codes is a $[[27,4,3]]$ quantum code, and our goal is to measure a logical operator of code $Q$.
In Fig.~\ref{fig:summary}~(a), we represent code $C$ as a chain complex (left) $C_\bullet:C_1\rightarrow C_0$ where $C_1$ is a linear space spanned by all bits, and $C_0$ is the space spanned by all $X$ checks.
The Tanner graph of $[7,4,3]$ (right) and the support of a $\bar{Z}$ logical is also presented.
Then, in Fig.~\ref{fig:summary}~(b), we apply code surgery to measure the $\bar{Z}$ logical on the bottom three qubits of the $[7,4,3]$ Tanner graph.
This entails the introduction of a new ancilla code, $G_\bullet = G_1\rightarrow G_0\rightarrow G_{-1}$ (dotted red box) with $Z$ checks ($G_1$), qubits ($G_0$), and $X$ checks ($G_{-1}$).
The connection between this ancilla code and the original code is formally stated as a chain map $g_i:G_i\rightarrow C_i$.
We refer to this deformed classical code as $\cone(g)$, and refer the reader to Section~\ref{sec:prelims} for more background on coning maps. 
In this example, $G_\bullet$ is the dual of the $3$-bit repetition code. 
We see that a product of all of the ancilla $Z$ checks equal the $\bar{Z}$ logical operator representative on $C_\bullet$, ensuring that the measurement of the ancilla stabilizers will measure $\bar{Z}$ as intended.
We remark that for this particular choice of $G_\bullet$, $G_{-1}$ is empty, however there may be $X$ checks in $G_\bullet$ in general.
Finally, we arrive at the full deformed code in Fig.~\ref{fig:summary}~(c) by taking the product of $\cone(g)$ with $D_\bullet$, the other classical code in the base hypergraph product code.
We refer to this deformed code as the coned code, $(\cone(g)\otimes D)_\bullet$.
The full chain complex of the coned code is drawn on the left of Fig.~\ref{fig:summary}~(c), where we use $G^j_i$ to denote the space $G_i\otimes D_j$, and $C^j_i$ to denote the space $C_i\otimes D_j$.\

Notably, the merged code has meta-checks in $G_{1}^1$ which check for measurement errors on ancilla $Z$ checks in $G_1^0$ and $G_0^1$ in addition to $Z$ checks of the original code, $C_1^1$.
We discuss the importance of these meta-checks for fast surgery in Section~\ref{sec:fast-surgery}.
Furthermore, for our concrete example, $D$ is the dual of the three-bit linear repetition code.
The Tanner graph of our $(\cone(g)\otimes D)_\bullet$ example is on the right. We also highlight three logical operators $\bar{Z}_1,\ \bar{Z}_2,\ \bar{Z}_3$ of the hypergraph product code $Q$, which are in fact all logically equivalent representatives of a single logical operator.
By inspection, we can see that we achieve the aforementioned goal of measuring a logical of $Q$.
Namely, each representative $\bar{Z}_i$ is being measured in the construction $(\cone(g)\otimes D)_\bullet$ because the representative is equal to a product of ancilla $Z$ checks, the three dark blue squares below it. 
Intuitively, the redundancy of measuring different representatives of the same logical operator ensures fault-tolerance against any single logical measurement being flipped by a measurement error, which is the same fault that would in normal non-constant-time surgery be caught by repeating measurements.
For further technical discussion on the importance of measuring multiple representatives for fast surgery, we refer the reader to Appendix~E of Ref.~\cite{cowtan2025fast}.
We emphasize that this example is one deformed code, which is only fault-tolerant and constant-time when used in quick succession with other deformed codes corresponding to choosing different classical code surgery $\cone(g_i)$.

\subsection{Prior Works}
\label{sec:previous-works}

\paragraph{Overheads of existing fast surgery gadgets.}

For three- and higher-dimensional HGP codes, which admit single-shot state preparation in one or both ($X$ and $Z$) bases, prior works~\cite{hillmann2024single-shot,Tan2025Single} observed that constant space overhead single-shot surgery can be performed. 
For other codes which admit single-shot state preparation, Ref.~\cite{baspin2025fast} constructed single-shot surgery gadgets which can measure an arbitrary Pauli operator from the code, at a daunting $\tilde{O}(nkd)$ qubit overhead. 

Ref.~\cite{cowtan2025fast} formulated constant-time surgery, and proposed \textit{block reading} as an example.   
Block reading is applicable to any CSS code, and incurs a $O(n)$ qubit overhead. 
However, it has limited flexibility in terms of its logical action. 
Block reading measures all logical qubits in one or multiple blocks of the same CSS codes in parallel, but cannot address individual or subsets of qubits without strong assumptions on the base codes. 
For instance, given two code blocks of a CSS code, block reading can be applied to measure the $Z\otimes Z$ observable on all indexed pairs of logical qubits simultaneously. 
Given $b$ code blocks and a set of $t$ commuting $b$-qubit Pauli observables, we can use block reading to measure these Pauli observables on all $k$ indexed tuples of $b$ logical qubits in $O(t)$ time, in the model of constant-time surgery. 
We note that this logical action can also be achieved using transversal CNOT. 
From this perspective block reading establishes an interesting connection between surgery and transversal gates, as well as between constant-time surgery and algorithmic fault-tolerance~\cite{zhou2024algorithmic}. 
We further note that the logical action of block reading can be used to implement CSS code concatenation, similar to those considered in the single-shot code-switching techniques of Refs.~\cite{Tan2025Single,golowich2025constant,xu2025batched}. 
These connections would be interesting to explore in future works.

\paragraph{Logical operations on HGP codes.}

HGP codes are extensively studied in prior literature due to their desirable structural properties~\cite{eczoo_hypergraph_product}. 
We limit our discussions to closely related works, including Refs.~\cite{xu2024fast,zheng2025high,Tan2025Single,golowich2025constant,xu2025batched}.

To discuss Ref.~\cite{xu2024fast}, we briefly describe the technique of \textit{homomorphic measurement} introduced in Ref.~\cite{huang2023homomorphic}. 
Similar to surgery, homomorphic measurement is a way to measure logical Pauli observable from a base code. 
Instead of a deformed code, homomorphic measurement uses an ancillary code block of sufficient distance, prepares this code block in a stabilizer state, and applies a CNOT circuit (often called \textit{homomorphic CNOT}) between the ancilla and base code blocks to entangle the logical qubits. 
The physical qubits of the ancilla block is then measured out, and the desired logical measurement results on the base code can be inferred from the physical measurement outcomes on the ancilla block.

The efficacy of homomorphic measurement varies significantly depending on the structure of the base code. 
On homological product codes, Ref.~\cite{xu2024fast} designed ancilla code blocks of size $O(n)$ (which are homological product codes themselves) that can be used to perform Pauli-product measurements (PPM) in parallel. 
By using different ancilla codes and constant-depth unitary gates~\cite{quintavalle2023partitioning,berthusen2025automorphism}, Ref.~\cite{xu2024fast} showed that the full logical Clifford group on HGP codes can be implemented. 
Note that as discussed earlier, preparing the ancilla code block in stabilizer states takes $O(d)$ time generically. 
Therefore, for 2D HGP codes, their gadgets cost $O(d)$ time per operation. 
In comparison, our surgery gadgets cost $O(1)$ time per operation. 

Further insights and observations can be derived from comparing our constructions with those in Ref.~\cite{xu2024fast}. 
Notably, Ref.~\cite{cowtan2025fast} showed a circuit-equivalence between CSS surgery and homomorphic measurements via ZX-calculus.
Moreover, our surgery ancilla systems are HGP codes as well, drawing parallel to the homological product ancilla codes used by Ref.~\cite{xu2024fast}.
These similarities and connections invite natural questions: what are the significant distinctions between the two constructions, and why is one faster? 
We answer these questions at the end of Section~\ref{sec:logical-action}, after we introduce the necessary technical tools.

More recently, Ref.~\cite{zheng2025high} constructed surgery gadgets for 2D HGP codes that can measure up to $O(k)$ operators in parallel (in $O(d)$ time), and proposed to utilize classical code homomorphisms to help the gadgets measure a more flexible set of operators compared to what's possible with homomorphic measurements. 
The differences between their construction and ours are best seen by comparing chain complexes, see equation~(D19) in Ref.~\cite{zheng2025high} and Fig.~\ref{fig:explicit-toric-chain-map} of this paper. 
At a high-level, the main difference is that we construct a $4$-term chain complex for $\cA$ instead of a $3$-term chain complex.
Here, the extra term in $\cA$ is the meta-checks we use to check for measurement errors in place of repeated rounds in the deformed code.

For certain families of three- and higher-dimensional HGP codes, Refs.~\cite{Tan2025Single,golowich2025constant} showed how to switch, in constant time, between lower-dimensional (including two-dimensional) ``slices'' of the codes and higher-dimensional ``slices''.
Utilizing these single-shot code-switching and single-shot state-preparation methods, they showed how to implement various gates on encoded logical information in constant time. 
Independently, Ref.~\cite{xu2025batched} showed that single-shot code-switching is possible for any two families of quantum CSS codes by taking their homological product. 
This primitive enables them to implement ``batched'' Clifford circuits in constant time. Here ``batched'' means that the same $k$-qubit circuit is being implemented on a large enough number of identical code blocks, each encoding $k$ logical qubits.
It is worth noting that these single-shot code-switching techniques~\cite{Tan2025Single,golowich2025constant,xu2025batched} all require codes that are at a larger scale than 2D HGP codes, and are likely to have a lower rate.

\section{Preliminaries}
\label{sec:prelims}

\subsection{Chain Complexes and Quantum Codes}

We first introduce standard notations and definitions on chain complexes and quantum codes, which will be used throughout this paper.

A chain complex $\cC$ over $\bF_2$ is a sequence of based vector spaces equipped with linear maps, which we call boundary maps, between successive spaces. 
\begin{equation}
\begin{tikzcd}
\cC = \cdots \arrow[r,"\partial_{n+2}"] & C_{n+1} \arrow[r,"\partial_{n+1}"] & C_n \arrow[r,"\partial_n"] & C_{n-1} \arrow[r,"\partial_{n-1}"] & \cdots
\end{tikzcd}
\end{equation}
The boundary maps satisfy the condition that $\partial_{i-1}\circ \partial_{i} = 0$, or equivalently, $\im(\partial_{i})\subseteq \ker(\partial_{i-1})$. 
We consider bounded chain complexes, which means a finite number of spaces $C_i$ are non-trivial. 
The co-chain complex $\coC$ is constructed by taking the transpose of the boundary operators as co-boundary maps $\delta^{i-1} = \partial_i^\top$. 
\begin{equation}
\begin{tikzcd}
\coC = \cdots & \arrow[l,"\delta^{n+1}"']  C^{n+1} & \arrow[l,"\delta^{n}"']  C^n & \arrow[l,"\delta^{n-1}"']  C^{n-1} &  \arrow[l,"\delta^{n-2}"'] \cdots
\end{tikzcd}
\end{equation}
Here we identify $C^i\cong C_{i}$ for all indices $i\in \mathbb{Z}$. 
The $i$-homology and $i$-co-homology spaces are defined as 
\begin{align}
    H_i(\cC) &= \ker(\partial_i)/\im(\partial_{i+1}), \\
    H^i(\coC) &= \ker(\delta^i)/\im(\delta^{i-1}).
\end{align}
The $i$-systolic distance $d_i(\cC)$ and the $i$-cosystolic distance $d^i(\coC)$ are
\begin{align}
    d_i(\cC) &=\min_{x \in H_i(\cC)}|x|, \\
    d^i(\coC) &=\min_{x\in H^i(\coC)}|x|.
    \label{eq:systolic-cosystolic-distance}
\end{align}
We often abbreviate the notations as $d_i(C)$ and $d^i(C)$. 
In case the (co-)homology spaces are empty, we follow the convention in Ref.~\cite{zeng2020minimal} and set the distances to in $fty$. 
This convention will be helpful for our later distance analysis.

A quantum CSS codes can be represented by a length-2 chain complex,
\begin{equation}
\begin{tikzcd}
\cC = C_{2} \arrow[r,"\partial_{2}"] & C_1 \arrow[r,"\partial_1"] & C_{0},
\end{tikzcd}
\label{eq:css-chain}
\end{equation}
where the 2-boundary is assigned to the transpose of the $Z$-parity check matrix, $\partial_2=H_Z^\top $, and the 1-boundary is assigned to the $X$-parity check matrix, $\partial_1 =H_X$. 
The condition $\partial_1\circ\partial_2=0$ is exactly the requirement that stabilizers commute, $H_XH_Z^\top =0$.
$Z$-logical operators are in the 1-homology, $H_1(\cC) = \ker(H_X)/\im(H_Z^\top )$, because they are undetected by parity checks in $H_X$, and not in the $Z$-stabilizer group.
Similarly, $X$-logical operators are in the 1-cohomology, $H^1(\coC)=\ker(H_Z)/\im(H_X^\top )$. 
The $X$- and $Z$-distances of the code are the $1$-cosystolic and systolic distances, respectively.

For this work, we often consider multiple (co-)chain complexes at the same time. For this reason, from now on we will label the (co-)boundary maps with subscripts indicating their complexes, except when the complex is evident from context. 

\subsection{CSS Code Surgery as Mapping Cones}

In this section we briefly review the methods of code surgery, which enable us to design fault-tolerant schemes to measure Pauli logical operators of quantum stabilizer codes. 
Logical measurement is a powerful primitive for fault-tolerant quantum computation (FTQC), as described by the model of Pauli-based computation (PBC)~\cite{bravyi2016trading}. 
Specifically, Clifford gates can be implemented using Pauli measurements and feed-forward Pauli corrections, and non-Clifford gates can be implemented similarly with the help of magic states. For more details see, for instance, Figure~11 of Ref.~\cite{he2025extractors}.
Following recent developments~\cite{cohen2022low-overhead,cowtan2024ssip,cross2024improved,williamson2024low-overhead,Ide_2025,swaroop2024universal}, code surgery is now a practically promising and generally applicable approach for FTQC on qLDPC codes~\cite{he2025extractors,yoder2025tour}.

For our purposes, we restrict to the case of CSS codes and largely follow the presentation in Ref.~\cite{Ide_2025}.
Suppose that we would like to measure a $Z$-logical operator of a quantum code, represented as a chain complex $\cC$. At a high-level, we attach an ancilla system to $\cC$ and measure $Z$-checks of the ancilla system to extract the value of the logical information. 
Formally, code surgery introduces an ancilla system 
\begin{equation}
\begin{tikzcd}
\cA = A_{1} \arrow[r,"\partial_{A, 1}"] & A_0 \arrow[r,"\partial_{A, 0}"] & A_{-1},
\end{tikzcd}
\label{eq:anc-chain}
\end{equation}
and a chain map $f_\bullet:A_\bullet\rightarrow \cC$. 
A chain map is a homomorphism between two chain complexes $f_i:A_i\rightarrow C_i$.
The commuting diagram is as follows.
\begin{equation}
\begin{tikzcd}[column sep=large,row sep=large]
|[alias=0]|  0 \arrow[r, "0"] \arrow[d, "0"'] &
|[alias=A1]| A_1 \arrow[r, "\partial_{A,1}"] \arrow[d, "f_1"'] &
|[alias=A0]| A_0 \arrow[r, "\partial_{A,0}"] \arrow[d, "f_0"'] &
|[alias=Am1]| A_{-1} \arrow[d, "0"'] \\
|[alias=C2] |C_2 \arrow[r, "\partial_{C,2}"'] &
|[alias=C1]| C_1 \arrow[r, "\partial_{C,1}"'] &
|[alias=C0]| C_0 \arrow[r, "0"] &
0
\end{tikzcd}
\label{eq:chain-map}
\end{equation}
In particular, $f_0 \partial_{A,1}= \partial_{C,1}f_1$.
We take the mapping cone of $f$, denoted $\cone(f)$, which is obtained by left-shifting the ancillary complex $\cA$ by one degree.
\begin{equation}
\cone(f) = 
\begin{tikzcd}[column sep=large,row sep=large]
|[alias=A1]| A_1 \arrow[r, "\partial_{A,1}"] \arrow[rd, "f_1"'] &
|[alias=A0]| A_0 \arrow[r, "\partial_{A,0}"] \arrow[rd, "f_0"'] &
|[alias=Am1]| A_{-1} \\
|[alias=C2] |C_2 \arrow[r, "\partial_{C,2}"'] &
|[alias=C1]| C_1 \arrow[r, "\partial_{C,1}"'] &
|[alias=C0]| C_0
\end{tikzcd}
\label{eq:cone-code-surgery}
\end{equation}
The fact that $f$ is a homomorphism, or equivalently the fact that the diagram in equation~\eqref{eq:chain-map} commutes, implies that $\cone(f)$ is a valid chain complex. 

One can view~\eqref{eq:cone-code-surgery} as a CSS code itself, with $Z$-checks, qubits, and $X$-checks living the vector spaces $A_1\oplus C_2$, $A_0\oplus C_1$, and $A_{-1}\oplus C_0$ respectively. 
In the context of surgery, this code is called the \textit{merged code} or \textit{deformed code} with parity check matrices
\begin{align}
    \tilde{H}_Z^\top =\begin{pNiceMatrix}[first-row, first-col]
        & A_1 & C_2 \\
        A_0 & \partial_{A,1} & 0 \\
        C_1 & f_1 & \partial_{C,2}
    \end{pNiceMatrix}\label{eq:graph-surg-hz}\\
    \tilde{H}_X=\begin{pNiceMatrix}[first-row, first-col]
        & A_0 & C_1 \\
        A_{-1} & \partial_{A,0} & 0\\
        C_0 & f_0 & \partial_{C,1}
    \end{pNiceMatrix}\label{eq:graph-surg-hx}
\end{align}

How does the deformed code help us perform logical measurements? As detailed in prior works~\cite{cross2024improved,williamson2024low-overhead,Ide_2025}, the idea is to perform code-switching between the base memory code and the deformed code.
We start in the base code with reliable syndrome information, then measure the new stabilizers of the deformed code for $O(d)$ rounds, and finally return to the base code and measure its stabilizers for $O(d)$ rounds.
In this process, all $Z$-operators of the base code corresponding to vectors in 
\begin{align}
    f_1(\ker(\partial_{A,1})) = \{f_1(v): v\in \ker(\partial_{A,1})\} \subseteq \ker(\partial_{C,1}).
    \label{eq:measured-operators}
\end{align}
are measured by the ancilla system.
To see this, consider the operator $Z^{f_1(v)}$, which are Pauli $Z$s on qubits in $f_1(v)$.
Note that such an operator must be a $Z$-stabilizer or logical operator of $\cC$, and is the product of the set of deformed $Z$-checks in $A_1$ corresponding to non-zero indices in $v$. 
Therefore, its measurement result can be inferred from the measurement values of the deformed checks, which are reliable since we repeatedly measured them for $O(d)$ rounds. 
For a more detailed derivation, see, for instance, Lemma~1 of Ref.~\cite{Ide_2025}.
Note that the protocol for fast quantum surgery is different, as we will detail in Section~\ref{sec:fast-surgery}.

The hard work, therefore, lies in constructing appropriate ancilla complexes $\cA$ such that the code-switching protocol performs the correct measurements fault-tolerantly, with low-overhead. 
The leading method to construct a surgery system which measures a chosen logical operator from an arbitrary CSS code is proposed in Ref.~\cite{williamson2024low-overhead} as \textit{gauging measurements} and independently in Ref.~\cite{Ide_2025} as \textit{homological measurements}. 
We summarize their main results as follows, in the language of chain complexes. 

\begin{lemma}[Main results from~\cite{williamson2024low-overhead,Ide_2025}]\label{lem:graph-surgery}
    Consider a three-term complex $\begin{tikzcd}
\cC = C_{2} \arrow[r,"\partial_{C,2}"] & C_1 \arrow[r,"\partial_{C,1}"] & C_{0},
\end{tikzcd}$ and an element $L\in H_1(C_\bullet)$. 
We can construct an ancillary complex $\cA$, a chain map $f:\cA\rightarrow \cC$, and therefore the mapping cone $\cone(f)$ as in equation~\eqref{eq:anc-chain},~\eqref{eq:chain-map} and~\eqref{eq:cone-code-surgery}, such that the following conditions hold.
\begin{enumerate}
    \item The map $\partial_{A, 1}$ has boundary Cheeger constant at least 1 (see Definition~\ref{def:Cheeger}). \label{surgery-cond:expansion}
    \item $\dim(H_0(\cA)) = 0$, $\dim(H_{-1}(\cA)) = 0$. \label{surgery-cond:no-gauge-qubits}
    \item $\dim(H_1(\cA)) = 1$, and the only element $v$ in $H_1(\cA) = \ker(\partial_{A, 1})$ is the all ones vector on $A_1$. It satisfies $f_1(v) = L$. 
    \label{surgery-cond:ancilla-homology}
    \item There is a constant $w$ such that the boundary maps of $A$ and the chain maps are $w$-sparse; i.e., the row and column weights of $\partial_{A, 1}$, $\partial_{A, 0}$, and $f_0$ are all upper bounded by $w$. Moreover, $f_1$ is $1$-sparse.
    \label{surgery-cond:sparsity}
    \item $\dim(A_1) + \dim(A_0) + \dim(A_{-1})\le O(|L|\log^3(|L|))$.\label{surgery-cond:size-bound}
\end{enumerate}
Using these conditions, it was shown that $d_1(\cone(f))\ge d_1(C)$ and $d^1(\cone(f))\ge d^1(C)$. 
\end{lemma}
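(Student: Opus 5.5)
We build the gauging/homological-measurement construction of Ref.~\cite{williamson2024low-overhead} explicitly and then check the five conditions and the distance bounds.

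\textbf{Construction.} Let $L\in\ker(\partial_{C,1})$ be a representative with $|L|$ minimal in its class. Let $V=\mathrm{supp}(L)$ and identify $A_1\cong\bF_2^{V}$, with $f_1$ the inclusion of $V$ into the qubits $C_1$. Then $f_1$ is $1$-sparse.

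Build a graph $\Gamma$ on vertex set $V$. Every $X$-check $c\in C_0$ overlaps $V$ in an even set, since $\partial_{C,1}L=0$. Pair up the vertices of each such overlap by a perfect matching. Add edges from a constant-degree expander on $V$ so that $\Gamma$ is connected and its Cheeger constant is at least $1$.

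Next take a cycle basis of $\Gamma$ with short, low-congestion cycles. Following the Freedman--Hastings decongestion lemma, this requires thickening $\Gamma$ into $O(\log^2|L|)$ layered copies joined by vertical edges, so that the basis cycles have weight $O(1)$ and each edge lies in $O(1)$ of them. Set $A_0$ to be the edges of the thickened graph and $A_{-1}$ the basis cycles. Let $\partial_{A,1}$ be the vertex--edge coboundary (the transpose of the incidence map) restricted to vertices, with $V$ identified across layers. Let $\partial_{A,0}$ be the edge--cycle incidence. Let $f_0$ send a matching edge to the check $c$ that produced it. All of this lives in the cohomological orientation used in equation~\eqref{eq:cone-code-surgery}.

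\textbf{Verification of the chain and homology conditions.} The identity $\partial_{A,0}\partial_{A,1}=0$ holds because every cycle meets each vertex star evenly. The identity $f_0\partial_{A,1}=\partial_{C,1}f_1$ holds by the matching construction: each check $c$ is hit once for each vertex of $V\cap c$, and that vertex's matching edge is the unique image mapping to $c$.

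For condition~\ref{surgery-cond:ancilla-homology}, connectivity gives $\ker\partial_{A,1}=\Span\{\mathbf 1\}$, and $f_1(\mathbf 1)=L$. For condition~\ref{surgery-cond:no-gauge-qubits}, $H_0=0$ because the chosen cycles form a basis of the cycle space. For $H_{-1}=0$, the cycle basis must be independent, so $\partial_{A,0}$ is surjective; to get this we take exactly a basis rather than an overcomplete set. Condition~\ref{surgery-cond:expansion} comes from the expander edges and survives the layering, because the vertical edges only add to each boundary. Condition~\ref{surgery-cond:sparsity} follows from the bounded degrees together with the decongestion bound. Condition~\ref{surgery-cond:size-bound} counts $|L|$ vertices times $O(\log^2|L|)$ layers, plus an extra $\log$ factor from cycle lengths, giving $O(|L|\log^3|L|)$.

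\textbf{Distance.} For $d^1$, take an $X$-logical $(a,x)\in A_0\oplus C_1$ of $\cone(f)$. Cleaning $a$ by $\delta$ of $A_1\oplus C_2$ uses the Cheeger constant: the set of vertices toggled by the cleaning costs at most its edge boundary. The result is a base-code $X$-logical of no larger weight, so $d^1(\cone(f))\ge d^1(C)$.

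For $d_1$, take a $Z$-logical $(a,x)$ in the kernel of $\tilde H_X$. Then $\partial_{A,0}a=0$, so $a$ is a sum of basis cycles plus possibly a boundary term. Since $H_0(\cA)=0$, we can write $a=\partial_{A,1}u$ and subtract $\tilde H_Z^\top(u,0)$, leaving $(0,x+f_1u)$. If this were a stabilizer of $C$, the original operator would be trivial in $\cone(f)$. Its weight versus the weight of $x$ is controlled again by expansion, since $|\partial_{A,1}u|\ge\min(|u|,|V\setminus u|)$; the complementary choice uses the fact that adding $L$ is harmless only up to the measured logical. This gives $d_1(\cone(f))\ge d_1(C)$.

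\textbf{Main obstacle.} The hardest step is the decongestion: simultaneously obtaining constant sparsity of $\partial_{A,0}$, independence of the cycle basis so that $H_{-1}=0$, and the $\log^3$ size bound. Here I would invoke the Freedman--Hastings decongestion lemma as used in Ref.~\cite{williamson2024low-overhead}, rather than reprove it.
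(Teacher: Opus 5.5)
There is a genuine gap: your argument does not establish condition~\ref{surgery-cond:expansion} once you thicken the graph. To keep $\partial_{A,1}$ sparse, every vertex of every layer must carry its own Gauss-law check. So $A_1$ is spanned by all $R|V|$ vertices, with $R=\Theta(\log^2|L|)$, and $f_1$ is the inclusion of the layer-$0$ copy $V_0$ only. If you literally identify $V$ across layers so that $A_1\cong\bF_2^V$, each column of $\partial_{A,1}$ has weight about $R\deg_\Gamma(v)$, and condition~\ref{surgery-cond:sparsity} fails.

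Definition~\ref{def:Cheeger} normalizes by $\min(|u|,\dim A_1-|u|)$, and this denominator also grows with the layers. So ``the vertical edges only add to each boundary'' does not suffice. Concretely, join consecutive layers by vertical edges and let $u$ be all vertices of the bottom $\lfloor R/2\rfloor$ layers. Then $|\partial_{A,1}u|=|V|$, coming only from the vertical edges between the two middle layers. But $\min(|u|,\dim A_1-|u|)=\lfloor R/2\rfloor|V|$, so the boundary Cheeger constant is at most $1/\lfloor R/2\rfloor\to 0$.

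What your thickened graph does satisfy, just by restricting to the edges inside layer $0$, is $|\partial_{A,1}u|\ge\min(|u\cap V_0|,|V_0\setminus u|)$. This is relative expansion in the sense of Definition~\ref{def:relative_exp} with $P=V_0$. So as written you have proved the lemma with condition~\ref{surgery-cond:expansion} weakened to relative expansion, not the stated full Cheeger bound. The weaker property is all the two distance claims need, and all the paper's compacted-code argument needs via Proposition~\ref{prop:dist_relative_exp}. To get the stated condition you need an extra ingredient, e.g.\ inter-layer connections that are themselves expanding rather than a path. You would then have to re-check sparsity and $H_{-1}(\cA)=0$, since with a non-tree layer graph the vertical squares become linearly dependent and must be pruned to an exact basis.

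Separately, your $d^1$ argument has the roles of the checks reversed. In this orientation the vertex checks in $A_1$ are $Z$-type, so they cannot clean an $X$-logical, and ``$\delta$ of $A_1\oplus C_2$'' leaves the complex. No expansion is needed here. For $(a,x)\in\ker\tilde H_Z$ you have $\partial_{C,2}^\top x=0$, so there are two cases:
\begin{itemize}
\item $x$ is a nontrivial $X$-logical of $\cC$, and then $|(a,x)|\ge|x|\ge d^1(C)$;
\item $x=\partial_{C,1}^\top y$, and then $(a,x)+\tilde H_X^\top(0,y)=(a',0)$ with $\partial_{A,1}^\top a'=0$. This is a product of flux checks because $\dim H^0(\cA)=\dim H_0(\cA)=0$.
\end{itemize}

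Expansion is used only for $d_1$, and there your argument is essentially right after one fix. The condition $\partial_{A,0}a=0$ makes $a$ a cut (even overlap with every basis cycle), not ``a sum of basis cycles''. With $a=\partial_{A,1}u$, nontriviality in $\cone(f)$ forces both $x+f_1u$ and $x+f_1u+L$ out of $\im\partial_{C,2}$. Then the two cases $|\partial_{A,1}u|\ge|u\cap V_0|=|f_1u|$ and $|\partial_{A,1}u|\ge|V_0\setminus u|=|f_1(u+\mathbf 1)|$ give $|a|+|x|\ge d_1(C)$. This uses exactly the relative bound above.
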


\begin{definition}[Boundary Cheeger Constant]\label{def:Cheeger}
    Consider a matrix $M: \bF_2^n\rightarrow \bF_2^m$ such that the all ones vector $v$ is in the kernel of $M$. 
    Its boundary Cheeger constant $\beta(M)$ is the maximum value such that for all $v\in \bF_2^n$, we have
    \begin{align}
        |Mv| \ge \beta(M)\cdot \min(|v|, n - |v|).
    \end{align}
\end{definition}

For clarity, let us explain the above conditions in context of surgery on CSS codes. 
First of all, condition~\ref{surgery-cond:expansion} is key to proving the conclusions of the lemma, which states that the deformed code, $\cone(f)$, has at least the same $X$ and $Z$-distances as the base code $\cC$. 
Condition~\ref{surgery-cond:no-gauge-qubits} ensures that the ancilla system $\cA$ has no logical qubits by itself, and therefore attaching it to $\cC$ does not introduce new logical degrees of freedom, which are often called {gauge logical qubits}. 
Condition~\ref{surgery-cond:ancilla-homology} states that the logical operator $L$ becomes a product of stabilizers in the deformed code, as desired for logical measurements (equation~\eqref{eq:measured-operators}). 
Condition~\ref{surgery-cond:sparsity} bounds the stabilizer check weights in the deformed code. 
Finally, condition~\ref{surgery-cond:size-bound} bounds the space overhead, which is the total number of checks and qubits, introduced by the ancilla system. 

Given these conditions, starting in a base CSS code with distance $d$, Ref.~\cite{williamson2024low-overhead} showed that the aforementioned code-switching protocol has \textit{fault distance} $d$. 
Formally, they considered two type of physical noise: Pauli errors on physical qubits and flip errors on stabilizer measurements. 
Fault distance captures the minimum number of such errors needed in the protocol to corrupt the logical information without being detected (by detectors in the protocol). 
This measure of fault-tolerance is standard in the study of code surgery and serves as an indicator for protocol performance under circuit level noise. For the rest of this paper, we will consider the fault distance under this noise model. 

We remark that the gauging measurement technique from Ref.~\cite{williamson2024low-overhead} is more general than as stated in Lemma~\ref{lem:graph-surgery}, as it can be used to measure mixed-type operators (i.e., products of $X, Y, Z$ operators on different qubits) on non-CSS stabilizer codes. 
This is critical in application such as the extractor architectures of Ref.~\cite{he2025extractors}, which when given a stabilizer code builds an \textit{extractor} that is capable of extracting any desired logical Pauli operator from the base code.

We further note that while Lemma~\ref{lem:graph-surgery} is restricted to measuring a single Pauli operator (which could be supported on many logical qubits), prior works have also studied simultaneous measurements (or equivalently parallel measurements) of multiple commuting Pauli operators using surgery~\cite{zhang2025time-efficient,cowtan2025parallel,zheng2025high} or homomorphic measurement~\cite{huang2023homomorphic,xu2024fast}.
The gadgets we build in this paper also measure multiple operators in parallel, see Section~\ref{sec:logical-action}. 

\subsection{Fast Code Surgery}\label{sec:fast-surgery}

In code surgery protocols, the logical measurement is inferred from the measurement of a product of $Z$-stabilizers in the deformed code.
In a setting where measurement errors can occur, it is natural to use $O(d)$ rounds of measurement to ensure the stabilizer measurements are reliably extracted so that the logical measurement has a fault distance $d$. 
However, there is another approach through which stabilizer values can be reliably extracted without repeated rounds of measurement, that is, using \textit{meta-checks}. 
Specifically, some quantum codes have redundancy in their stabilizer generators such that the product of certain stabilizer outcomes must necessarily be $+1$ in the absence of measurement errors. 
These products of stabilizers are called meta-checks, which can be interpreted as classical checks on syndromes. 
Meta-checks have been commonly used to reduce the time overhead of syndrome extraction in memory. 
Ref.~\cite{cowtan2025fast} employed them to reduce the time overhead of surgery. 
We briefly review their framework here.

Formally, suppose a CSS code has $m_z$ different $Z$-stabilizer checks.
We denote meta-checks on $Z$-stabilizers as $M_Z\in \bF_2^{r_z\times m_z}$, where $M_ZH_Z=0$. 
The number of independent meta-checks is $r_z$. 
Meta-checks can be seen as an additional term in the chain complex that describes the quantum code. 
Suggestively, we will add another term to the ancilla chain complex to represent meta-checks on the $Z$-stabilizers.
\begin{equation}
\begin{tikzcd}[column sep=large]
\cA = A_{2} \arrow[r,"\partial_{A, 2}=M_Z^\top "] & A_{1} \arrow[r,"\partial_{A, 1}=H_Z^\top"] & A_0 \arrow[r,"\partial_{A, 0}=H_X"] & A_{-1},
\end{tikzcd}
\label{eq:meta-cochain}
\end{equation}
We define the meta-check distance of $\cA$ to be the 1-cosystolic distance,
\begin{align}
    d^1(\coA) &= \min_{x\in H^1(\coA)}|x|, \\
    H^1(\coA) &= \frac{\ker(\delta_{A}^2)}{\im (\delta_A^1)} = \frac{\ker(M_Z)}{\im (H_Z)}.
    \label{eq:1-cosystolic-distance}
\end{align}
As before, we can construct a chain map $f:\cA\rightarrow \cC$.
\begin{equation}
\begin{tikzcd}[column sep=large,row sep=large]
|[alias=A2]| A_2 \arrow[r, "\partial_{A,2}"] \arrow[d, "f_2"'] &
|[alias=A1]| A_1 \arrow[r, "\partial_{A,1}"] \arrow[d, "f_1"'] &
|[alias=A0]| A_0 \arrow[r, "\partial_{A,0}"] \arrow[d, "f_0"'] &
|[alias=Am1]| A_{-1} \arrow[d, "0"'] \\
|[alias=C2] |C_2 \arrow[r, "\partial_{C,2}"'] &
|[alias=C1]| C_1 \arrow[r, "\partial_{C,1}"'] &
|[alias=C0]| C_0 \arrow[r, "0"] &
0
\end{tikzcd}
\label{eq:4-term-chain-map}
\end{equation}
The mapping cone $\cone(f)$ is now a $4$-term chain complex.
\begin{equation}
\begin{tikzcd}[column sep=large,row sep=large, /tikz/execute at end picture={
    \node (huge) [fit=(A2),label=above:{Metachecks}] {};
    \node (huge) [fit=(A1) (C2),label=above:{Z checks}] {};
    \node (huge) [fit=(A0) (C1),label=above:{qubits}] {};
    \node (huge) [fit=(Am1) (C0),label=above:{X checks}] {};
  }]
|[alias=A2]| A_2 \arrow[r, "\partial_{A,2}"] \arrow[rd, "f_2"'] &
|[alias=A1]| A_1 \arrow[r, "\partial_{A,1}"] \arrow[rd, "f_1"'] &
|[alias=A0]| A_0 \arrow[r, "\partial_{A,0}"] \arrow[rd, "f_0"'] &
|[alias=Am1]| A_{-1} \\
&
|[alias=C2] |C_2 \arrow[r, "\partial_{C,2}"'] &
|[alias=C1]| C_1 \arrow[r, "\partial_{C,1}"'] &
|[alias=C0]| C_0
\end{tikzcd}
\label{eq:fast-surgery}
\end{equation}
This new deformed code~\eqref{eq:fast-surgery} has $Z$-type metachecks, $Z$-checks, qubits, and $X$-checks, assigned to the vector spaces $A_2$, $A_1\oplus C_2$, $A_0\oplus C_1$ and $A_{-1}\oplus C_0$.
The parity check matrices are
\begin{align}
    M_Z^\top  
    &= \begin{pNiceMatrix}[first-row,first-col]
        & A_2\\
        A_1 & \partial_{A,2} \\
        C_2 & f_2 
    \end{pNiceMatrix},\\
    \tilde{H}_Z^\top  
    &= \begin{pNiceMatrix}[first-row,first-col]
        & A_1 & C_2\\
        A_0 & \partial_{A,1} & 0\\
        C_1 & f_1  & \partial_{C,2}
    \end{pNiceMatrix},\\
    \tilde{H}_X 
    &= \begin{pNiceMatrix}[first-row,first-col]
        & A_0 & C_1\\
        A_{-1} & \partial_{A,0} & 0\\
        C_0 & f_0 & \partial_{C,1}
    \end{pNiceMatrix}.
\end{align}

Let us now recall the constant-time surgery formulation of Ref.~\cite{cowtan2025fast} discussed in Section~\ref{sec:intro_fast_surgery} and Figure~\ref{fig:amortized}. 
Instead of performing a single surgery operation in $O(1)$ time, we consider performing a sequence of surgery operations, specified by ancillary complexes $\cAi{i}$ and chain maps $f[i]$, in $O(d)$ time. 
For this protocol to be fault-tolerant, the surgery operations must satisfy certain conditions. 
We first make the following definition.

\begin{definition}[Compacted Code]
    \label{def:compacted-code}
    Consider a quantum CSS code $\cC$ with a sequence of ancilla complexes $\cAi{1}, \cdots, \cAi{t}$ and chain maps $f[1]: \cAi{1}\rightarrow \cC$, \dots, $f[t]: \cAi{t}\rightarrow \cC$. 
    Consider the trivial extension of these maps $f'[i]: \oplus_{i\in [t]} \cAi{i}\rightarrow \cC$, whose action is the same as $f[i]$ on $\cAi{i}$ and trivial on the rest of the spaces.  
    The \textit{compacted code} $\cc_\bullet$ is defined by taking the mapping cone of all the chain maps together.
    \begin{align}
        \cc_\bullet = \cone\left( \sum_{i\in [t]} f'[i] \right)
    \end{align}
\end{definition}
\noindent In more descriptive terms, the compacted code is obtained by attaching  all the ancilla systems to the code $\cC$ at once. 
Evidently, $\cc_\bullet$ is likely to have low rate and high stabilizer check weight. 
However, the compacted code is a proof construct that will not be built explicitly in the surgery protocol.
The main result of Ref.~\cite{cowtan2025fast} can then be stated as follows.

\begin{lemma}[Fast hypergraph surgery, Theorem 5.3 in Ref.~\cite{cowtan2025fast}]
Let $\cC$ be a CSS code of distance $d$, and consider a sequence of surgery operations defined by complexes $\cAi{1}, \cdots, \cAi{t}$ and chain maps $f[1]: \cAi{1}\rightarrow \cC$, \dots, $f[t]: \cAi{t}\rightarrow \cC$. Then the surgery protocol described in Figure~\ref{fig:amortized} has fault distance $d$ if the following conditions hold:
\begin{enumerate}
    \item The {compacted code} $\cc_\bullet$ has 1-systolic and 1-cosystolic distances at least $d$. \label{fast-cond:compacted-distance} 
    \item \label{fast-cond:metacheck} Each complex $\cAi{1}$ has 1-cosystolic distance at least $d$.
    \item 
    For all $i$, for standard basis vectors $u,v\in A[i]_1$, $f[i]_1(u)$ and $f[i]_1(v)$ are disjoint, i.e., supported on disjoint indices in $C_1$.
    \label{fast-cond:non-overlap}
\end{enumerate}
\label{lem:fast-surgery-conditions}
\end{lemma}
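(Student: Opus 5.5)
The argument runs in the standard detector/spacetime-complex framework. Build the spacetime (fault) complex of the protocol in Figure~\ref{fig:amortized}. Its time slices are the $O(d)$ base-code rounds, then one round in each of $\tilde C[1],\dots,\tilde C[t]$, then the final $O(d)$ base-code rounds. Faults are qubit Pauli errors at each time step and flips of individual check outcomes. The detectors are of three kinds. First, consecutive repeated measurements of the same base check. Second, the metachecks $\partial_{A[i],2}$ evaluated on the single round of $\tilde C[i]$. Third, the boundary detectors produced when entering and leaving each deformed code: ancilla qubits of $A[i]_0$ are initialized in $\ket{+}$ and measured out in the $X$ basis, so the products of $X$-checks in $A[i]_{-1}\oplus C_0$ are deterministic. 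It then suffices to show that any undetectable fault configuration of weight $<d$ acts trivially on the logical state and on the inferred measurement outcomes. The plan is to take such a configuration and ``clean'' it into a single time slice, where the hypotheses of the lemma apply directly.

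The $X$-type faults come first. These are $X$ errors on qubits and flips of $Z$-check outcomes. Flips of outcomes in $A[i]_1$ during the single round of $\tilde C[i]$ are seen only by the metachecks of $A[i]_2$ and by the base $Z$-checks before and after. Condition~\ref{fast-cond:metacheck}, that each $A[i]$ has $1$-cosystolic distance at least $d$, says a flip pattern of weight $<d$ that violates no metacheck lies in $\im(H_Z)$. Such a pattern equals the syndrome of an $X$ error on $A[i]_0$ and can therefore be pushed forward in time onto qubits. For the logical outcome itself, which is the product of the $A[i]_1$ outcomes over $\ker\partial_{A[i],1}$, a cocycle in $\im(H_Z)$ has even overlap with the all-ones vector on each component, so the inferred value is unchanged. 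Condition~\ref{fast-cond:non-overlap}, that $f[i]_1$ maps basis vectors to disjoint supports, will be used when pushing qubit $X$ errors on $C_1$ across the switching boundaries. Commuting an ancilla check past an $X$ error then flips at most one check per error, so weights do not grow under cleaning. After cleaning, every $X$-type fault becomes a residual error on $C_1\oplus\bigoplus_i A[i]_0$ whose syndrome vanishes in every deformed code simultaneously. That is exactly a $1$-cocycle of the compacted code $\cc_\bullet$, and condition~\ref{fast-cond:compacted-distance} ($d^1(\cc)\ge d$) forces it to be a coboundary, i.e.\ a stabilizer.

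The $Z$-type faults are handled dually: $Z$ errors on qubits, including those on ancilla qubits at initialization and readout, and flips of $X$-check outcomes. These are the frame errors. The buffering $O(d)$ base rounds ensure that a $Z$ error of weight $<d$ which survives must be projected, time slice by time slice, onto a cycle of $\cone(f[i])$ for each $i$. Gluing these slices yields a $1$-cycle of $\cc_\bullet$. The bound $d_1(\cc)\ge d$ then makes it a boundary, hence a product of $Z$-checks, and it does not alter any measured $Z$-logical because it commutes with each one.

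The main obstacle is the cleaning step that converts a spacetime-distributed fault into a single-slice (co)cycle of the compacted code. Errors straddling several switches could a priori combine into something that is undetected yet not a cycle of any single $\cone(f[i])$. This is exactly why $\cc_\bullet$ rather than the individual deformed codes appears in the hypotheses. The plan is to define an explicit chain homotopy from the spacetime complex to $\cc_\bullet$ that sums the slice-$i$ contributions into the $A[i]$ summand, and to check, using the $1$-sparsity of $f[i]_1$ from condition~\ref{fast-cond:non-overlap}, that this projection is weight-non-increasing.
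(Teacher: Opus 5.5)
The paper does not prove this lemma. It imports it from Theorem~5.3 of Ref.~\cite{cowtan2025fast} and only records the division of labour: condition~1 for space-like faults, condition~2 for time-like faults, condition~3 for propagation. Your outline reproduces that division, but there is a genuine gap. The decisive cleaning step is only announced, and the version you announce for the $X$-type half targets the wrong invariant.

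\textbf{The $X$-type half.} The checks in $A[i]_1$ are measured exactly once and have individually random outcomes. So an undetectable fault need not have vanishing syndrome in $\tilde C[i]$. The only $X$-type detectors in that round are the time-like comparisons of $C_2$ outcomes and the metachecks, plus any products over $\ker\partial_{A[i],1}$ that are stabilizers. Fix the following notation:
\begin{enumerate}
\item $X_i$ is the data $X$ error on $C_1$ accumulated since a fault-free buffer round;
\item $s^A_i$ is the set of measurement flips on $A[i]_1$;
\item $v_i\in\ker\partial_{A[i],1}$ satisfies $f[i]_1(v_i)=\ell_i$, the measured logical.
\end{enumerate}
The $C_2$ detectors force the recorded $C_2$ outcomes to be unchanged. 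The metachecks then reduce to $\partial_{A[i],2}^\top\sigma_i=0$ for the \emph{net} flip $\sigma_i=s^A_i+f[i]_1^\top X_i$, and the inferred outcome changes by $\langle v_i,\sigma_i\rangle$. So condition~2 must be applied to $\sigma_i$, not to $s^A_i$ alone. The bound then closes round by round as $d\le|\sigma_i|\le|s^A_i|+|X_i|$, and this is exactly where condition~3 ($|f[i]_1^\top x|\le|x|$) is used.

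A global single-slice cleaning is not weight-non-increasing. Moving $x$ past a deformed round $j$ costs new flips $\partial_{C,2}^\top x$ and $f[j]_1^\top x$ in every such round. Also, the residual after pushing to the end is merely a base cocycle $X_{\rm end}$. If it anticommutes with a measured $\ell_i$, it is an undetectable logical fault that is \emph{not} a 1-cocycle of $\cc_\bullet$: its $A[i]_1$ syndrome $f[i]_1^\top X_{\rm end}$ pairs nontrivially with $v_i$, so it is not in $\im\partial_{A[i],1}^\top$. Thus $d^1(\cc_\bullet)$ says nothing about it; the base distance is what bounds it.

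\textbf{The $Z$-type half.} Summing over time is the right idea, but the individual slices are not cycles of the separate $\cone(f[i])$. Also, ``it does not alter any measured $Z$-logical'' is beside the point, since $Z$ faults never change $Z$ outcomes. Here is what has to be shown. Let $w_i$ be the total ancilla $Z$ fault of round $i$, including initialization and readout flips, and let $z$ be the total data $Z$ error.
\begin{enumerate}
\item The $A[i]_{-1}$ and readout detectors give $\partial_{A[i],0}w_i=0$.
\item Telescoping the readout-corrected $C_0$ detectors between fault-free buffer rounds gives $\partial_{C,1}z+\sum_i f[i]_0 w_i=0$.
\end{enumerate}
Hence $(w_1,\dots,w_t,z)$ is a 1-cycle of $\cc_\bullet$ of weight at most the fault weight.

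The harm to rule out is a flip of a tracked $X$-observable. Its deformed frame is $(y_1,\dots,y_t,\xi)$ with $\partial_{A[i],1}^\top y_i=f[i]_1^\top\xi$ and $\partial_{C,2}^\top\xi=0$, which is a 1-cocycle of $\cc_\bullet$. A flip is a nontrivial pairing with this cocycle, which forces a nontrivial cycle of weight at least $d_1(\cc_\bullet)\ge d$.

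Until these two computations replace the proposed chain homotopy, the proposal remains an outline with its core step missing.
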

\noindent In this lemma, condition~\ref{fast-cond:compacted-distance} ensures that the protocol is protected from space-like errors, i.e., Pauli errors on data qubits.
Condition~\ref{fast-cond:metacheck} ensures that each surgery operation has sufficient meta-check distance. This distance protects the protocol from time-like errors, i.e., flip errors on stabilizer measurements. 
Finally, condition~\ref{fast-cond:non-overlap} is used to control error propagation in the protocol.
We refer readers to Ref.~\cite{cowtan2025fast} for detailed proof of this lemma and example constructions, including block reading and partial block reading.

\subsection{Hypergraph Product Codes}
In this paper, we study homological product codes and specifically hypergraph product codes. 
Hypergraph product codes are quantum CSS codes constructed from two input classical codes by taking the tensor product (Definition~\ref{def:tensor-product}) of their $1$-dimensional chain complexes representations.

\begin{definition}[Tensor Product of Chain Complexes]\label{def:tensor-product}
    Given two chain complexes $\cC, \cD$, their tensor product chain complex, which we denote $(C\otimes D)_\bullet$, has graded vector spaces defined as
    \begin{align}
        (C\otimes D)_k = \bigoplus_{p+q = k}C_p\otimes D_q.
    \end{align}
    For an element $c\otimes d\in C_p\otimes D_q$, the boundary map acts as
    \begin{align}
        \partial_{C\otimes D, k}(c\otimes d) = \partial_{C,p}(c)\otimes d + (-1)^p c\otimes \partial_{D,q}(d).
    \end{align}
    Note that $\partial_{C,p}(c)\otimes d$ lives in the space $C_{p-1}\otimes D_q$, and $c\otimes \partial_{D,q}(d)$ lives in the space $C_p\otimes D_{q-1}$, both of which are subspaces of $(C\otimes D)_{k-1}$. 
    When the field is $\mathbb{F}_2$, the sign $(-1)^p$ can be abridged. 
\end{definition}

When $p=0,1$ and $q=0,1$ for two input classical codes $\cC, \cD$, the tensor product is a length-2 chain complex.

Any CSS code requires $H_z$ and $H_x$ matrices that satisfy $H_xH_z^\top=0$. Hypergraph product codes assign $H_z$ and $H_x$ from the tensor product $(C\otimes D)_\bullet$ as follows.

\begin{definition}[Hypergraph Product Codes]
    A hypergraph product code HGP$(\cC,\cD)$ is the CSS code associated with a length-2 chain complex that is the tensor product of two length-1 chain complexes $\cC, \cD$ with boundary maps $\partial_C, \partial_D$. 
    We can write out the chain complex as follows.
    \begin{equation}
        \begin{tikzcd}[column sep=large]
            & 
            |[alias=Q1]| C_1 \otimes D_0 
              \arrow[rd, "\partial_{C}\otimes I_D"] 
            & \\
            |[alias=Z]| C_1 \otimes D_1 
              \arrow[rd, "\partial_{C}\otimes I_D"']
              \arrow[ru, "I_C\otimes \partial_{D}"]
            & 
            &
            |[alias=X]| C_0\otimes D_0
               \\
            &
            |[alias=Q2]| C_0 \otimes D_1
            \arrow[ru, "I_C\otimes \partial_{D}"']
            &
        \end{tikzcd}
    \end{equation}  
    Explicitly, the space of $Z$-checks, qubits, and $X$-checks are assigned to $C_1\otimes D_1$, $(C_1\otimes D_0)\  \oplus\ (C_0 \otimes D_1) $, and $C_0 \otimes D_0$ respectively. 
    The parity check matrices are consequently given by
    \begin{align}
        H_Z^\top  
        &= \begin{pmatrix}
            I_C\otimes \partial_{D}\\
            \partial_C \otimes I_D
        \end{pmatrix}, \\
        H_X &= (\partial_C\otimes I_D \quad I_C\otimes \partial_D).
        \label{eq:hgp-paritycheck}
    \end{align}
    These matrices satisfy the commutation relation $H_XH_Z^\top  = 0$.
    \end{definition}
    
\noindent 
The homology spaces of the tensor product complex are naturally related to the homology spaces of the component complexes.
\begin{lemma}[K\"{u}nneth Formula]\label{lem:Kunneth}
    \begin{align}
        H_k((C\otimes D)_\bullet) \cong \bigoplus_{p+q = k} H_p(\cC)\otimes H_q(\cD).
    \end{align}
\end{lemma}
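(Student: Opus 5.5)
We prove the K\"unneth formula over the field $\bF_2$, where all complexes are bounded and finite-dimensional. The approach is to split each factor complex into a homology part and an acyclic part, and then check that the tensor product respects this splitting.

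The first step is a decomposition of each factor. For every $p$ we choose a complement $B'_p$ of $\ker(\partial_{C,p})$ inside $C_p$, and a complement $\mathcal{H}_p$ of $\im(\partial_{C,p+1})$ inside $\ker(\partial_{C,p})$. This gives $C_p = \im(\partial_{C,p+1})\oplus \mathcal{H}_p \oplus B'_p$, where the restriction $\partial_{C,p}|_{B'_p}$ is injective and maps onto $\im(\partial_{C,p})$. Consequently $\cC$ splits as a direct sum of subcomplexes:
\begin{itemize}
\item the complex $\mathcal{H}_\bullet$ with zero differential, which satisfies $\mathcal{H}_p\cong H_p(\cC)$;
\item for each $p$, a two-term complex $E^{(p)}_\bullet = \bigl(B'_p \xrightarrow{\ \cong\ } \im(\partial_{C,p})\bigr)$ in degrees $p$ and $p-1$, which is acyclic.
\end{itemize}
We decompose $\cD = \mathcal{K}_\bullet\oplus\bigoplus_q F^{(q)}_\bullet$ in the same way.

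The second step uses that the tensor product of complexes in Definition~\ref{def:tensor-product} distributes over direct sums of subcomplexes. The boundary map is defined factorwise, so each summand $X_\bullet\otimes Y_\bullet$ is a subcomplex. Homology commutes with direct sums, so it suffices to compute the homology of each summand, and there are three cases.
\begin{itemize}
\item $\mathcal{H}\otimes\mathcal{K}$ has zero differential. Its homology in degree $k$ is therefore $\bigoplus_{p+q=k}\mathcal{H}_p\otimes\mathcal{K}_q\cong\bigoplus_{p+q=k}H_p(\cC)\otimes H_q(\cD)$.
\item Any summand containing an acyclic two-term factor, for example $E^{(p)}\otimes Y$, is itself acyclic.
\end{itemize}
The last point is the only real content of the argument. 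We prove it with a contracting homotopy. Let $E = (E_1\xrightarrow{\phi}E_0)$ with $\phi$ an isomorphism, and define $h = \phi^{-1}$ viewed as a map of degree $+1$ on $E$. Then $\partial h + h\partial = \id_E$. Setting $H = h\otimes \id_Y$ on $E\otimes Y$, we get $\partial H + H\partial = \id$ on $E\otimes Y$; over $\bF_2$ the cross terms cancel with no sign bookkeeping. Hence every cycle $z$ satisfies $z = \partial(Hz)$, so $z$ is a boundary and the homology of $E\otimes Y$ vanishes. The same argument applies to $X\otimes F^{(q)}$, using a homotopy on the second factor.

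Assembling the three cases gives the claimed isomorphism. Verifying the identity $\partial H + H\partial=\id$ is the main point to check carefully. Everything else is linear algebra over a field, where the freeness and flatness assumptions that the general K\"unneth theorem needs hold automatically, so no Tor terms appear.

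For the HGP application it is also worth recording the explicit map, which the later constant-time surgery analysis uses. The isomorphism sends $[c]\otimes[d]\mapsto[c\otimes d]$, and in particular canonical logical representatives of $(C\otimes D)_\bullet$ are products of codewords.
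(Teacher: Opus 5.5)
Your argument is correct. Note that the paper gives no proof of this lemma: it quotes the K\"unneth formula as standard background, implicitly over $\bF_2$, and only uses it, e.g.\ in Eq.~\eqref{eq:coned-homology}, to read off the homology of $(\cone(g[i])\otimes D)_\bullet$. So you are supplying a self-contained proof of a fact the paper takes for granted, rather than following a route in the paper.

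The usual citation would be the general algebraic K\"unneth theorem. Over a PID that gives a short exact sequence with a $\mathrm{Tor}$ correction, which one then observes vanishes over a field. Your route instead splits each complex into its homology with zero differential plus contractible two-term pieces $B'_p\xrightarrow{\cong}\im(\partial_{C,p})$. This is more elementary, and the homotopy $h\otimes\id_Y$ makes the vanishing of the mixed summands transparent. The identity $\partial H+H\partial=\id$ does check out over $\bF_2$. The decomposition also works for complexes of any length, which matters here because $\cone(g[i])$ has three nonzero terms.

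Beyond the bare isomorphism, your approach yields the explicit cross-product map $[c]\otimes[d]\mapsto[c\otimes d]$. That is what the paper really relies on when it says the deformed code measures exactly the classes $\{c_i\}\otimes H_0(\cD)$. Since your splitting depends on choices of complements, add one line showing the induced isomorphism equals this choice-free map:
\begin{itemize}
\item each class has a unique representative in $\mathcal{H}_p$;
\item $[c]\otimes[d]\mapsto[c\otimes d]$ is well defined, because $\partial a\otimes d=\partial(a\otimes d)$ when $\partial d=0$.
\end{itemize}
This also gives naturality with respect to chain maps such as $g[i]\otimes\id_D$.

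A minor point: you announce three cases but list only two bullets. The third, $X\otimes F^{(q)}$, is handled in the paragraph that follows.
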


\noindent 
We can think of $\cC$ and $\cD$ as describing two classical codes with parity check matrices $H_C=\partial_{C}$ and $H_D =\partial_{D}$. 
The vector space of bits for these codes are $C_1$ and $D_1$ with dimensions $n_C$ and $n_D$ respectively. 
The vector space of checks are $C_0$ and $D_0$, with dimensions $m_C$ and $m_D$ respectively. 
Suppose the code-space defined by $\ker(H_C)$ has parameters $[n_C,k_C,d_C]$ and the code-space defined by $\ker(H_D)$ has parameters $[n_D, k_D, d_D]$. 
Further, suppose the transpose codes $\ker(H_C^\top)$ and $\ker(H_D^\top)$ have parameters $[n_{C^\top} ,k_{C^\top} ,d_{C^\top}]$ and $[n_{D^\top} ,k_{D^\top} ,d_{D^\top}]$. 
Then the parameters of the hypergraph product code $\mathrm{HGP}(\cC, \cD)$ can be computed in terms of these classical code parameters, as follows.
\begin{align}
    n &= n_Cm_D + m_Cn_D, \\
    k &= k_Ck_{D^\top} + k_{C^\top} k_D, \\
    d &= \min(d_C,d_D,d_{C^\top} ,d_{D^\top} ).
\end{align}

For the purposes of having a convenient basis to discuss the logical action of our surgery gadgets on the hypergraph product code, we describe a canonical basis of logical operators~\cite{quintavalle2023partitioning,xu2024fast}.
For each linear map that corresponds to a classical code, for example, $\partial_C$, we perform row reduction such that these parity check matrices have the form 
\begin{align}
    \partial_C= (b_1^{C},b_2^C,...b_{k_C}^C, I_{m_C})
\end{align}
where $b_1^{C},...b_k^C$ are column vectors in $\mathbb{F}_2^{m_C}$. 
We can safely rewrite these maps because these elementary operations on the parity check matrices, which include permuting rows and columns and adding rows to other rows, preserve the code space.
Correspondingly, these elementary operations rearrange the generator matrix as
\begin{align}
G_C=\left(I_{k_C}\ \middle|\ \begin{matrix}
\beta_1^{C}\\
\beta_2^{C}\\
\vdots\\
\beta_{k_C}^{C}
\end{matrix}\right)
\end{align}
where $\beta_i^{C}:=(b_{i}^C)^\top$.
For brevity, we have only shown the row reduction for $\partial_C$, but we repeat this reduction for $\partial_D$, $\partial_C^\top$ and $\partial_D^\top$.

To write down the basis, let $\ell^C_i\in \mathbb{F}_2^{n_C}$ be the $i$th row of $G_C$, where $i\in[k_C]$.
Let $e_j^C$ be the $j$th unit vector of $\mathbb{F}_2^{n_C}$. 
We analogously define $\ell^D_i$, $e_j^D\in \mathbb{F}_2^{n_D}$, $\ell^{C^\top}_i$, $e_j^{C^\top}\in \mathbb{F}_2^{m_C}$, and $\ell^{D^\top}_i, e_j^{D^\top}\in \mathbb{F}_2^{m_D}$.
Then, a canonical basis of $Z$ logical operators is
\begin{align}
    \bar{Z}^{L}_{ij} &= \begin{pmatrix}
        \ell_i^C\otimes e_j^{D^\top}\\
        0
    \end{pmatrix},\quad \forall\  i\in [k_C],\ j\in [k_{D^\top}],\label{eq:canonical-left-z-logicals}\\
    \bar{Z}^{R}_{pq} &= \begin{pmatrix}
        0\\
        e_p^{C^\top}\otimes \ell_q^D
    \end{pmatrix},\quad \forall\ p\in[k_{C^\top}],\ q\in [k_D].\label{eq:canonical-right-z-logicals}
\end{align}
Here $\bar{Z}^{L}_{ij},\bar{Z}^{R}_{pq}\in \mathbb{F}_2^{n_Cm_D+m_Cn_D}$ are vectors whose non-zero entries correspond with $Z$ support on physical qubits in $(C_1\otimes D_0)$ and $(C_0\otimes D_1)$, respectively.
A canonical basis of $X$ logical operators is
\begin{align}
    \bar{X}^{L}_{ij} &= \begin{pmatrix}
        e_i^C\otimes \ell_j^{D^\top}\\
        0
    \end{pmatrix},\quad \forall i\in [k_C],\ j\in [k_{D^\top}]\\
    \bar{X}^{R}_{pq} &= \begin{pmatrix}
        0\\
        \ell_{p}^{C^\top}\otimes e_q^D
    \end{pmatrix},\quad \forall\ p\in[k_{C^\top}],\ q\in [k_D]
\end{align}
One can see that these representatives commute with all checks. 
As an example, consider operator $\bar{Z}^{\mathrm{L}}_{ij}$. 
Since this operator trivially commutes with all $Z$ checks, we only consider the $X$ checks in $C_0\otimes D_0$ as in Eq.~\eqref{eq:hgp-paritycheck}.
Since $\ell_{i}^C$ is a row in the generator matrix of $\partial_C$, $\partial_C \ell_i^C=0$ for all $i\in [k_C]$. 
Thus $(\partial_C\otimes I_D)(\ell_i^C\otimes e_j^{D^\top}) = 0$ and we see that $H_X\bar{Z}^{\mathrm{L}}_{ij} = 0.$
A similar argument follows for all other canonical logical operators.
Further, we can define pairs of $X$ and $Z$ canonical logicals that anticommute with each other while commuting with every other canonical logical.
Each of these pairs corresponds to a coordinate $(i,j)$ or $(p,q)$, namely
\begin{align}
    \{\bar{X}^{\mathrm{L}}_{ij},\bar{Z}^\mathrm{L}_{ij}\} = \{\bar{X}^{\mathrm{R}}_{pq},\bar{Z}^\mathrm{R}_{pq}\} = 0, \forall i,j,p,q.
\end{align}
If we arrange physical qubits in $C_1\otimes D_0$ and $C_0\otimes D_1$ in $n_C\times m_D$ and $m_C\times n_D$ grids, then these canonical logical representatives take on a nice geometric interpretation.
$(\bar{X}^{\mathrm{L}}_{ij},\bar{Z}^\mathrm{L}_{ij})$ are logical operators with support only on the $i$th row and $j$th column, respectively, of the $n_C\times m_D$ grid.
Likewise, $(\bar{X}^{\mathrm{R}}_{pq},\bar{Z}^\mathrm{R}_{pq})$ only have support on the $q$th column and $p$th row, respectively, of the $m_C\times n_D$ grid.
Note that by construction, $l_i^C$ and $l_j^{D^\top}$ have only one non-zero entry in their first $k_C$ and $k_{D^\top}$ entries, respectively.
Thus, when we restrict the $n_C\times m_D$ grid to the first $k_C$ rows and $k_{D^\top}$ columns, each physical qubit in this $k_C\times k_{D^\top}$ grid is in the support of a single pair of canonical logical operators.
By similar argument, the same is true for the restriction to a $k_{C^\top}\times k_D$ grid of the $m_C\times n_D$ grid.
From here on out, we refer to a \textit{row} of logical qubits as all of the logical qubits whose canonical logical operators have support on a particular row of qubits in the $k_C\times k_{D^\top}$ or $k_{C^\top}\times k_D$ grid.
The same applies for a \textit{column} of logical qubits.

To bound the distances of more general product complexes, we invoke the following result from Ref.~\cite{zeng2020minimal}.
\begin{lemma}[Equation~(51), Theorem~17 of Ref.~\cite{zeng2020minimal}]\label{lem:distance_of_product}
For two bounded chain complexes $\cC$ and $\cD$, where $\cD$ has exactly two non-trivial terms, the homological distances of the tensor product complex $(C\otimes D)_\bullet$ can be computed from the homological distances of the component complexes.
    \begin{align}\label{eq:product-dist}
        d_k(C\otimes D) = \min\left( d_{k-1}(C)d_1(D), d_k(C)d_0(D) \right).
    \end{align}
\end{lemma}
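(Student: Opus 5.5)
This is a result cited from Ref.~\cite{zeng2020minimal}, so the plan is to reconstruct its proof in the setting of this paper. Write $\cD = D_1 \xrightarrow{\partial_D} D_0$ (after shifting indices). Then $(C\otimes D)_k = (C_{k-1}\otimes D_1)\oplus (C_k\otimes D_0)$, and the statement becomes a min over two ``sectors'' matching the K\"unneth decomposition of Lemma~\ref{lem:Kunneth}. The proof splits into an upper bound and a lower bound on $d_k(C\otimes D)$.

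\textbf{Upper bound.} I would use explicit product representatives. Take a minimal-weight $x$ representing a nontrivial class in $H_{k-1}(C)$, and a minimal-weight $y$ representing a nontrivial class in $H_1(D)=\ker\partial_D$. Then $x\otimes y$ is a cycle of weight $|x||y|$, and it is nontrivial by K\"unneth, since its class is $[x]\otimes[y]\neq 0$. Similarly, take $x'$ minimal nontrivial in $H_k(C)$ and $y'$ minimal nontrivial in $H_0(D)=D_0/\im\partial_D$; the cycle $x'\otimes y'$ is nontrivial. If either homology group is zero, the corresponding term is $\infty$, which is consistent with the convention adopted earlier.

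\textbf{Lower bound.} This is the main obstacle. Let $z=(a,b)$ be a cycle with $a\in C_{k-1}\otimes D_1$ and $b\in C_k\otimes D_0$, and suppose $z$ is not a boundary. I would view $a$ and $b$ as matrices whose columns are indexed by the bits of $D_1$ and $D_0$.

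Case 1: the projection of $[z]$ onto $H_{k-1}(C)\otimes H_1(D)$ is nonzero. The cycle condition gives $(\partial_C\otimes I)a=0$ in $C_{k-2}\otimes D_1$, together with $(I\otimes\partial_D)a=(\partial_C\otimes I)b$. I would apply a linear functional $\phi$ on $C_{k-1}$ from a cohomology class pairing nontrivially with the relevant homology. This yields $(\phi\otimes I)a\in\ker\partial_D$ with nonzero class. Hence at least $d_1(D)$ columns $j$ have $a_{\cdot j}\neq 0$. For each such column, $a_{\cdot j}$ is a $C$-cycle, and I would aim to show it is homologically nontrivial, so that it has weight at least $d_{k-1}(C)$. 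The difficulty is that for a general column only the $\phi$-pairing is controlled. The fix is to choose $\phi$ so that $(\phi\otimes I)a$ has nonzero class, and then to argue over columns in the support of a minimal codeword restricted from it. Alternatively, I would use the standard argument of Ref.~\cite{zeng2020minimal}: apply a change of basis splitting $\cD$ as a direct sum of homology and a contractible part, since over a field every two-term complex is homotopy equivalent to its homology. I would then lower bound by counting columns in which $a$ is not a boundary modulo the relevant subspace.

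Case 2: the projection onto $H_{k-1}(C)\otimes H_1(D)$ vanishes. Then, after subtracting a boundary (with the effect on weight controlled by a minimality argument applied to the specific $z$), $[z]$ lies in $H_k(C)\otimes H_0(D)$. Pairing with a functional $\psi$ on $D_0$ that vanishes on $\im\partial_D$, the vector $(I\otimes\psi)b$ is a $C_k$-cycle that is nontrivial. I would then repeat the column-counting argument with rows and columns exchanged: $b$ must have at least $d_0(D)$ columns, viewed modulo $\im\partial_D$, each carrying at least $d_k(C)$ weight.

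I expect the delicate point to be making this column counting rigorous when $a$ and $b$ interact through $\partial_D$. There I would follow Ref.~\cite{zeng2020minimal} closely, using the restriction of $\cD$'s columns to an information set of the code.
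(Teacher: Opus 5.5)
The paper does not prove this lemma; it imports it from Ref.~\cite{zeng2020minimal}, so I judge your sketch on its own. Your upper bound is correct. Your Case~1 also closes, and the ``difficulty'' you flag there is not real. Every column $a_j$ of $a$ is a cycle, because $(\partial_C\otimes I)a=0$. Take a cocycle $\phi$ ($\partial_{C,k}^\top\phi=0$) with $y=(\phi^\top\otimes I)a\neq 0$. Then $\partial_D y=(\phi^\top\partial_{C,k}\otimes I)b=0$, so $|y|\ge d_1(D)$. Each $j\in\mathrm{supp}(y)$ has $\phi^\top a_j=1$, so $a_j\notin\im\partial_{C,k}$ and $|a_j|\ge d_{k-1}(C)$. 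No splitting of $D_\bullet$ or information-set argument is needed, and a basis change on $D$ would not preserve Hamming weight anyway.

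The genuine gap is in Case~2, where both proposed steps fail. First, you cannot subtract a boundary from $z$. The bound must hold for $|z|$ itself, and minimality of $z$ only gives $|z+\partial w|\ge|z|$, which is the wrong direction. Second, the column count for $b$ is false, because the columns $b_i$ of $b$ are not cycles: the cycle condition only gives $\partial_C b_i=\sum_j(\partial_D)_{ij}a_j$. Concretely, take the toric code ($k=1$, with $C$ and $D$ the cyclic repetition code of length $n$). Add to the representative $(0,\mathbb{1}\otimes e_j)$ the boundary of the faces in rows $S$ of column $j$, for any $\emptyset\neq S\subsetneq[n]$. The new $b$ is $\mathbb{1}_{S^c}\otimes e_j+\mathbb{1}_S\otimes e_{j+1}$, where $\mathbb{1}_S$ is the indicator of $S$. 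The new $a$ has even-weight, hence trivial, columns, so you are in Case~2. Yet no column of $b$ has weight $\ge d_1(C)=n$, even though $d_0(D)=1$.

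The missing observation is that for a two-term complex $D_1\to D_0$ one has $d_0(D)\in\{1,\infty\}$. If $\partial_D$ is not onto, some standard basis vector of $D_0$ lies outside $\im\partial_D$. In Case~2 necessarily $H_0(D)\neq 0$, so you only need $|z|\ge d_k(C)$, and this follows from the original $b$ without modifying $z$. You already wrote down the right object. For $\psi$ with $\partial_D^\top\psi=0$ we have $\partial_{C,k}(I\otimes\psi^\top)b=(I\otimes\psi^\top\partial_D)a=0$. So $\sum_i\psi_i b_i$ is a $k$-cycle of weight at most $|b|$. Because $[a]=0$ in $H_{k-1}(C)\otimes D_1$ but $[z]\neq 0$, the class of $z$ has a nonzero $H_k(C)\otimes H_0(D)$ component. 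Hence there are a cocycle $\phi\in C^k$ and such a $\psi$ with $(\phi\otimes\psi)^\top b=1$; this is the pairing of $z$ with the cocycle $(0,\phi\otimes\psi)$, which is insensitive to boundaries. Therefore $\sum_i\psi_i b_i$ is not a boundary, $|z|\ge|b|\ge d_k(C)=d_k(C)d_0(D)$, and the lower bound is complete.
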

By taking the cochain complex $\coC, \coD$ and their tensor product $(C\otimes D)^\bullet$, we can apply the above lemma and obtain as a corollary
\begin{align}\label{eq:product-codist}
    d^k(C\otimes D) = \min\left( d^{k-1}(C)d^1(D), d^k(C)d^0(D) \right).
\end{align}

\section{Constant spacetime surgery on hypergraph product codes}
\label{sec:hgp-code}

In this section, we construct constant time overhead surgery operations that measure a \textit{row} or \textit{column} of $Z$ logical operators of the hypergraph product code in parallel. The case for $X$ logical operators is similar.
On a high level, for a HGP code $(C\otimes D)_\bullet$, we first apply the surgery method from Lemma~\ref{lem:graph-surgery} to the complex $\cC$ and obtain a sequence of ancillary complexes $\cGi{i}$. 
We then take the tensor products $(G[i]\otimes D)_\bullet$ and show that these are valid ancillary surgery complexes for $(C\otimes D)_\bullet$. 
If $\cGi{i}$ measures one homology representative $c_i$ from $\cC$, then $(G[i]\otimes D)_\bullet$ simultaneously measures a collection of homology representatives from $(C\otimes D)_\bullet$ of the form $c_i\otimes h$ for $h\in H_0(\cD)$. 
This collection can be thought of as the row of $Z$ operators corresponding to $c_i\in H_1(\cC)$.

More concretely, if $c_i$ is one of the vectors $l_i^{C}$ in Eq.~\eqref{eq:canonical-left-z-logicals},
then our surgery operation measures all logical operators $\bar{Z}^{\mathrm{L}}_{ij},\ \forall j\in [k_{D^{\top}}]$ in parallel.
In general, $c_i$ is a linear combination of vectors $l_i^{C}$, and the measured operators 
$\begin{pmatrix}
c_i\otimes e_j^{D^\top}\\
0
\end{pmatrix}$
are products of $\bar{Z}^{\mathrm{L}}_{ij}$ for all $j\in [m_D]$. 
In other words, the gadget $(G[i]\otimes D)_\bullet$ measures a row of $Z$ logical operators in parallel, each of which is a product of canonical $Z$ basis operators.
Similarly, we can first construct surgery systems for $\cD$ and take tensor products with $\cC$, which will then measure a column of $Z$ operators in parallel. 

We note that the ancilla complexes in lemma~\ref{lem:graph-surgery} can be viewed as weight-reducing~\cite{hastings2017weight} a logical operator which has been added to the stabilizer group.
From this perspective, our ancilla complexes for fast surgery can be viewed as weight-reduction of a \textit{logical codeword} of a classical code in $\cC$, followed by taking the hypergraph product with an un-tampered classical code $\cD$.
We would like to point out that~\cite{sabo2024weight-reduced} also uses weight-reduction on classical codes and to reduce the weight of \textit{checks} in two classical codes $\tilde\cC$ and in $\tilde\cD$, before taking the hypergraph product of these codes, $(\tilde C\otimes \tilde D)_\bullet $.
However, the motivation for the construction in~\cite{sabo2024weight-reduced} is to arrive at quantum codes with lower stabilizer weight and desirable $[[n,k,d]]$ parameters for small-to-medium size codes.

\subsection{Construction}
\label{sec:construction}

Consider a two-dimensional hypergraph product code $(C\otimes D)_\bullet$ with distance $d$. 
Let $c_1, \cdots, c_t\in C_1$ be an arbitrary collection of linearly independent vectors in $\ker(\partial_C)$. 
Let $\cGi{i}$ be the three-term chain complex we obtain by applying Lemma~\ref{lem:graph-surgery} to $\cC$ and $c_i$, with chain maps $g[i]:\cGi{i}\rightarrow \cC$ as in Equation~\eqref{eq:chain-map}. 
The sequence of ancilla complexes we use for surgery on the quantum code $(C\otimes D)_\bullet$ is simply $\cAi{1} = (G[1]\otimes D)_\bullet, \cdots, \cAi{t} = (G[t]\otimes D)_\bullet$.
The chain maps are 
\begin{align}
    f[i] = g[i]\otimes \id_D: (G[i]\otimes D)_\bullet\rightarrow (C\otimes D)_\bullet.
\end{align}
These are valid chain maps because $g[i]$ are valid chain maps, and the identity map $\id_D$ enables commutation of chain diagrams.

\subsection{Logical Action}
\label{sec:logical-action}

To understand the logical action of these surgery operations, we analyze the homology spaces of the coned codes. 
We first state the following helpful lemma, which we prove in Appendix~\ref{apdx:proofs}.
\begin{restatable}{lemma}{coneproduct}\label{lem:cone_product} 
Let $\cA, \cC, \cD$ be chain complexes, and let $f:\cA\rightarrow \cC$ be a chain map, which induces a second chain map $f\otimes \id_{D}: (A\otimes D)_\bullet\rightarrow (C\otimes D)_\bullet$. 
Then 
\begin{align}\label{eq:cone_product}
    \cone(f\otimes \id_{D}) \cong (\cone(f)\otimes D)_\bullet.
\end{align}
\end{restatable}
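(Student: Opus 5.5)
The plan is to write down both complexes degree by degree, exhibit the obvious identification of graded spaces, and then check that the boundary maps agree. Throughout we work over $\bF_2$, so all signs can be dropped. We use the convention implicit in equation~\eqref{eq:cone-code-surgery}: $\cone(f)_k = A_{k-1}\oplus C_k$, with boundary
\begin{align}
\partial_{\cone(f),k}(a,c) = \bigl(\partial_{A,k-1}a,\; f_{k-1}(a) + \partial_{C,k}c\bigr).
\end{align}

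\textbf{Step 1: identify the graded spaces.} On the right-hand side of~\eqref{eq:cone_product},
\begin{align}
(\cone(f)\otimes D)_k = \bigoplus_{p+q=k}\bigl(A_{p-1}\oplus C_p\bigr)\otimes D_q = \Bigl(\bigoplus_{p+q=k} A_{p-1}\otimes D_q\Bigr)\oplus\Bigl(\bigoplus_{p+q=k} C_p\otimes D_q\Bigr),
\end{align}
using distributivity of $\otimes$ over $\oplus$. On the left-hand side, $\cone(f\otimes\id_D)_k = (A\otimes D)_{k-1}\oplus (C\otimes D)_k$. Reindexing with $p' = p-1$ gives $(A\otimes D)_{k-1} = \bigoplus_{p'+q=k-1} A_{p'}\otimes D_q$, so the two sides agree summand by summand. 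The isomorphism $\Phi$ is the identity on each summand $A_{p'}\otimes D_q$ and $C_p\otimes D_q$. It respects the standard bases, so it is even an isomorphism of based complexes, which is what we need later for the codes to coincide.

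\textbf{Step 2: compare boundaries on elementary tensors.} It suffices to check two types of generators.

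For $c\otimes d$ with $c\in C_p$, $d\in D_q$, both complexes send it to $\partial_C c\otimes d + c\otimes\partial_D d$, which lies in the $(C\otimes D)$ part.

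For $a\otimes d$ with $a\in A_{p}$, $d\in D_q$, we compute each side separately.
\begin{itemize}
\item In $\cone(f\otimes\id_D)$, the boundary is
\begin{align}
\bigl(\partial_{A\otimes D}(a\otimes d),\; (f\otimes\id_D)(a\otimes d)\bigr) = \bigl(\partial_A a\otimes d + a\otimes \partial_D d,\; f(a)\otimes d\bigr).
\end{align}
\item In $\cone(f)\otimes D$, we regard $a$ as sitting in $\cone(f)_{p+1}$. The boundary is $\partial_{\cone(f)}(a)\otimes d + a\otimes\partial_D d$, and $\partial_{\cone(f)}(a) = \partial_A a + f(a)$. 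Expanding gives $\partial_A a\otimes d + f(a)\otimes d + a\otimes\partial_D d$.
\end{itemize}
Under $\Phi$ these two expressions coincide. By linearity, $\Phi$ is a chain isomorphism.

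\textbf{Main obstacle.} The only real subtlety is bookkeeping: the degree shift in the cone, and the signs $(-1)^p$ that would appear over general rings. Over $\bF_2$ the signs vanish. In general one would need the twisted identification $a\otimes d\mapsto (-1)^{q}a\otimes d$ (or a similar sign), but that is unnecessary here.

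The care needed is in making sure the shifted index convention matches equation~\eqref{eq:cone-code-surgery}. In particular, terms such as $A_{-1}\otimes D_q$ must land in the right degree, and the components where $f$ is zero (e.g.\ $f_{-1}=0$) must be treated consistently on both sides. They are, since $f\otimes\id_D$ vanishes exactly on the same summands.
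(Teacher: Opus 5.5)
Your proposal is correct and is essentially the paper's proof. Like the paper, you identify the graded pieces $\bigoplus_{p+q=k}(A_{p-1}\otimes D_q)\oplus(C_p\otimes D_q)$ on both sides, take the identity on summands as the isomorphism, and check that the boundaries agree on elementary tensors $a\otimes d\in A_p\otimes D_q$ and $c\otimes d\in C_p\otimes D_q$.

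One minor aside: the paper carries the signs through with $\partial_{\cone(f)}(a)=f(a)-\partial_A(a)$ and finds the untwisted identity already works over any ring. The Koszul sign $(-1)^{p+1}$, coming from $a$ sitting in $\cone(f)_{p+1}$, matches the $-\partial_{A\otimes D}$ term, so the sign-twisted identification in your closing remark is unnecessary. This is moot over $\bF_2$.
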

\noindent Fix $i\in [t]$, we now consider the deformed code 
\begin{align}
    \cone(g[i]\otimes \id_D) = (\cone(g[i])\otimes D)_\bullet,
\end{align}
whose homology spaces can be understood using the K\"{u}nneth formula (Lemma~\ref{lem:Kunneth}).
Specifically, we have 
\begin{equation}\label{eq:coned-homology}
    \begin{split}
    H_1((\cone(g[i])\otimes D)_\bullet) &= H_1(\cone(g[i]))\otimes H_0(\cD) \\
    &\oplus H_0(\cone(g[i]))\otimes H_1(\cD).
    \end{split}
\end{equation}
We compare this with
\begin{align}
    H_1((C\otimes D)_\bullet) &= H_1(\cC)\otimes H_0(\cD) \oplus H_0(\cC)\otimes H_1(\cD).
\end{align}

\begin{proposition}
    $H_1(\cone(g[i]))\cong H_1(\cC)/\{c_i\}$, and $H_0(\cone(g[i]))\cong H_0(\cC)$.
    \label{prop:coned-homology}
\end{proposition}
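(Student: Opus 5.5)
We prove both isomorphisms with the long exact sequence in homology attached to a mapping cone, and use the conditions of Lemma~\ref{lem:graph-surgery} to compute the connecting maps. Here $\cC$ is the two-term classical complex $C_1\to C_0$. The ancilla $\cGi{i}$ is $G_1\to G_0\to G_{-1}$, with chain map $g=g[i]$ having components $g_1,g_0$. The cone has $\cone(g)_k = G_{k-1}\oplus C_k$, so
\begin{equation*}
\cone(g)_2=G_1,\quad \cone(g)_1=G_0\oplus C_1,\quad \cone(g)_0=G_{-1}\oplus C_0 .
\end{equation*}

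The first step is to write down the short exact sequence of complexes $0\to\cC\to\cone(g)\to G[-1]_\bullet\to 0$, where the shifted complex satisfies $G[-1]_k=G_{k-1}$. Its long exact sequence in homology reads
\begin{equation*}
\cdots\to H_{k}(\cGi{i})\xrightarrow{(g)_*} H_k(\cC)\to H_k(\cone(g))\to H_{k-1}(\cGi{i})\xrightarrow{(g)_*} H_{k-1}(\cC)\to\cdots,
\end{equation*}
and the standard fact we use is that the connecting map is $g_*$.

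The second step computes $H_1$. The relevant portion of the sequence is
\begin{equation*}
H_1(G)\xrightarrow{g_*} H_1(C)\to H_1(\cone(g))\to H_0(G).
\end{equation*}
By condition~\ref{surgery-cond:no-gauge-qubits} of Lemma~\ref{lem:graph-surgery}, $H_0(G)=0$. By condition~\ref{surgery-cond:ancilla-homology}, $H_1(G)=\ker\partial_{G,1}$ is spanned by the all-ones vector $v$, and $g_1(v)=c_i$. Exactness therefore gives $H_1(\cone(g))\cong H_1(\cC)/\langle c_i\rangle$. Since $\cC$ has no $C_2$, we have $H_1(\cC)=\ker\partial_C$, so this quotient is exactly $H_1(\cC)/\{c_i\}$ as in the statement.

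The third step computes $H_0$. The relevant portion is
\begin{equation*}
H_0(G)\xrightarrow{g_*} H_0(C)\to H_0(\cone(g))\to H_{-1}(G)\to H_{-1}(C).
\end{equation*}
Here $H_0(G)=0$, and $H_{-1}(G)=0$ by condition~\ref{surgery-cond:no-gauge-qubits}. Hence $H_0(\cC)\to H_0(\cone(g))$ is an isomorphism.

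As a sanity check, or as an alternative if one prefers to avoid the long exact sequence, everything can be verified by hand from the block matrices in Eqs.~\eqref{eq:graph-surg-hz} and~\eqref{eq:graph-surg-hx}. For $H_0$, take $x=(a,c)\in G_{-1}\oplus C_0$. Since $H_{-1}(G)=0$, the map $\partial_{G,0}$ is surjective, so we can pick $y\in G_0$ with $\partial_{G,0}y=a$. Subtracting the boundary of $(y,0)$ leaves $(0,c+g_0y)$, so every class has a representative in $C_0$. For injectivity, suppose $(0,c)$ is a boundary of some $(y,z)$. Then $\partial_{G,0} y=0$, and $H_0(G)=0$ gives $y=\partial_{G,1}w$. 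It follows that $c=g_0\partial_{G,1}w+\partial_C z=\partial_C(g_1w+z)$, so $c$ is already a boundary in $\cC$.

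The main obstacle is the injectivity and surjectivity bookkeeping for $H_1$. Its two ingredients are that every cycle $(y,z)$ has $\partial_{G,0}y=0$, hence $y$ is a boundary, so the cycle can be reduced to one supported on $C_1$; and that boundaries restricted to $C_1$ are exactly $g_1(\ker\partial_{G,1})=\langle c_i\rangle$. In the long-exact-sequence proof these steps are automatic, so the only care needed is the correct identification of the connecting map with $g_*$.
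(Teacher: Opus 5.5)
Your proof is correct, but it follows a different route from the paper. The paper never invokes the long exact sequence of the mapping cone; it does the diagram chase by hand.

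For $H_1$, the paper uses $H_0(\cGi{i})=0$ to push a cycle $(e,b)\in G[i]_0\oplus C_1$ to a representative $(0,b')$ with $b'\in\ker\partial_C$. It then uses $\dim\ker\partial_{G[i],1}=1$, with $g[i]_1$ sending the all-ones vector to $c_i$, to conclude that only the class of $c_i$ dies. For $H_0$, it uses surjectivity of $\partial_{G[i],0}$ and the chain-map identity $g[i]_0\partial_{G[i],1}=\partial_C g[i]_1$, which is exactly your hand-check paragraph.

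Your exact-sequence argument is shorter, and exactness at $H_1(\cC)$ handles injectivity automatically. The paper only asserts that ``the rest of the classes remain independent.'' The fact behind that assertion is that a boundary with vanishing $G[i]_0$-component must come from $\ker\partial_{G[i],1}$, and this is precisely what you spell out in your last paragraph.

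Your approach also gives Proposition~\ref{prop:1D-cone-compacted-homology} for $\cone(\bar g)$ in one stroke, rather than by ``iterating'' the argument. This works because $H_\ast(\oplus_i\cGi{i})=\oplus_i H_\ast(\cGi{i})$ and $\bar g_\ast$ sends the $i$-th all-ones vector to $c_i$.

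The paper's explicit chase, for its part, produces concrete cleaned representatives supported on $C_1$. That is the form in which logical operators are handled in the distance proof of Proposition~\ref{prop:1D-cone-compacted-distance}.

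One cosmetic point: your shift notation $G[-1]_\bullet$ clashes with the paper's indexing $\cGi{i}$ of the ancilla complexes, so a different symbol for the shift would avoid confusion.
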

\begin{proof}
The complex $\cone(g[i])$ has the following form.
\begin{equation}
\begin{tikzcd}[column sep=large,row sep=large]
|[alias=A1]| G[i]_1 
\arrow[r, "\partial_{G\lbrack i\rbrack,1}"] 
\arrow[rd, "g\lbrack i\rbrack_1"'] &
|[alias=A0]| G[i]_0 
\arrow[r, "\partial_{G\lbrack i\rbrack,0}"] 
\arrow[rd, "g\lbrack i\rbrack_0"'] &
|[alias=Am1]| G[i]_{-1} \\
|[alias=C2] |0 \arrow[r, "0"'] &
|[alias=C1]| C_1 \arrow[r, "\partial_{C}"'] &
|[alias=C0]| C_0
\end{tikzcd}
\label{eq:cone-Gi}
\end{equation}
For the context of this proof, we let $\partial_{\cone, 1}, \partial_{\cone, 0}$ denote the boundary maps of $\cone(g[i])$. 
To show the first claim, we utilize the fact that $H_0(\cGi{i}) = 0$, as stated in condition~\ref{surgery-cond:no-gauge-qubits} of Lemma~\ref{lem:graph-surgery}. 

Take a chain $(e, b)\in \ker(\partial_{\cone, 0})$, where $e\in G[i]_0$ and $b\in C_1$. The following equation must hold.
\begin{align}
    \begin{pmatrix}
        \partial_{G\lbrack i\rbrack,1} & 0\\
        g\lbrack i\rbrack_0 & \partial_C
    \end{pmatrix}
    \begin{pmatrix}
        e\\
        b
    \end{pmatrix}
    =
    \begin{pmatrix}
        \partial_{G\lbrack i\rbrack,1} (e)\\
        g\lbrack i\rbrack_0  (e) + \partial_C (b)
    \end{pmatrix}
    =\vec{0}
\end{align}
Consequently, $e$ must be in $\ker(\partial_{G\lbrack i\rbrack,0})$, which is also $ \im(\partial_{G\lbrack i\rbrack,1})$ because the 0-homology is trivial.
This means there is an element $v\in G[i]_1$ such that 
\begin{align}
    \partial_{\cone, 1}((v, 0)) + (e, b) = (0, b') \in \ker(\partial_{\cone, 0}),
\end{align}
for some $b'\in C_1$. 
$(0, b')$ must be in $\ker(\partial_{\cone, 0})$ because $(e, b)\in \ker(\partial_{\cone, 0})$ by assumption, and $\partial_{\cone,0}\partial_{\cone, 1}((v, 0))=0$ because $\cone(g[i])$ is a valid chain complex.
Note that $(e, b)$ and $(e, b')$ are in the same homology class in $H_1(\cone(g[i]))$.
$(0,b')\in\ker(\partial_{\cone,0})$ implies $b'$ must be in $\ker(\partial_C)$, and thus every class in $H_1(\cone(g[i]))$ must correspond to an independent class in $H_1(\cC)$. 

It remains for us to consider how the classes in $H_1(\cC)$ change due to the added space $G[i]_1$. As stated in Lemma~\ref{lem:graph-surgery}, $\dim(\ker(\partial_{G[i], 1})) = 1$, and the element $v\in \ker(\partial_{G[i], 1})$ satisfy $g[i]_1(v) = c_i$. 
Therefore, the class $\{c_i\}\in H_1(\cC)$ is now in $\im(\partial_{\cone, 1})$ and removed from $H_1(\cone(g[i]))$, while the rest of the classes remain independent. 
This proves our first claim. 

To show the second claim, $H_0(\cone(g[i])) \cong H_0(\cC)$, we use the fact that $g[i]$ is a chain map and $H_{-1}(\cGi{i}) = 0$.
Take $(r, h)\notin \im(\partial_{\cone, 0})$, where $r\in G[i]_{-1}$ and $h\in C_0$. Since $H_{-1}(\cGi{i}) = 0$, there is $e\in G[i]_0$ such that
\begin{align}
    \partial_{\cone, 0}((e,0)) + (r, h) = (0, h') \in H_0(\cone(g[i])).
\end{align}
For $(0, h')$ to be in $H_0(\cone(g[i]))$, $h'$ must be in $H_0(\cC)$. 
Therefore every class in $H_0(\cone(g[i]))$ must correspond to a class in $H_0(\cC)$. 
Under this mapping, chains in independent classes in $H_0(\cone(g[i]))$ are mapped to chains in independent classes in $H_0(\cC)$.

Conversely, take $h\in H_0(\cC)$, then $(0, h)$ must be in $H_0(\cone(g[i]))$. Assume for the sake of contradiction that $(0, h)\in \im(\partial_{\cone, 0})$, then there must be $(e, b)$ for $ e\in G[i]_0, b\in C_1,$, such that
\begin{align}
    \partial_{\cone, 0}((e, b)) = (0, h).
\end{align}
In particular, this means $\partial_{G[i], 0}(e) = 0$, which means $e = \partial_{G[i], 1}(v)$ for some $v\in G[i]_1$. Since $g[i]$ is a chain map, the diagram in equation~\eqref{eq:cone-Gi} commutes and we have
\begin{align}
    g[i]_0 \circ \partial_{G[i], 1} = \partial_C \circ g[i]_1.
\end{align}
Take $b' = g[i]_1(v)$, we see that 
\begin{align}
    \partial_{\cone, 0}((0, b+b')) = (0, \partial_C(b+b')) = (0, h),
\end{align}
which means $h\in \im(\partial_C)$, a contradiction.
Therefore every class in $H_0(\cC)$ gives an independent class in $H_0(\cone(g[i]))$.
We conclude that $H_0(\cone(g[i])) \cong H_0(\cC)$.
\end{proof}

\begin{remark}
    The cone $\cone(g[i])$ is motivated by applying gauging or homological measurements (Lemma~\ref{lem:graph-surgery}) to a classical code $\cC=0\xrightarrow{0}C_1\xrightarrow{\partial_C} C_0$. The ancilla complex that facilitates the gauging measurement of a codeword $c_i\in H_1(\cC)$ is the three-term complex $\cGi{i}$ such that the product of all elements in $G[i]_{1}$ is $c_i$.
    Geometrically, the three-term complex $\cGi{i}$ in $\cone(g[i])$ can be interpreted as a complex between vertices $G[i]_1$, edges $G[i]_{0}$, and faces $G[i]_{-1}$ attached to the bits and checks of the classical code $\cC$~\cite{williamson2024low-overhead}. 
    
    Naturally, as $\cGi{i}$ has three non-trivial terms, the tensor product of $\cGi{i}$ with $\cD=D_1\xrightarrow{\partial_{D}}D_0$, has four terms instead of three in the base code, $(C\otimes D)_\bullet$. Crucially, we designate one of these terms, $A_2 = G[i]_1\otimes D_1$, to be the ancilla complex meta-checks in the mapping cone for fast surgery (Eq.~\eqref{eq:fast-surgery}). 
\end{remark}

Proposition~\ref{prop:coned-homology} and equation~\eqref{eq:coned-homology} implies that in the deformed code $\cone(g[i]\otimes \id_D)$, the logical qubits of $(C\otimes D)_\bullet$ which correspond to the homology classes $\{c_i\}\otimes H_0(\cD)$ are now in $\im(\partial_{\cone, 1})$; 
equivalently, these $Z$-logical operators are now a product of newly introduced $Z$-stabilizers in the ancilla system $\cAi{i}$. 
in the context of code surgery, these logical operators are now measured simultaneously.
The sequences of surgery operations given by $\cAi{1}, \cdots, \cAi{t}$ then measures the operators corresponding to $\{c_1\}\otimes H_0(\cD), \cdots, \{c_t\}\otimes H_0(\cD)$ in order. 

While simultaneous measurements of multiple operators can be highly efficient, ideally we would also like to construct surgery operations that are both fast and addressable, i.e., capable of targeting individual logical operators. 
A natural idea here is to consider changing the ancilla complexes from $(G[i]\otimes D)_\bullet$ to $(G[i]\otimes D')_\bullet$ for some other complex $\cD'$ which is homomorphic to $\cD$. 
For instance, $\cD'$ may be obtained from puncturing or augmenting $\cD$, similar to the constructions in Ref.~\cite{xu2024fast}.
A technical challenge with this approach, however, is that our central Lemma~\ref{lem:cone_product} no longer applies and new proofs would be needed to analyze the coned complexes and the compacted code. 
We leave this promising direction for future work.

We further note that our constructions are closely related to the homomorphic measurement gadgets~\cite{huang2023homomorphic} constructed for homological product codes in Ref.~\cite{xu2024fast}. 
We briefly recall their constructions.
Specifically, Ref.~\cite{xu2024fast} also constructed $\cC', \cD'$ with homomorphic chain maps $f, g$ to $\cC, \cD$ respectively. 
They treated $(C'\otimes D')_\bullet$ as an ancillary quantum code, initialized it into a logical product state, and applied a collection of physical CNOTs gates (with control qubits in $(C'\otimes D')_\bullet$ and target qubits in $(C\otimes D)_\bullet$) specified by the product chain map $f\otimes g$.
This physical CNOT circuit (henceforth called the ``homomorphic CNOT'') induces a logical CNOT circuit between the logical qubits of $(C'\otimes D')_\bullet$ and that of $(C\otimes D)_\bullet$. 
By measuring the physical qubits in $(C'\otimes D')_\bullet$ transversally, they enact a logical measurement of certain operators (dependent on $\cC'$ and $\cD'$) in the base code $(C\otimes D)_\bullet$. 

The critical difference between our methods and theirs is reflected in the time overheads of our respective protocols. In this work, our surgery gadgets incur constant time overhead in amortization. 
In Ref.~\cite{xu2024fast}, while the homomorphic CNOT takes constant time, each round of logical measurement requires a well-prepared logical state of the code $(C'\otimes D')_\bullet$. 
For generic HGP codes, this state preparation requires $O(d)$ rounds of syndrome measurements. 
Readers familiar with the work of Ref.~\cite{zhou2024algorithmic} may think that the techniques of algorithmic fault-tolerance may be applied, where we prepares each state for only $O(1)$ rounds, resulting in ill-prepared states to be used in the measurements. 
The hope is that by performing many such measurements in succession and decoding all of them together (similar to our fast surgery formulation), the overall protocol would be fault-tolerant.
This approach, however, does not work in general~\cite{harry-personal-communication}. 
Therefore, on face value, the gadgets constructed in this work are indeed much faster.

Looking deeper, part of the reason algorithmic fault-tolerance cannot be directly applied to gadgets in Ref.~\cite{xu2024fast} is, in fact, that the compacted code (obtained by treating the ancillary complexes $(C'\otimes D')_\bullet$ as surgery gadgets instead of ancillary code blocks, see footnote~\footnote{More specifically, when we treat the complex $(C'\otimes D')_\bullet$ as ancillary code blocks, we are applying a CNOT circuit according to the chain map $f\otimes g$. When we treat the complex as a surgery gadget, we are taking the mapping cone $\cone(f\otimes g)$ as the deformed code, which in particular means that the complex $(C'\otimes D')_\bullet$ has its degrees shifted by one. The compacted code in this context is obtained from Definition~\ref{def:compacted-code} by treating a sequence of ancillary complexes $(C'\otimes D')_\bullet$, which are originally used for homomorphic measurement in Ref.~\cite{xu2024fast}, as fast surgery gadgets.})
has distance lower than $d$ in general~\cite{harry-personal-communication}.
Therefore, the true difference between our work and Ref.~\cite{xu2024fast} is that we constructed ancillary complexes and chain maps which ensures the compacted code has high distance.
This additional condition is what enabled constant time overhead surgery.

To conclude this discussion, we note that Ref.~\cite{cowtan2025fast} showed a circuit-equivalence between CSS surgery and homomorphic measurements via ZX-calculus. 
The existence of such an equivalence is not surprising; after all, they are both based on homomorphic chain maps.
Following this equivalence, we believe our constructions can also be applied as homomorphic measurement gadgets, where the ancilla code blocks are prepared for $O(1)$ rounds each, to achieve the same measurements fault-tolerantly.

\subsection{Compacted Code Distance}
\label{sec:compacted-code-distance}
We first recall the definition of the compacted code (Def.~\ref{def:compacted-code}). 
To construct the compacted code, we consider the trivial extensions of the chain maps $f[i]:\cAi{i}\rightarrow (C\otimes D)_\bullet$ to 
\begin{align}
    f'[i]: \bigoplus_{i\in [t]}\cAi{i}\rightarrow (C\otimes D)_\bullet,
\end{align}
whose action is the same on $\cAi{i}$ and trivial on the rest of the spaces $\cAi{j}$, $j\ne i$. 
The domain of $f'[i]$ is the direct sum of all ancilla complexes in $t$ sequential fast measurements, mapped onto the base code $(C\otimes D)_\bullet$. 
Note that this is equivalent to taking a similar extension for the maps $g[i]$, namely
\begin{align}
    g'[i]:\bigoplus_{i\in [t]}\cGi{i}\rightarrow \cC, 
\end{align} 
and setting $f'[i] = g'[i]\otimes \id_D:\bigoplus_{i\in [t]} (G[i]\otimes D)_\bullet \rightarrow (C\otimes D)_\bullet$. 
Further define 
\begin{align}
    \bar{g}:= \sum_{i\in[t]} g'[i], \quad \bar{f}:= \sum_{i\in[t]} f'[i] = \bar{g}\otimes \id_D.
    \label{eq:gbar-fbar}
\end{align}
$\bar{g}$ and $\bar{f}$ is simply the sum of all extended chain maps over $t$ surgeries, such that $\bar{f}$ acts non-trivially as $f[i]$ on $\cAi{i},\ \forall i\in [t]$.
The compacted code is then defined as the mapping cone of $\bar{f}$.
\begin{align}
    \cc_\bullet := \cone(\bar{f}) = \cone(\bar{g}\otimes \id_D) = (\cone(\bar{g})\otimes D)_\bullet.
\end{align}
The last equality above follows from Lemma~\ref{lem:cone_product}.
The complex $\cone(\bar{g})$ has the following form,
\begin{equation}
\begin{tikzcd}[column sep=large,row sep=large]
|[alias=A1]| \oplus_{i\in [t]} G[i]_1 
\arrow[r, "\partial_{G,1}"] 
\arrow[rd, "\bar{g}_1"'] &
|[alias=A0]| \oplus_{i\in [t]} G[i]_0 
\arrow[r, "\partial_{G,0}"] 
\arrow[rd, "\bar{g}_0"'] &
|[alias=Am1]| \oplus_{i\in [t]} G[i]_{-1} \\
|[alias=C2] |0 \arrow[r, "0"'] &
|[alias=C1]| C_1 \arrow[r, "\partial_{C}"'] &
|[alias=C0]| C_0
\end{tikzcd}
\label{eq:cone-G}
\end{equation}
In the above expression, we have
\begin{align}
    \bar{g}_1 = \sum_{i\in [t]} g'\lbrack i\rbrack_1, &\quad 
    \bar{g}_0 = \sum_{i\in [t]} g'\lbrack i\rbrack_0, \\
    \partial_{G,1} = \bigoplus_{i\in [t]}\partial_{G\lbrack i\rbrack,1},
    &\quad
    \partial_{G,0} = \bigoplus_{i\in [t]}\partial_{G\lbrack i\rbrack,0}.
\end{align}

\begin{proposition}\label{prop:1D-cone-compacted-homology}
    $H_0(\cone(\bar{g}))\cong H_0(\cC)$, and 
    \begin{align}
        H_1(\cone(\bar{g}))\cong H_1(\cC)/\Span\{\{c_1\},\cdots, \{c_t\}\}.
    \end{align}
\end{proposition}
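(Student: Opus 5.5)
The plan is to run the argument of Proposition~\ref{prop:coned-homology} again, now for the direct sum $\cG := \bigoplus_{i\in[t]}\cGi{i}$ with the combined chain map $\bar{g}$. The structure of $\cone(\bar{g})$ in Eq.~\eqref{eq:cone-G} is identical to Eq.~\eqref{eq:cone-Gi}, with $G[i]_j$ replaced by $\bigoplus_i G[i]_j$. So it suffices to check that the hypotheses used there still hold for $\cG$, and then to track which classes are killed.

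\emph{Step 1: hypotheses for the direct sum.} Homology commutes with direct sums, so
\begin{align}
H_j(\cG)=\bigoplus_{i\in[t]} H_j(\cGi{i}).
\end{align}
Condition~\ref{surgery-cond:no-gauge-qubits} of Lemma~\ref{lem:graph-surgery} then gives $H_0(\cG)=H_{-1}(\cG)=0$. Condition~\ref{surgery-cond:ancilla-homology} gives
\begin{align}
\ker(\partial_{G,1})=\bigoplus_i \ker(\partial_{G[i],1})=\Span\{v_1,\dots,v_t\},
\end{align}
where $v_i$ is the all-ones vector on $G[i]_1$, and $\bar g_1(v_i)=g[i]_1(v_i)=c_i$. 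Since $\bar g$ is a sum of chain maps with disjoint domains, it is itself a chain map, so $\bar g_0\partial_{G,1}=\partial_C\bar g_1$.

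\emph{Step 2: $H_0$.} The second half of the proof of Proposition~\ref{prop:coned-homology} used only $H_{-1}(\cG)=0$, $H_0(\cG)=0$ and the chain map identity, so it carries over verbatim. Using $H_{-1}(\cG)=0$, any $(r,h)$ can be reduced to $(0,h')$. Conversely, if $(0,h)=\partial_{\cone,0}(e,b)$, then $e\in\ker\partial_{G,0}=\im\partial_{G,1}$, say $e=\partial_{G,1}(v)$. Then $h=\partial_C(b+\bar g_1(v))$, so $h\in\im\partial_C$. Together these give $H_0(\cone(\bar g))\cong H_0(\cC)$.

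\emph{Step 3: $H_1$.} This is where a little more care is needed. The goal is an explicit surjection $\phi:\ker(\partial_C)\to H_1(\cone(\bar g))$, $b\mapsto[(0,b)]$, whose kernel is exactly $\Span\{c_1,\dots,c_t\}$; note that $\im\partial_{C,2}=0$ here.

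For surjectivity: given $(e,b)\in\ker\partial_{\cone,0}$, we have $e\in\ker\partial_{G,0}=\im\partial_{G,1}$. Subtracting a boundary $\partial_{\cone,1}(v,0)$ lands on some $(0,b')$ with $b'\in\ker\partial_C$.

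For the kernel: $(0,b)\in\im\partial_{\cone,1}$ means $(0,b)=(\partial_{G,1}v,\bar g_1 v)$ for some $v$. Hence $v\in\ker\partial_{G,1}=\Span\{v_i\}$ and $b=\bar g_1(v)\in\Span\{c_i\}$. Conversely, each $c_i=\partial_{\cone,1}(v_i,0)$ is a boundary.

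It follows that $H_1(\cone(\bar g))\cong \ker(\partial_C)/\Span\{c_1,\dots,c_t\}=H_1(\cC)/\Span\{\{c_1\},\dots,\{c_t\}\}$. Because the $c_i$ are linearly independent, the quotient has dimension $\dim H_1(\cC)-t$. This is consistent with, but not needed for, the statement.

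The only real obstacle is Step 3. There one must make sure that the kernel of $\partial_{G,1}$ on the direct sum contains no cross terms, i.e. that it is exactly the span of the individual all-ones vectors; this follows because $\partial_{G,1}$ is block diagonal. Everything else is a direct transcription of the proof of Proposition~\ref{prop:coned-homology}.
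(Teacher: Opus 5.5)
Your proof is correct and takes the same approach as the paper, which just says the result follows by rerunning the argument of Proposition~\ref{prop:coned-homology}. Applying that argument once to the direct sum $\bigoplus_i \cGi{i}$ (vanishing $H_0$, $H_{-1}$, and $\ker\partial_{G,1}=\Span\{v_1,\dots,v_t\}$ by block-diagonality) is a clean way to make the paper's ``iteratively'' precise, and your explicit surjection $b\mapsto[(0,b)]$ with kernel exactly $\Span\{c_1,\dots,c_t\}$ tightens the paper's informal claim that the remaining classes ``remain independent.''
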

\begin{proof}
    Follows from iteratively applying the arguments in Proposition~\ref{prop:coned-homology}.
\end{proof}
\begin{proposition}\label{prop:compacted-cone-0-codistance}
    $d^0(\cone(\bar{g})) \ge d$.
\end{proposition}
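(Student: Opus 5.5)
We argue directly from the definition: every nonzero cocycle in degree $0$ of $\cone(\bar{g})$ will turn out to have weight at least the classical distance $d_{C^\top}$. Since $d=\min(d_C,d_D,d_{C^\top},d_{D^\top})\le d_{C^\top}$, this gives the claim.

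\emph{Setting up the cochain complex.} From the diagram~\eqref{eq:cone-G}, the degrees of $\cone(\bar{g})$ are as follows: degree $2$ is $\oplus_i G[i]_1$, degree $1$ is $(\oplus_i G[i]_0)\oplus C_1$, degree $0$ is $(\oplus_i G[i]_{-1})\oplus C_0$, and degree $-1$ is zero. Hence $\delta^{-1}=0$, so $H^0(\cone(\bar{g}))=\ker(\delta^0)$ and $d^0$ is the minimum weight of a nonzero vector in $\ker(\delta^0)$. Here $\delta^0=\partial_{\cone,1}^\top$. Writing an element of degree $0$ as $(r,h)$ with $r\in\oplus_i G[i]_{-1}$ and $h\in C_0$, the cocycle condition becomes two equations:
\begin{align}
\partial_{G,0}^\top r+\bar{g}_0^\top h=0,\qquad \partial_C^\top h=0 .
\end{align}

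\emph{Case split.} We split on whether $h$ vanishes.
\begin{itemize}
\item If $h\neq 0$, then $h\in\ker(\partial_C^\top)$ is a nonzero codeword of the transpose code. Therefore $|(r,h)|\ge|h|\ge d_{C^\top}=d^0(\cC)\ge d$.
\item If $h=0$, then $\partial_{G,0}^\top r=0$, i.e.\ $\partial_{G[i],0}^\top r_i=0$ for every block $i$.
\end{itemize}
For the second case we invoke condition~\ref{surgery-cond:no-gauge-qubits} of Lemma~\ref{lem:graph-surgery}. Since $\cGi{i}$ has nothing below degree $-1$, the condition $H_{-1}(\cGi{i})=0$ means $\partial_{G[i],0}$ is surjective. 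Consequently its transpose is injective, which is the same as saying $H^{-1}(\cGi{i})=0$ over $\bF_2$. This forces $r_i=0$ for all $i$, so $(r,h)=0$. Thus no nonzero cocycle arises in this case.

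Combining the two cases, every nonzero element of $H^0(\cone(\bar{g}))$ has weight at least $d$. If $\ker(\partial_C^\top)=0$, the argument shows $H^0$ is trivial, and the distance is $\infty$ by convention.

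The only step requiring care is the bookkeeping of degree shifts in the cone. Once it is clear that degree $0$ contains $G_{-1}$ rather than $G_0$, and that there is no $\delta^{-1}$ to quotient by, the argument reduces to the dimension-counting fact $H_{-1}(\cGi{i})=0\Rightarrow H^{-1}(\cGi{i})=0$. We expect no real obstacle beyond this.
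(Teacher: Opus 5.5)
Your proof is correct and is essentially the paper's argument: split a degree-$0$ cocycle into $(r,h)$, use the vanishing of the bottom cohomology of each $\cGi{i}$ to force $h\neq 0$ for a nonzero cocycle, and conclude $|h|\ge d_{C^\top}\ge d$ from $h\in\ker(\partial_C^\top)$. Your derivation of $H^{-1}(\cGi{i})=0$ from $H_{-1}(\cGi{i})=0$ is the correctly indexed form of the step the paper states as ``$H^0(\cGi{i})=0$''.
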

\begin{proof}
    Consider $(h, r)\in H^0(\cone(\bar{g})) = \ker\left(\delta_{\cone(\bar{g})}^0 \right)$, where $h\in C_0$ and $r\in \oplus_{i\in [t]} G[i]_{-1}$. Since $H^0(\cGi{i}) = 0$ for all $i$, we must have $h\ne 0$, which means $h\in \ker(\delta_C)$. This implies $|h|\ge d$.
\end{proof}

\begin{restatable}{proposition}{CompactedConeDistance}\label{prop:1D-cone-compacted-distance}
    $d^1(\cone(\bar{g})) \ge 1$, and $d_1(\cone(\bar{g})) \ge d$.
\end{restatable}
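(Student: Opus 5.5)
The bound $d^1(\cone(\bar{g}))\ge 1$ needs no argument: any element of $H^1(\cone(\bar{g}))$ representing a nonzero class is a nonzero vector, so it has weight at least $1$; if the cohomology is trivial, the distance is $\infty$ by convention. The real content is $d_1(\cone(\bar{g}))\ge d$. For that I will rerun the reduction from the proof of Proposition~\ref{prop:coned-homology}, applied to the direct sum $\oplus_{i\in[t]}\cGi{i}$, while tracking weights.

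Let $(e,b)$, with $e=(e_1,\dots,e_t)\in\oplus_i G[i]_0$ and $b\in C_1$, be a chain in $\ker(\partial_{\cone,0})$ representing a nonzero class of $H_1(\cone(\bar{g}))$. The $G_{-1}$ component of the boundary condition gives $\partial_{G,0}(e)=0$. Since $\partial_{G,0}$ is block diagonal, this means $\partial_{G[i],0}(e_i)=0$ for every $i$. Condition~\ref{surgery-cond:no-gauge-qubits} of Lemma~\ref{lem:graph-surgery} then gives $v_i\in G[i]_1$ with $\partial_{G[i],1}(v_i)=e_i$.

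Each $v_i$ is only determined up to $\ker(\partial_{G[i],1})$, which by condition~\ref{surgery-cond:ancilla-homology} is spanned by the all-ones vector. I therefore replace $v_i$ by its complement if necessary, so that $|v_i|=\min(|v_i|,\dim G[i]_1-|v_i|)$. Condition~\ref{surgery-cond:expansion} (Cheeger constant at least $1$) then gives $|e_i|\ge |v_i|$. Summing over the disjoint blocks yields $|e|\ge\sum_i|v_i|$.

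Next I subtract the boundary $\partial_{\cone,1}(v)$, where $v=(v_1,\dots,v_t)$. This produces the homologous chain $(0,b')$ with
\begin{align}
    b' = b+\sum_{i\in[t]} g[i]_1(v_i).
\end{align}
By condition~\ref{surgery-cond:sparsity}, each $g[i]_1$ is $1$-sparse, so $|g[i]_1(v_i)|\le|v_i|$. The triangle inequality then gives
\begin{align}
    |b'|\le |b|+\sum_i |v_i|\le |b|+|e|.
\end{align}
The chain $(0,b')$ lies in $\ker(\partial_{\cone,0})$, so $b'\in\ker(\partial_C)$. It also represents the same nonzero class as $(e,b)$. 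By Proposition~\ref{prop:1D-cone-compacted-homology}, this class corresponds to a nonzero element of $H_1(\cC)/\Span\{\{c_1\},\dots,\{c_t\}\}$, so in particular $b'\neq 0$. Since $C_2=0$ in $\cone(\bar{g})$, $b'$ is a nonzero codeword of $\ker(\partial_C)$, and hence $|b'|\ge d_C\ge d$. Combining, $|(e,b)|=|e|+|b|\ge|b'|\ge d$, which proves the claim.

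The step needing the most care is the weight bound $|e|\ge\sum_i|v_i|$. It requires choosing the preimages $v_i$ blockwise. A single Cheeger-type argument on the whole direct sum would fail, because its kernel is $t$-dimensional rather than spanned by one all-ones vector. The block-diagonal structure and the disjointness of the $G[i]$ summands are exactly what make the per-block Cheeger bounds add up.
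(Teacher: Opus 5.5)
Your proof is correct and is essentially the paper's argument: choosing each $v_i$ as the smaller of $v_i$ and $v_i+\mathbb{1}_i$ plays the role of the paper's split into indices $i\le\ell$ and $i>\ell$, your $b'$ is exactly the paper's $\tilde{c}=c_j+\sum_{i>\ell}c_i$, and the Cheeger bound and $1$-sparsity of $g[i]_1$ enter in the same way. The only difference is organizational: you start from an arbitrary cycle $(e,b)$ and move it onto $C_1$ using $H_0(\cGi{i})=0$, whereas the paper starts from $(c_j,0,\dots,0)$ and adds an arbitrary boundary, so your version covers every nonzero class directly rather than only the basis classes.
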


We prove this proposition in Appendix~\ref{apdx:proofs}, and note that the proof is a fairly standard expansi on argument, similar to those developed in Refs.~\cite{cross2024improved,williamson2024low-overhead,Ide_2025,swaroop2024universal}. 
Intuitively, the argument considers a logical operator in $H_1(\cone(\bar{g}))$ which we do not measure. Each of these logical operators corresponds to a logical supported entirely in $C_1$, the bits of the original code, which must have weight at least $ d$ by assumption. To see that the distance is preserved, we deform a logical $c_j\in C_1$ with an arbitrary set of checks $v\in \oplus_{i\in[t]} G[i]_1$ and prove that the weight across $C_1\oplus (\bigoplus_{i\in[t]}G[i]_0)$ is at least $ d$. Given the expansion condition~\ref{surgery-cond:expansion} in Lemma~\ref{lem:graph-surgery} of the map $\partial_{G[i],1}:G[i]_1 \rightarrow G[i]_0$, one can apply this to the extended field and boundary of the compacted code, $\partial_{G,1}:\bigoplus_{i\in [t]}G[i]_1 \rightarrow \bigoplus_{i\in [t]}G[i]_0$, and to show that adding support from $\partial_{G,1}(v)$ does not reduce the weight of the logical.

\begin{proposition}
    The compacted code defined by the ancillary complexes $\cAi{i}$ and chain maps $f[i]$ has 1-systolic and 1-cosystolic distances at least $d$.
    \label{prop:compacted-code-dist}
\end{proposition}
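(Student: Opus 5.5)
We use the identification $\cc_\bullet = (\cone(\bar{g})\otimes D)_\bullet$ from Lemma~\ref{lem:cone_product}. Then we apply the product distance formula, Lemma~\ref{lem:distance_of_product}, and its cochain corollary~\eqref{eq:product-codist}. The lemma applies because $\cD = D_1\rightarrow D_0$ has exactly two non-trivial terms. The complex $\cone(\bar{g})$ has spaces in degrees $2,1,0$: $\oplus_i G[i]_1$, $\oplus_i G[i]_0\oplus C_1$, and $\oplus_i G[i]_{-1}\oplus C_0$. We index it so that its degree-$1$ term is the one containing $C_1$ and its degree-$0$ term is the one containing $C_0$. With this indexing the qubits of $\cc_\bullet$ sit in total degree $1$, matching the convention of Definition~\ref{def:compacted-code}. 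We must check that this indexing agrees with the degree shift in the mapping cone of Lemma~\ref{lem:cone_product}, so that the $1$-(co)systolic distances of $\cc_\bullet$ are the $k=1$ cases of the formulas.

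For the systolic distance, Lemma~\ref{lem:distance_of_product} gives
\begin{align}
 d_1(\cc) = \min\big(d_0(\cone(\bar g))\,d_1(D),\; d_1(\cone(\bar g))\,d_0(D)\big).
\end{align}
Proposition~\ref{prop:1D-cone-compacted-distance} gives $d_1(\cone(\bar g))\ge d$, and $d_0(D)\ge 1$ trivially, or $\infty$ if $H_0(\cD)=0$. For the other term, $d_1(D)=d_D\ge d$ holds because the HGP distance is $\min(d_C,d_D,d_{C^\top},d_{D^\top})$. It remains to show $d_0(\cone(\bar g))\ge 1$, which is immediate: any nontrivial homology class is represented by a nonzero vector, and if there is none the distance is $\infty$ by convention. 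Both terms are therefore at least $d$.

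For the cosystolic distance, \eqref{eq:product-codist} gives
\begin{align}
 d^1(\cc)=\min\big(d^0(\cone(\bar g))\,d^1(D),\; d^1(\cone(\bar g))\,d^0(D)\big).
\end{align}
Proposition~\ref{prop:compacted-cone-0-codistance} gives $d^0(\cone(\bar g))\ge d$, and $d^1(D)\ge1$. Proposition~\ref{prop:1D-cone-compacted-distance} only gives $d^1(\cone(\bar g))\ge 1$, so this factor must be paired with $d^0(D)$. Here $d^0(D)$ is the minimum weight of a nonzero element of $H^0(\cD)=\ker(\partial_D^\top)$, namely $d_{D^\top}\ge d$. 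Hence both terms are at least $d$.

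The main obstacle is bookkeeping rather than mathematics. We must match the degree conventions and the meaning of $d_0, d_1, d^0, d^1$ for $\cD$ against the classical distances $d_D$ and $d_{D^\top}$, under the paper's (co)homology conventions. We must also confirm that Lemma~\ref{lem:distance_of_product} allows the first factor $\cone(\bar g)$ to be an arbitrary bounded complex, here with three terms, while only $\cD$ is required to have two. We would also check that the $\infty$ conventions for empty (co)homology behave correctly, for instance when $k_D=0$.
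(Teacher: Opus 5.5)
Your proposal is correct and is essentially the paper's proof. The paper likewise uses $\cc_\bullet=(\cone(\bar g)\otimes D)_\bullet$ and applies Lemma~\ref{lem:distance_of_product} and \eqref{eq:product-codist} with exactly your pairings: $d_1(\cone(\bar g))\ge d$ against $d_0(\cD)\ge 1$, $d_0(\cone(\bar g))\ge 1$ against $d_1(\cD)\ge d$, $d^1(\cone(\bar g))\ge 1$ against $d^0(\cD)\ge d$, and $d^0(\cone(\bar g))\ge d$ against $d^1(\cD)\ge 1$. The bookkeeping you flag is settled by the lemma's own hypothesis (only the second factor must have two terms) and by the grading $\cone(f)_k=A_{k-1}\oplus C_k$ used in Lemma~\ref{lem:cone_product}.
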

\begin{proof}
From Lemma~\ref{lem:distance_of_product}, equation~\eqref{eq:product-codist}, Propositions~\ref{prop:compacted-cone-0-codistance} and~\ref{prop:1D-cone-compacted-distance}, we can bound the distances of the compacted code.
\begin{align}
    &d^1(\cc_\bullet) \notag\\
    &= \min(d^1(\cone(\bar{g}))d^0(\cD), d^0(\cone(\bar{g}))d^1(\cD)) \\
    &\ge \min(1\times d, d\times 1) = d. \\
    &d_1(\cc_\bullet) \notag\\
    &= \min(d_1(\cone(\bar{g}))d_0(\cD), d_0(\cone(\bar{g}))d_1(\cD)) \\
    &\ge \min(d\times 1, 1\times d) = d. 
\end{align}
We see that condition~\ref{fast-cond:compacted-distance} of Lemma~\ref{lem:fast-surgery-conditions} is satisfied.
\end{proof}

\subsection{Meta-check Distance}
\label{sec:meta-check-dist}
Finally, we show that the ancillary complexes has high meta-check distance.
\begin{proposition}
    Each ancillary complex $\cAi{i} = (G[i]\otimes D)_\bullet$ satisfy conditions~\ref{fast-cond:metacheck} and~\ref{fast-cond:non-overlap} of Lemma~\ref{lem:graph-surgery}.
\end{proposition}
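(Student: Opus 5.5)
The plan is to read the two conditions off the tensor-product structure of $\cAi{i} = (G[i]\otimes D)_\bullet$. I take the references to be to conditions~\ref{fast-cond:metacheck} and~\ref{fast-cond:non-overlap} of Lemma~\ref{lem:fast-surgery-conditions}. First I record the grading. Since $\cGi{i}$ lives in degrees $1,0,-1$ and $\cD$ in degrees $1,0$, the four-term complex has
\begin{align}
A[i]_2 &= G[i]_1\otimes D_1,\\
A[i]_1 &= G[i]_1\otimes D_0\oplus G[i]_0\otimes D_1,\\
A[i]_0 &= G[i]_0\otimes D_0 \oplus G[i]_{-1}\otimes D_1,\\
A[i]_{-1} &= G[i]_{-1}\otimes D_0 .
\end{align}
So the meta-checks are $G[i]_1\otimes D_1$, as in the Remark. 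The meta-check distance is then $d^1$ of a tensor product, which I bound with the cohomological form~\eqref{eq:product-codist} of Lemma~\ref{lem:distance_of_product}:
\begin{align}
d^1(G[i]\otimes D)=\min\left(d^0(G[i])\,d^1(D),\; d^1(G[i])\,d^0(D)\right).
\end{align}

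Next I compute each factor. Over $\bF_2$ the cohomology and homology of a finite complex have equal dimensions. Condition~\ref{surgery-cond:no-gauge-qubits} of Lemma~\ref{lem:graph-surgery} gives $H_0(\cGi{i})=0$, hence $H^0=0$ and $d^0(G[i])=\infty$ by the convention of Ref.~\cite{zeng2020minimal}; the first term is therefore $\infty$, whatever $d^1(D)$ is. For $d^1(G[i])$, note that $\delta^1=0$ because $G[i]_2=0$, so $H^1=G[i]_1/\im(\partial_{G[i],1}^\top)$. Now $\im(\partial_{G[i],1}^\top)=\ker(\partial_{G[i],1})^\perp$, and by condition~\ref{surgery-cond:ancilla-homology} this is the space of even-weight vectors. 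Every odd-weight vector is a nontrivial class, so $d^1(G[i])=1$. Finally $d^0(D)$ is the minimum weight of a nonzero element of $\ker(\partial_D^\top)$, i.e.\ $d_{D^\top}\ge d$. Together these give $d^1(\cAi{i})\ge \min(\infty, 1\cdot d)= d$, which is condition~\ref{fast-cond:metacheck}.

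For condition~\ref{fast-cond:non-overlap}, I write $f[i]_1=g[i]\otimes\id_D$ restricted to $A[i]_1$. It is block diagonal: $g[i]_1\otimes\id$ maps $G[i]_1\otimes D_0\to C_1\otimes D_0$, and $g[i]_0\otimes \id$ maps $G[i]_0\otimes D_1\to C_0\otimes D_1$. Basis vectors from different blocks land in different direct summands of $(C\otimes D)_1$, so their images are disjoint. Within the first block, basis vectors are $u\otimes e_j$. Condition~\ref{surgery-cond:sparsity} says $g[i]_1$ is $1$-sparse in both rows and columns, so distinct $u$ go to distinct single bits or to zero. Tensoring with the identity preserves this, so the images are disjoint.

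The step I expect to be the main obstacle is the second block. Here disjointness requires that $g[i]_0$ send distinct ancilla elements of $G[i]_0$ to disjoint subsets of $C_0$, whereas Lemma~\ref{lem:graph-surgery} as stated only makes $g[i]_0$ $w$-sparse. I would first go back to the gauging/homological-measurement construction of Refs.~\cite{williamson2024low-overhead,Ide_2025} and check whether $g_0$ is in fact $1$-sparse there, as the Hamming example of Fig.~\ref{fig:summary} suggests. If it is not, I would modify $\cGi{i}$ so that each check of $C_0$ is hit by at most one ancilla element, for instance by splitting or rerouting edges. Any such modification must preserve the chain-map relation $g_0\partial_{G,1}=\partial_C g_1$ and the remaining conditions of Lemma~\ref{lem:graph-surgery}. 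With the second block handled, both conditions hold.
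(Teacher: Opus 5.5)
Your meta-check argument is the paper's: Lemma~\ref{lem:distance_of_product} in cohomological form with $d^0(\cGi{i})=\infty$, $d^1(\cGi{i})=1$ and $d^0(\cD)\ge d$, and you justify these values more carefully than the paper does. Your treatment of the $G[i]_1\otimes D_0$ block and of cross-block pairs is also correct. The gap is the $G[i]_0\otimes D_1$ block, which you leave open. There, condition~\ref{fast-cond:non-overlap} of Lemma~\ref{lem:fast-surgery-conditions} requires every row of $g[i]_0$, i.e.\ every check $s\in C_0$, to have weight at most one. You are right that this is the crux, and the paper does not close it either. Its proof derives the condition only from $f[i]=g[i]\otimes\id_D$ and the $1$-sparsity of $g[i]_1$, which covers only the first block. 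Lemma~\ref{lem:graph-surgery} gives only $w$-sparsity of $g[i]_0$.

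Your proposed repair, splitting or rerouting edges, cannot succeed for graph-based gadgets. Put $\gamma_s=g[i]_0^\top(s)\subseteq G[i]_0$. Transposing $g[i]_0\,\partial_{G[i],1}=\partial_C\, g[i]_1$ gives $\partial_{G[i],1}^\top(\gamma_s)=g[i]_1^\top(\partial_C^\top s)$. Since $g[i]_1$ is $1$-sparse with $g[i]_1(\mathbb{1})=c_i$, the right-hand side is exactly $|s\cap c_i|$ distinct elements of $G[i]_1$, one for each bit of $c_i$ checked by $s$. If elements of $G[i]_0$ are edges with two endpoints, then $|\gamma_s|\ge |s\cap c_i|/2$, however the edges are routed. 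So any check meeting $c_i$ in four or more bits yields $e\neq e'$ with $s\in g[i]_0(e)\cap g[i]_0(e')$. Then $e\otimes e_j$ and $e'\otimes e_j$ have overlapping images at the base qubit $s\otimes e_j\in C_0\otimes D_1$. Hence for graph-based gauging gadgets the condition holds only when every check meets each $c_i$ in at most two bits. That is true in the Hamming and toric examples, but not for general LDPC $C$. To finish, you must either add that hypothesis, or build $\cGi{i}$ from hyperedges. In the latter case, give each check $s$ with $s\cap c_i\neq\emptyset$ a single element of $G[i]_0$ whose endpoint set is $s\cap c_i$, and put all other elements of $G[i]_0$ (expansion or thickening edges) in $\ker g[i]_0$. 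You would then have to re-verify the expansion, homology, sparsity and size conditions of Lemma~\ref{lem:graph-surgery} for that gadget.
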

\begin{proof}
    To show that $d^1(\cAi{i})\ge d$, we use Lemma~\ref{lem:distance_of_product} and equation~\eqref{eq:product-codist}.
    \begin{align}
        &d^1((G[i]\otimes D)_\bullet) \notag \\
        &= \min(d^1(\cGi{i})d^0(\cD), d^0(\cGi{i})d^1(\cD)) \\
        &= \min(1 \times d, \infty\times 1).
        \label{eq:metacheck-distance}
    \end{align}
    Here we follow the convention in Ref.~\cite{zeng2020minimal} and set $d^0(\cGi{i})$ to in $fty$ since the cohomology space is empty, as stated in Lemma~\ref{lem:graph-surgery}.
    Condition~\ref{fast-cond:non-overlap} follows from the fact that $f[i] = g[i]\otimes \id_D$ and $g[i]_1$ is $1$-sparse.
\end{proof}

Putting everything together, we obtain the main result of this paper.
\begin{theorem}\label{thm:main}
    Given a $[[n,k,d]]$ hypergraph product code $(C\otimes D)_\bullet$, we can construct a sequence of ancillary complexes $\cAi{1}, \cdots, \cAi{t}$ which satisfy the fast surgery conditions in Lemma~\ref{lem:fast-surgery-conditions}.
    Performing fast surgery with these complexes, as formulated in Section~\ref{sec:fast-surgery}, lets us measure rows (and similarly columns) of $Z$ logical operators (and similarly $X$ operators) in parallel, with each surgery operation taking $O(1)$ times in amortization.
    Furthermore, each complex utilizes $\tilde{O}(n)$ physical ancilla qubits. 
    Therefore the overall spacetime overhead of these logical gates is $\tilde{O}(1)$.
\end{theorem}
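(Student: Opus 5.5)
The plan is to assemble Theorem~\ref{thm:main} from the propositions already established, checking each hypothesis of Lemma~\ref{lem:fast-surgery-conditions} in turn and then accounting for space and time. Fix linearly independent $c_1,\dots,c_t\in\ker(\partial_C)$. These are chosen either to be (combinations of) the canonical vectors $\ell_i^C$ that select the desired rows, or, for column measurements, analogous vectors in $\ker(\partial_D)$ with the roles of $\cC$ and $\cD$ exchanged. Build $\cGi{i}$ and $g[i]$ by Lemma~\ref{lem:graph-surgery}, and set $\cAi{i}=(G[i]\otimes D)_\bullet$ with $f[i]=g[i]\otimes\id_D$.

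Condition~\ref{fast-cond:compacted-distance} is Proposition~\ref{prop:compacted-code-dist}. Conditions~\ref{fast-cond:metacheck} and~\ref{fast-cond:non-overlap} are the meta-check distance proposition, which uses the product-distance formula~\eqref{eq:product-codist} and the $1$-sparsity of $g[i]_1$. Lemma~\ref{lem:fast-surgery-conditions} then gives fault distance $d$ for the protocol of Figure~\ref{fig:amortized}.

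For the logical action, Lemma~\ref{lem:cone_product} together with Proposition~\ref{prop:coned-homology} and the K\"unneth formula~\eqref{eq:coned-homology} shows that the $i$-th deformed code has killed exactly the classes $\{c_i\}\otimes H_0(\cD)$. Hence the $i$-th step measures the row of $Z$ logicals $c_i\otimes e_j^{D^\top}$. Exchanging the tensor factors gives the column version. For $X$ operators, the same construction is applied to the cochain complexes (equivalently, to the transposed classical codes $\partial_C^\top$, $\partial_D^\top$), using the cohomological versions of the same statements.

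For time, each deformed code is measured for $O(1)$ rounds, and the $O(d)$ buffer rounds are amortized over $t\ge d$ operations. This gives $O(1)$ time per operation. I expect the main subtlety to be that $t\ge d$ linearly independent $c_i$ may not exist when $k_C<d$. In that case I would allow repeated or dependent choices of $c_i$. Linear independence is only used in Proposition~\ref{prop:1D-cone-compacted-homology} to describe the homology, not in the distance bounds, so the expansion argument of Proposition~\ref{prop:1D-cone-compacted-distance} should still go through. I would check this point carefully.

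For space, condition~\ref{surgery-cond:size-bound} gives $\sum_j\dim G[i]_j=O(|c_i|\log^3|c_i|)\le O(n_C\log^3 n_C)$. Tensoring with $\cD$, whose spaces have dimension $n_D+m_D$, multiplies this by at most $n_D+m_D$. The qubits of $\cAi{i}$ live in $G[i]_0\otimes D_0\oplus G[i]_{-1}\otimes D_1\oplus G[i]_1\otimes D_1$, up to the degree shift. Their number is therefore $O(n_C(n_D+m_D)\log^3 n_C)$, which is $\tilde O(n)$ assuming $m_C=\Theta(n_C)$ and $m_D=\Theta(n_D)$, as is standard for HGP codes. Condition~\ref{surgery-cond:sparsity} keeps all check weights constant. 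Combining $\tilde O(n)$ ancilla qubits with $O(1)$ amortized rounds gives multiplicative spacetime overhead $\tilde O(1)$.
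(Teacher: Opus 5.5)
Your proposal is correct and follows essentially the paper's route. The hypotheses of Lemma~\ref{lem:fast-surgery-conditions} come from Proposition~\ref{prop:compacted-code-dist} and the meta-check proposition of Section~\ref{sec:meta-check-dist}. The logical action comes from Lemma~\ref{lem:cone_product}, Proposition~\ref{prop:coned-homology} and K\"unneth. The $\tilde O(n)$ count comes from condition~\ref{surgery-cond:size-bound} of Lemma~\ref{lem:graph-surgery} multiplied by the size of $\cD$.

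Your remark that the compacted-distance argument survives dependent $c_i$ goes beyond the paper and is right, since it only needs $c_j\notin\Span\{c_1,\dots,c_t\}$. Such repeated measurements are logically redundant, though. One small slip: $G[i]_1\otimes D_1$ is the meta-check space $A[i]_2$, not qubits; the ancilla qubits are only $G[i]_0\otimes D_0\oplus G[i]_{-1}\otimes D_1$. The overcount does not affect your upper bound.
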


The space overhead of each ancillary complex is
\begin{align}
    &\dim(A[i]_0) \notag \\
    &=\dim(G[i]_0)\dim(D_0)+\dim(G[i]_{-1})\dim(D_1)\\
    &\le O(|c_i|\log^3(|c_i|))\cdot 2n_D.
\end{align}
For the inequality, we have used condition~\ref{surgery-cond:size-bound} of lemma~\ref{lem:graph-surgery} to bound the size of $G[i]_0$ and $G[i]_{-1}$ in terms of $|c_i|$, the measured operator of $\cone(g[i])$.
Suppose we take a $[[n,O(n),\sqrt{n}]]$ hypergraph product code constructed from a classical codes $\cC$ and $\cD$ each with parameters $[O(\sqrt{n}),O(\sqrt{n}),O(\sqrt{n})]$. 
It must be that $|c_i|\le O(\sqrt{n})$, and thus it follows that  
\begin{align}
    \dim(A[i]_0)&\le O(\sqrt{n}\log^{3}(\sqrt{n}))\cdot O(\sqrt{n}) \\
    &=O(n\log^{3}(n)).
\end{align}
We note that the above equation is an upper bound on the space overhead of each surgery gadget. 
In practice, we expect the extra poly-logarithmic factors that arise from the use of the decongestion lemma in~\cite{Ide_2025,williamson2024low-overhead} to be mostly unnecessary.
This expectation is supported by existing practical constructions of surgery gadgets, see for instance Refs.~\cite{cowtan2024ssip,cross2024improved,williamson2024low-overhead,ide2024fault,yoder2025tour}.
Lastly, we remark that these surgery gadgets have a non-zero threshold for LDPC hypergraph product codes with distances $d\ge O(n^a)$ for $a>0$, because the surgery gadgets of Theorem~\ref{thm:main} keep all checks LDPC and preserve the distance of the base code.
The existence of threshold follows from standard counting arguments, see for instance Theorem~1 of Ref.~\cite{Kovalev_2013}, Section~5 of Ref.~\cite{gottesman2014fault-tolerant}, and Section~6.3 of Ref.~\cite{he2025composable}.

\section{Examples and Applications}
\label{sec:examples_applications}

At a high level, we construct our constant-time surgery gadget by lifting a one-dimensional surgery ancilla system to a two-dimensional ancilla system via homological product. 
This recipe is broadly applicable and can be easily generalized. 
In this section, we showcase a few interesting applications and extensions of our construction for various logical processing tasks. 

\paragraph{Choice of 1D surgery gadgets.}
In Section~\ref{sec:construction}, we took the standard result from Refs.~\cite{williamson2024low-overhead,Ide_2025} as our 1D surgery gadgets and lifted them by homological product into a surgery gadget for a HGP code block. 
It is straightforward to extend our proof to use other surgery gadgets~\cite{swaroop2024universal,cowtan2025parallel,he2025extractors,zheng2025high} as the 1D component.
As an example, we consider the universal adapters of Ref.~\cite{swaroop2024universal}, which can be lifted by homological product to perform parallel entangling measurements between multiple different hypergraph product codes, as long as they share at least one classical code.

More formally, consider $m$ blocks of hypergraph product codes which share one classical component code,
\begin{align}
    &(C^{(1)}\otimes D)_\bullet\oplus (C^{(2)}\otimes D)_\bullet\oplus...(C^{(m)}\otimes D)_\bullet \\&= \bigoplus_{j\in[m]}(C^{(j)}\otimes D)_\bullet = ((\bigoplus_{j\in[m]} C^{(j)})\otimes D)_\bullet
    \\&=(C'\otimes D)_\bullet.
    \label{eq:many-hgp-codes}
\end{align}

\noindent Note that $C^{(i)}$ and $C^{(j)}$ can be arbitrary, different codes (with the constraint that all the HGP codes have distance at least $d$).
In the last line, $(C'\otimes D)_\bullet$ is the product of $\cD$ with 
\begin{align}
    \cC' = (\bigoplus_{j\in[m]} C_1^{(j)})\xrightarrow{\partial_C'} (\bigoplus_{j\in[m]} C_0^{(j)}), \partial_C'= \bigoplus_{j\in [m]}\partial _{C}^{(j)}.
\end{align}
Similar to the one code block case, we can measure in parallel the logical operators $c_i\otimes H_0(\cD)$ where $c_i=(c_i^{(1)},c_i^{(2)},...c_i^{(m)})\in H_1(\cC')$ with $c_i^{(j)}\in H_1(\cC^{(j)})$ for all $j$.
This measurement needs a one-dimensional surgery gadget that measures $c_i$, which can be built using the universal adapters of Ref.~\cite{swaroop2024universal}. 
In more detail, one can use Theorem 11 of Ref.~\cite{swaroop2024universal} to construct a surgery system $\cGi{i}$ that measures $c_i$ given ancillary complexes that measure $c_i^{(1)},...,c_i^{(m)}$ individually (which can be built using Lemma~\ref{lem:graph-surgery}). 
Such a construction has the advantage of being more modular and flexible, compared to building one full system directly with Lemma~\ref{lem:graph-surgery}.
The rest of the construction follows with this new $\cGi{i}$.  
As a result, we can measure logical operators and thereby entangle logical qubits between two or more non-identical hypergraph product codes, given that they share a common base classical code $\cD$.

We note an important technical detail here: the constructions in Refs.~\cite{swaroop2024universal} utilize a more relaxed notion of expansion called \textit{relative expansion}, while our Proposition~\ref{prop:1D-cone-compacted-distance} utilizes the stronger expansion guarantee given by Lemma~\ref{lem:graph-surgery}. 
This notion of relative expansion is later utilized in the extractor architecture~\cite{he2025extractors} as well. 
For technical correctness, we prove a version of Proposition~\ref{prop:1D-cone-compacted-distance} using relative expansion in Appendix~\ref{apdx:proofs}.
(see Prop.~\ref{prop:dist_relative_exp}).

So far we have restricted each one-dimensional surgery gadget to be measuring one logical operator (or codeword) from the base classical code. 
By careful choice of a $G_\bullet$ that describes a hypergraph instead of a graph, one can measure several codewords of the classical code $C_\bullet$ simultaneously~\cite{williamson2024low-overhead}, say $c_I\in H_1(\cC)$ for $I\in[i,i+1,...,i+p]$.
Then, $(\cone(g)\otimes D)_\bullet$ will measure even more logical operators in parallel, $c_I\otimes h$ for all $I$ and $h\in H_0(\cD)$. 
The challenge with this approach is to construct low space overhead hypergraphs for parallel measurements. 
We leave this interesting direction to future work, and note that parallel surgery have been explored theoretically in Refs.~\cite{zhang2025time-efficient,cowtan2025parallel}, and heuristically in Ref.~\cite{zheng2025high}.

\paragraph{Higher-dimensional homological product and HGP codes.}

Our gadgets have a natural generalization to higher-dimensional homological product codes.
In this setting, we can measure `lines' or `planes' of logical operators depending on the specifics of how the higher-dimensional product code is constructed.
As a concrete example, take a 3D homological product code, $(Q\otimes C)_\bullet $, a tensor product of a quantum $Q_\bullet:Q_2\xrightarrow[]{\partial_{Q,2}}Q_1\xrightarrow[]{\partial_{Q,1}}Q_0$ and a classical code $C_\bullet$.
Suppose that we associate the space $Q_0\otimes C_0$ for $X$ checks, and we would like to measure logical operators of the form $L\otimes h$, where $h\in H_0(\cC)$. 
One can loosely interpret these $L\otimes h$ as a line of logical operators that we will measure in parallel.
The main idea of this generalization is to lift the results for the compacted coned code in Section~\ref{sec:hgp-code} to a compacted coned code that performs surgery on a quantum code.
We can apply the results of Lemma~\ref{lem:graph-surgery} to measure a logical operator $L\in H_1(Q_\bullet)$, since $Q_\bullet$ is a three-term chain complex.
This creates the coned code, $\cone(q[i])$, where $q[i]:\cA[i]\rightarrow Q$.
The deformed code that performs fast surgery is the tensor product code of $\cone(q[i])$ with the classical code $\cC$.
The same arguments of preserved distance of the compacted cone follow from Propositions~\ref{prop:compacted-cone-0-codistance} and~\ref{prop:1D-cone-compacted-distance} because their proofs only use the homology and expansion conditions of $\cone(q[i])$, which are guaranteed by Lemma~\ref{lem:graph-surgery}.
One can invoke equation~\eqref{eq:product-dist} to argue that the final compacted code which is a tensor product of the compacted cone with a classical code, will also have high distance.

We can similarly measure planes of logical operators in parallel for higher-dimensional hypergraph product codes. 
In contrast to general homological product codes, these codes are products of many classical codes.
Suppose we take the hypergraph product of three classical codes, $\cC:C_1\rightarrow C_0$,  $\cD:D_1\rightarrow D_0$, and  $E_\bullet:E_1\rightarrow E_0$. 
Let us assign X checks to $C_0\otimes D_0\otimes E_0$, qubits to $(C_1\otimes D_0\otimes E_0)\oplus (C_0\otimes D_1\otimes E_0) \oplus (C_0\otimes D_0\otimes E_1)$, $Z$ checks to $(C_1\otimes D_1\otimes E_0)\oplus (C_1\otimes D_0\otimes E_1)\oplus (C_0\otimes D_1\otimes E_1)$, and finally, meta checks on $Z$ checks to $C_1\otimes D_1\otimes E_1$.
An example of a plane of logicals that we can measure is $c_i\otimes h_D\otimes h_E$, where $h_D\in H_0(\cD)$ and $h_E\in H_0(E_\bullet)$, corresponds to measuring, in parallel, $Z$ logical operators whose canonical representatives~\cite{xu2024fast} have support on a plane with dimensions $k_C\times k_{D^\top }\times k_{E^\top }$ grid in $C_1\otimes D_0\otimes E_0$.
Using the same techniques in Section~\ref{sec:hgp-code}, we can make a 1-dimensional coned code $\cone(g)$ that measures a logical operator of the classical code $\cC$, $c_i\in H_1(\cC)$.
The deformed code is the tensor product of this coned code $\cone(g)$, with the other two classical codes, $\cD$ and $E_\bullet$. 
From the application of Lemma~\ref{lem:distance_of_product} and Lemma~\ref{lem:Kunneth}, our claims on the compacted code homology and distance (Prop.~\ref{prop:compacted-code-dist}) generalize because we are only taking another tensor product with code $E_\bullet$.

\paragraph{Batched computation.}

On 2D HGP codes, our gadgets perform parallel measurement on rows (or columns) of logical qubits. 
They currently do not have the flexibility to measure more complicated patterns of qubits.
Nevertheless,
we can use these row and column-wise parallel operations to run copies of the same logical circuit, following the batched computation perspective from Ref.~\cite{xu2025batched}.
Batched execution of logical circuits involves running copies of the same logical circuit in parallel.
The main benefit is that fault-tolerantly performing the same operation or logical measurement on many copies of logical qubits is cheaper than doing the same number of operations serially.
We can apply this perspective by using our row and column-wise measurements to perform the same $X$ or $Z$ measurement on many logical qubits in parallel, where each measurement occurs on a separate logical circuit.

For example, let the set of logical qubits used in a logical circuit we wish to perform be $\{L_i\}$, $i\in [N]$.
Suppose we want to run this circuit many times, $T$. 
The natural thing to do is to serialize these $T$ logical circuit runs.
That is, once one logical circuit has finished executing, then we run the same logical circuit again.
This means that our system must have enough qubits to encode and perform logic on $N$ logical qubits.
In this setting of serialized logical circuit runs, to make use of our fast surgery gadgets, the logical circuit we want to run needs to be compiled into uniform Pauli measurements on rows of logical operators.

Instead of running this circuit serially, one can run $k_C$ copies of the logical circuit simultaneously, on different sets of $N$ logical qubits. 
This means that we must encode and perform logic on at least $Nk_C$ logical qubits at any given time.
Let us index these logical qubits $\{L_{it}\}$ where $i\in[N]$ and $t\in [k_C]$.
Then, for fixed $i$, in other words, a fixed logical qubit in the circuit, we encode logical qubits $\{L_{it}|\ t\in [k_C]\}$ on a row of logical qubits in a $(C\otimes D)_\bullet$ hypergraph product code.
Different batches of qubits corresponding to different $i\in [N]$ can be encoded in different HGP code blocks.
By using our fast surgery gadgets to measure $Z$ operators on rows in parallel, we can execute the same $Z$ measurement on all parallel instances of the logical circuit.
This strategy opens new possibilities to use our fast surgery gadgets with fewer requirements on the structure of the logical circuit, at the cost of requiring a larger scale of physical qubits to execute the circuit instances in parallel.

In comparison to the batched computation developed in Ref.~\cite{xu2025batched}, we make two further remarks. 
One notable challenge with using our gadgets for batched computation is that the $X$ basis measurements act on columns of logical qubits, while the $Z$ basis measurements act on rows of logical qubits. 
Consequently, if we encode an instance of the logical circuit on a column of logical qubits, which enable batched $Z$-basis measurements, we cannot perform batched $X$-basis measurements directly. 
It will be interesting to explore methods to circumvent this alignment issue, such as teleporting transversal Hadamard gates into the code block. 
We note that this alignment issue does not occur in Ref.~\cite{xu2025batched} as they considered homological product of two distance-$d$ quantum codes.
In exchange, however, a challenge with implementing the batched computation in Ref.~\cite{xu2025batched} is the large number of physical qubits needed (despite the rate being constant asymptotically). 
Our gadgets can be implemented with a much smaller number of physical qubits, as they operate on as few as one 2D HGP code block.
For this reason, we believe that our fast surgery gadgets are well-suited for batched computation in practice.

\paragraph{Practical Prospects.}
Asymptotically, our gadget incurs a poly-logarithmic space overhead, which comes from the overhead in Lemma~\ref{lem:graph-surgery}. 
In practice, we expect the space overhead to be a small constant, as we have observed in several studies~\cite{cross2024improved,williamson2024low-overhead,Ide_2025,yoder2025tour,webster2025explicit}.
On the other hand, throughout this work we have not discussed the decoding task for constant-time surgery. 
We leave this question for future works to explore, and note that the decoding efficiency is critical to the practical performance of our scheme.

\section{Intuitive example: Constant spacetime overhead and addressable toric code measurements}
\label{sec:toric-code}
    
\begin{figure*}[h!]
    \centering
    \includegraphics[width=.45\linewidth]{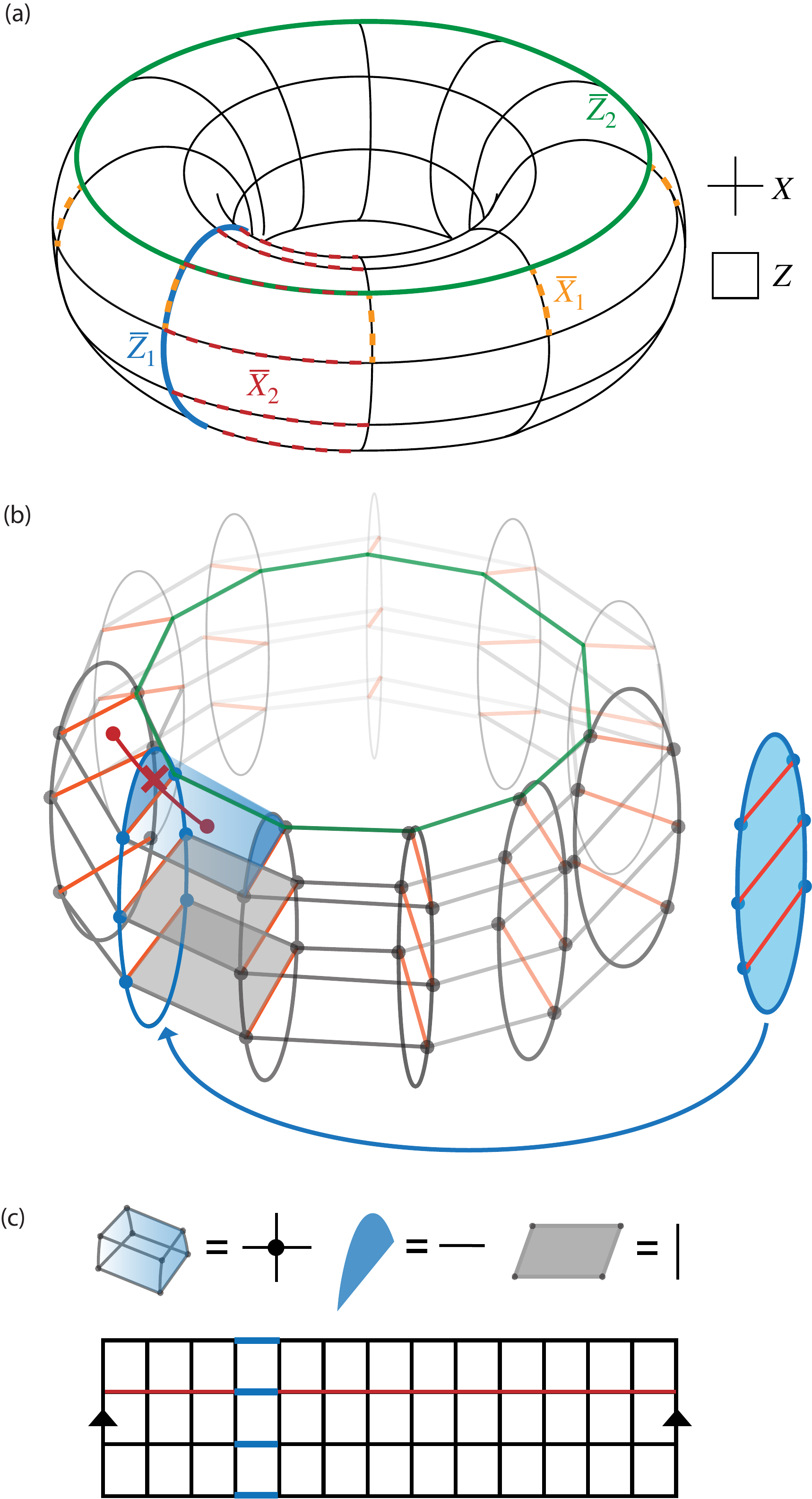}
    \caption{(a) A toric code, with qubits as edges, $X$ checks assigned to vertices, and $Z$ checks assigned to faces. 
    $\bar{Z}_1$ (solid blue), $\bar{X}_1$ (dashed orange), $\bar{Z}_2$ (solid green), and $\bar{X}_2$ (dashed red) logical operators are highlighted. 
    (b) $\cone(f[i])$ code for measuring the $\bar{Z}_1$ logical (outlined in light blue), whose minimum-weight representatives are in $ (C_1\otimes D_0)$.
    Notice that $\bar{Z}_2$, highlighted in green, remains in the homology of this filled torus and thus, unmeasured.
    Gray edges are qubits in the spaces $(C_0\otimes D_1)\oplus (C_1\otimes D_0)$, which are associated with qubits from the base toric code, $(C\otimes D)_\bullet$. 
    Orange edges are qubits in $(G_0\otimes D_0)\oplus (G_{-1}\otimes D_1)$ which are qubits from the ancilla complex $(G\otimes D)_\bullet$.
    Faces are assigned to $Z$ checks. 
    The faces on the surface of the filled torus are the $Z$ checks of the base toric code. Additional faces have been introduced, including solid blue checks that fill each minimum-weight $\bar{Z}_1$ loop. 
    These checks are in $ G_1\otimes D_0$ and multiply to equal a $\bar{Z}_1$ representative (see filled $\bar{Z}_1$ loop on the right). 
    Checks in $G_0\otimes D_1$ are highlighted in gray.
    Vertices are assigned to $X$ checks. 
    Note that the toric code's $X$ checks are deformed into weight-five checks which include qubits from the ancilla complex (orange edges). 
    (c) Decoding graph of $\partial_{\cone(f[i]),3}=M_Z$, where vertices are meta checks (volumes in (b)) and errors may occur on edges, which are $Z$ check measurements (faces in (b)).
    In particular, vertical edges are checks in $G_1\otimes D_0$ (blue faces in (b)) and horizontal edges are checks in $G_0\otimes D_1$ (gray faces in (b)).
    This decoding graph is a square lattice with one periodic boundary condition.
    The outlined blue edges are $Z$ checks whose product is the blue $\bar{Z}_1$ representative in (b).
    To flip the logical measurement and invalidate no meta checks, one needs a string error (red) that extends all the way around the decoding graph, and is thererfore a measurement error with weight $d$. 
    }
    \label{fig:toric-code}
\end{figure*}

This section presents an intuitive example of a fast surgery ancilla system on the toric code.
Although related to the constructions presented in section~\ref{sec:construction}, these gadgets are fault-tolerant even without any expansion on the base code or the ancilla complex maps (see Appendix~\ref{apdx:toric-code-proof}).
Further, we find that these gadgets give asymptotically constant space-time overhead per logical measurement.

In figure~\ref{fig:toric-code}~(a), we illustrate a base toric code, with data qubits on edges, $Z$ checks on faces, and $X$ checks are vertices. 
To connect this to our construction for general hypergraph product codes, let us also interpret this toric code as the hypergraph product of two cyclic repetition codes, $(C\otimes D)_\bullet$.
The space of qubits is $(C_1\otimes D_0)\oplus (C_0\otimes D_1)$, which we can interpret as the union of edges parallel to the meridional cycles and longitudinal cycles of the torus respectively.
For future reference, we highlight the edges in a minimum-weight representative of the logical $\bar{Z_1}$ in light blue, which we will measure. 
Additionally, this representative is completely in the support of qubits in $C_1\otimes D_0$.
We also highlight a representative of the $\bar{Z}_2$ logical in $(C_0\otimes D_1)$ in green solid lines, the $\bar{X_1}$ logical in $(C_1\otimes D_0)$ in orange with dotted lines, and the $\bar{X}_2$ logical in $(C_0\otimes D_1)$ in red with dotted lines.

From the lens of viewing lattice surgery as code deformation~\cite{vuillot2019code}, to measure a logical operator, we need to add this operator to the stabilizer group, such that measuring the stabilizers of the deformed code also measures the logical operator.
This means that the coned code, $\cone(f[i])$, illustrated in figure~\ref{fig:toric-code}(b), must include $\bar{Z}_1$ in the stabilizer group.
In~\ref{fig:toric-code}(b), once again qubits are on edges, $X$ checks are on vertices, and $Z$ checks are on faces. 
However in contrast to the toric code, there are additionally volumes, which we will see are meta checks on $Z$ checks.
The representative of $\bar{Z}_1$ that was highlighted in~\ref{fig:toric-code}(a) is also highlighted in light blue in~\ref{fig:toric-code}(b).
Illustrated to the right of the coned code is a slice of the filled torus that shows a set of blue $Z$ checks whose product is the $\bar{Z}_1$ representative, as the support of the checks on the orange edges cancel out.
This confirms that the logical operator $\bar{Z}_1$ can be inferred by the measurements of $Z$ checks. 
If we only examine this slice of the filled torus, this is exactly $\cone(g[i])$ (in equation~\eqref{eq:cone-Gi}), where the toric code qubits in blue are qubits in $C_1$, ancilla qubits in orange are in $G[i]_0$, and blue vertices are checks the $\cC$ repetition code, in $C_0$.
Finally, the highlighted blue $Z$ checks are in the space $G[i]_1$, whose support on the qubits in $\bar{Z}_1$ is given by the map $g[i]_1:G[i]_1\rightarrow C_1$.
For the particular $\cGi{i}$ in this example, there are no checks in $G[i]_{-1}$.

Suppose that $\cGi{i}$ was the only ancilla system we attached to the toric code and we measured checks in $G[i]_1$ for a single round.
Since the ancilla qubits in $ G[i]_0$ are initialized in $\ket{+}$, the measurement of each $G[i]_1$ check is random.
If a single one of these highlighted blue checks had a measurement flip, it is undetectable as the values of these checks are inherently random.
Worse, the inferred value of $\bar{Z}_1$ would also be flipped, resulting in the wrong logical measurement from a single measurement error.
How can we ensure that our logical operator measurement is resilient to measurement errors?
An intuitive solution is to make our measurement redundant by measuring other $\bar{Z}_1$ logical representatives.
This is exactly what we do by taking the product of $\cGi{i}$ with the repetition code $\cD$, and connecting it to the original toric code by the map $g[i]\otimes \id_D:(C\otimes D)_\bullet \rightarrow (G[i]\otimes D)_\bullet$. 
In~\ref{fig:toric-code}(b), one may interpret this construction as using copies of $\cGi{i}$ to `fill' the interior of each $\bar{Z}_1$.

For reference, we put the detailed chain complex of $\cone(f[i])=(\cone(g[i])\otimes D)_\bullet$ below to easily connect visual components of $\cone(f[i])$ in~\ref{fig:toric-code}(b) to spaces.

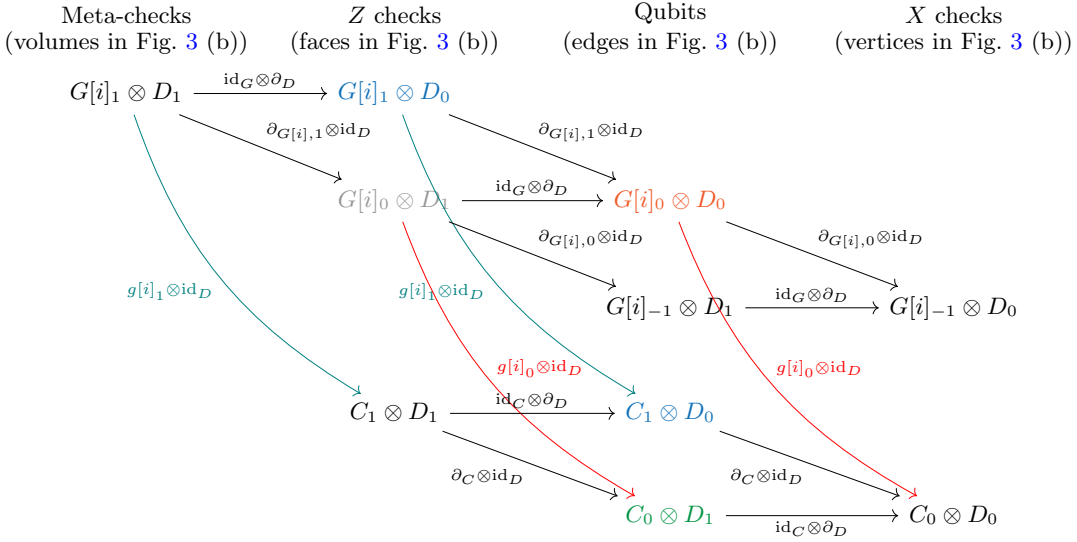
\begin{figure*}
    \begin{tikzcd}[column sep=5.5 em,row sep=large, /tikz/execute at end picture={
    \node (huge) [fit=(mz),label=above:{\shortstack{Meta-checks\\(volumes in Fig.~\ref{fig:toric-code}~(b))}}] {};
    \node (huge) [fit=(Z checks) ,label=above:{\shortstack{$Z$ checks\\(faces in Fig.~\ref{fig:toric-code}~(b))}}] {};
    \node (huge) [fit=(q), label=above:{\shortstack{Qubits\\(edges in Fig.~\ref{fig:toric-code}~(b))}}] {};
    \node (huge) [fit=(x),label=above:{\shortstack{$X$ checks\\(vertices in Fig.~\ref{fig:toric-code}~(b))}}] {};
  }]
  |[alias=mz]| G[i]_1\otimes D_1 \arrow[r, "\id_G\otimes \partial_D"] \arrow[rd, "\partial_{G[i],1}\otimes \id_D"] \arrow[to=c1d1, draw=teal, bend right=20, "\color{teal}g{[i]}_1 \otimes \id_D"']&
  |[alias=Z checks]| \textcolor{mylightblue}{G[i]_1\otimes D_0} \arrow[rd, "\partial_{G[i],1}\otimes \id_D"] \arrow[to=c1d0, bend right=20, draw=teal, "\color{teal}g{[i]}_1 \otimes \id_D"']&
  |[alias=q]| \phantom{G[i]_1\otimes D_0}&
  |[alias=x]| \phantom{G[i]_1\otimes D_0}&
  \\
  &
  \textcolor{mygray}{G[i]_0\otimes D_1} \arrow[r, "\id_G\otimes \partial_D"] \arrow[rd, "\partial_{G[i],0}\otimes \id_D"]\arrow[to=c0d1, bend right=20, draw=red, "\color{red}g{[i]}_0 \otimes \id_D"]&
  \textcolor{myorange}{G[i]_0\otimes D_0} \arrow[rd, "\partial_{G[i],0}\otimes \id_D"]\arrow[to=c0d0, bend right=20, draw=red, "\color{red}g{[i]}_0 \otimes \id_D"]
  \\
  &
  &
  G[i]_{-1}\otimes D_1 \arrow[r, "\id_G\otimes \partial_D"] &
  G[i]_{-1}\otimes D_0
  \\
  &
 |[alias=c1d1]|C_1\otimes D_1 \arrow[r, "\id_C \otimes \partial_{D}"] \arrow[rd, "\partial_C\otimes \id_D"'] &
 |[alias=c1d0]|\textcolor{mylightblue}{C_1\otimes D_0}  \arrow[rd, "\partial_C\otimes \id_D"'] &
 \\
 &
 &
 |[alias=c0d1]|\textcolor{mygreen}{C_0\otimes D_1} \arrow[r, "\id_C\otimes \partial_D"'] &
 |[alias=c0d0]|C_0\otimes D_0  &
\end{tikzcd}
\caption{The complete chain complex of a deformed code $\cone(f[i])=(\cone(g[i])\otimes D)_\bullet$. Spaces are color-coded to match the example in Fig.~\ref{fig:toric-code}. However, here we explicitly display all the spaces of a general coned code that are not shown in Fig.~\ref{fig:toric-code}, such as $G[i]_{-1}\otimes D_1$ and $G[i]_{-1}\otimes D_0$.}
\label{fig:explicit-toric-chain-map}
\end{figure*}
In~\ref{fig:toric-code}(b), we measure $\bar{Z}_1$ representatives through ancilla checks corresponding to faces whose product form a surface with a boundary corresponding to each $\bar{Z}_1$ representative.
These checks are in \textcolor{mylightblue}{$G[i]_1\otimes D_0$} in Fig.~\ref{fig:explicit-toric-chain-map}. 
Ancilla qubits are orange edges in~\ref{fig:toric-code}(b) and in \textcolor{myorange}{$G[i]_0\otimes D_0$} in Fig.~\eqref{fig:explicit-toric-chain-map}.
Additionally, we introduce horizontal $Z$ checks, shaded gray in the interior of the torus in~\ref{fig:toric-code}(b) and in \textcolor{mygray}{$G[i]_0\otimes D_1$} in Fig.~\ref{fig:explicit-toric-chain-map}.
The chain map of $\cone(f[i])$ shows that checks in \textcolor{mygray}{$G[i]_0\otimes D_1$} act on ancilla qubits in \textcolor{myorange}{$G[i]_0\otimes D_0$} as well as qubits of the original toric code in \textcolor{mygreen}{$C_0\otimes D_1$}.

Although checks in \textcolor{mygray}{$G[i]_0\otimes D_1$} are not strictly required to infer a measurement of $\bar{Z}_1$, the role of these checks is crucial in the following manner:
Suppose checks in \textcolor{mygray}{$G[i]_0\otimes D_1$} were not present, 
this is the same as measuring $d$~disjoint $\bar{Z}_1$ representatives in parallel using $d$ separate surgery ancillas.
The measurement outcome of each representative would be flipped with roughly the probability of a single measurement error occurring on any check in the product that equals $\bar{Z}_1$. 
As the weight of $\bar{Z}_1$ increases with growing toric code distance, so does the number of $Z$ checks \textcolor{mylightblue}{$G[i]_1\otimes D_0$} participating in the product representing $\bar{Z}_1$.
Consequently, as $d\rightarrow \infty$, the probability of a measurement flip on a $\bar{Z}_1$ measurement approaches $1$. 
Therefore, despite logical measurement errors only occurring after at least $ \lceil d/2\rceil $ measurement errors on separate representatives, the probability of a measurement error on a single representative approaches 1 asymptotically and thus overall, this measurement will only have a pseudo-threshold instead of a threshold.
Including $Z$ checks in \textcolor{mygray}{$G[i]_0\otimes D_1$}, induces constant-sized meta-checks, so that the check measurement errors are detected by meta-checks whose probability of flipping remains constant with growing code distance.

Meta-checks are volumes in~\ref{fig:toric-code}(b), and in $ G[i]_1\otimes D_1$ in Fig.~\ref{fig:explicit-toric-chain-map}.
As an example, take a measurement error of a single $Z$ check in \textcolor{mylightblue}{$G[i]_1\otimes D_0$}, denoted by a red `X'.
This results in the the flip of two meta-check volumes adjacent to the measurement error.
For this measurement error to go undetected, a string of measurement errors the length of the longitudinal cycle must occur around the torus.
The chain map of $\cone(f[i])$ formalizes how meta-checks check for measurement errors of checks in \textcolor{mylightblue}{$G[i]_1\otimes D_0$} $\oplus$ \textcolor{mygray}{$G[i]_0\otimes D_1$}, which is through the meta-check matrix $M_Z=(\id_G\otimes \partial_D)\oplus (\partial_{G[i],1}\otimes \id_D)$ where $\partial_D$ is simply another cyclic repetition code parity check matrix. 
This leads to the meta-check decoding graph in~\ref{fig:toric-code}(c), where meta-checks represented as vertices, and horizontal (vertical) edges are $Z$ checks in \textcolor{mylightblue}{$G[i]_1\otimes D_0$} (\textcolor{mygray}{$G[i]_0\otimes D_1$}). 
Undetected, logical measurement errors are codewords of $M_Z$, which are  closed loops that circumvent the ring in~\ref{fig:toric-code}(c).
An example is highlighted in red in~\ref{fig:toric-code}(c).
Technically, these meta-checks also have support on the original, face checks of the toric code, $C_1\otimes D_1$ (Fig.~\ref{fig:explicit-toric-chain-map}).
How would a decoder determine whether meta-check syndrome is due to measurement errors of \textcolor{mylightblue}{$G[i]_1\otimes D_0$} $\oplus$ \textcolor{mygray}{$G[i]_0\otimes D_1$} or from measurement errors of $C_1\otimes D_1$?
The solution is to use previous and future measurement outcomes of $C_1\otimes D_1$ to correct for measurement errors on $C_1\otimes D_1$ in $\cone(f[i])$.
Unlike the random outcomes of \textcolor{mylightblue}{$G[i]_1\otimes D_0$} $\oplus$ \textcolor{mygray}{$G[i]_0\otimes D_1$} checks, checks in $C_1\otimes D_1$ remain in the stabilizer group before and after switching to the coned code $\cone(f[i])$. 
As a result, their measurement outcomes may be used to construct detectors extending before and after the time-slice of $\cone(f[i])$, and we can use this to infer the location of measurement errors on $C_1\otimes D_1$. 
In summary, even though we only measure the stabilizers of $\cone(f[i])$ for a single round, $d$ rounds of syndrome history in the past and present are essential for allowing to correct for measurement errors on the $Z$ checks of the original code. 
The time-cost of these $d$ buffer rounds, however, can be amortized across
multiple logical measurements.

The space overhead of this ancilla system, is
\begin{align}
    &\dim(G[i]_0\otimes D_0) + \dim(G[i]_{-1}\otimes D_1)\notag\\
    &= \dim(G[i]_0)\dim(D_0)+ 0=O(d^2)
\end{align}
Thus the total space overhead of $\cone(f[i])$ is also $O(d^2)$, which is constant with respect to the space overhead of the base toric code, $O(d^2)$.

In Appendix~\ref{apdx:toric-code-proof}, we prove compacted code distance for these gadgets, which may also perform entangling measurements across copies of multiple toric codes. 
The lack of expansion of the map $\partial_{G[i],1}$ means that the proof for $d_1(\cc_\bullet)$ in proposition~\ref{prop:compacted-code-dist} must be modified. 
Fortunately, this means that we only have to prove the preservation of logical $Z$ distance in the compacted code.
A proof sketch is as follows.
We consider $d$, disjoint, and minimum-weight $\bar{X}_2$ logical representatives of the base toric code. 
For the general case, we impose a condition on the map $g[i]_1$ such that we can find an appropriate set of $d$ deformed and disjoint representatives for every $X$ logical conjugate to an unmeasured $Z$ logical.
In the example in Fig.~\ref{fig:toric-code}, we only need to consider $\bar{X}_2$.
Visually, we can verify that every new $Z$ check in $\cone(f[i]$) commutes with these particular $\bar{X}_2$ logical representatives, so they remain logical representatives in $\cone(f[i]$).
Since there are $d$, disjoint $\bar{X}_2$ representatives, we require that $\bar{Z}_2$, the conjugate logical $Z$, must have weight at least $d$ to anticommute with each $\bar{X}_2$ representative.
In Appendix~\ref{apdx:toric-code-proof}, we formally extend this argument to more unmeasured $Z$ logical operators on multiple toric code copies for logical measurements which may entangle multiple toric codes.

\section{Future directions}
\label{sec:future}

Our work demonstrates that careful construction of ancillary complexes enables addressable and parallel logical measurements in nearly constant spacetime for hypergraph product codes.
Our gadgets measure logical operators along rows or columns in parallel by taking a hypergraph product of a measurement graph with one of the original classical codes that define the base code.
Perhaps one can achieve further addressability within a given row or column of logical qubits by augmenting the original classical code before taking the product through careful puncturing~\cite{xu2024fast}. 
The challenge would be to prove that after such augmentation, the distance of the compacted code remains at least $d$.
We remark that for toric codes, each code block has few enough logical qubits such that one can perform CSS measurements that are arbitrarily addressable on any set of logical qubits.
When these measurements are interleaved with fold-transversal gates of the toric code, any Clifford circuit with depth at least $d$ could be executed with constant space-time overhead.
This would be an interesting example of extending algorithmic fault-tolerance~\cite{zhou2024algorithmic}, which has only been shown for surface codes with transversal gates, to lattice surgery on code blocks with more than one logical qubit.
We leave the proof of distance preservation of the toric fast surgery gadgets with intermittent transversal gates for future work.
A natural next direction is to find fast and low-overhead surgery gadgets for other codes such as bicycle bivariate codes and product codes with better distance scaling, namely lifted and balanced product codes.

\section*{Acknowledgements}
We thank Louis Golowich, Alexander Cowtan, John Blue and Qian Xu for insightful discussions.
Z.H. acknowledges support from the MIT Department of Mathematics, the MIT-IBM Watson AI Lab, and the NSF Graduate Research Fellowship Program under Grant No. 2141064.
K.C. acknowledges Alfred.

\bibliography{bib}

\clearpage
\onecolumngrid
\appendix

\section{Omitted Proofs From Constant-Time Surgery on HGP Codes}\label{apdx:proofs}

\coneproduct*
\begin{proof}
    From the definition of homological product of chain complexes and mapping cones, we have
    \begin{align}
         \cone(f\otimes \id_{D})_k
        &= (A\otimes D)_{k-1}\oplus (C\otimes D)_k \\
        &= \left( \bigoplus_{i+j = k-1}A_i\otimes D_j \right) \oplus \left( \bigoplus_{l+m = k} C_l\otimes D_m \right). \\
         \left(\cone(f)\otimes D\right)_k 
        &= \bigoplus_{p+q = k} \cone(f)_p\otimes D_q \\
        &= \bigoplus_{p+q = k} (A_{p-1}\oplus C_p)\otimes D_q \\
        &= \bigoplus_{p+q = k} (A_{p-1}\otimes D_q)\oplus (C_p\otimes D_q).
    \end{align}
    We see that the graded vector spaces of these two complexes are isomorphic.
    To check their boundary maps, take $p,q$ such that $p+q = k$. For element $a\otimes d\in A_p\otimes D_q$, we see that
    \begin{align}
        \partial_{\cone(f\otimes \id_{D}), k+1}(a\otimes d) 
        &= f_p\otimes \id_{D_q}(a\otimes d) - \partial_{A\otimes D,k}(a\otimes d) \\
        &= f_p(a)\otimes d  - \partial_{A,p}(a)\otimes d - (-1)^p a\otimes \partial_{D,q}(d). \\
         \partial_{\cone(f)\otimes D, k+1}(a\otimes d) 
        &= \partial_{\cone(f),p+1}(a)\otimes d + (-1)^{p+1}a\otimes \partial_{D, q}(d) \\
        &= (f_p(a) - \partial_{A, p}(a))\otimes d + (-1)^{p+1}a\otimes \partial_{D, q}(d).
    \end{align}
    Note that in evaluating $\partial_{\cone(f\otimes \id_{D}), k+1}(a\otimes d)$, we considered the action of chain map $f\otimes \id_{D}$ and of the boundary map of $(A\otimes D)_\bullet$, then expanded the second term. 
    In evaluating $\partial_{\cone(f)\otimes D, k+1}(a\otimes d)$, we first expanded the boundary map of homological product, then expanded the boundary map from a mapping cone. 
    We see that the two evaluations are equal.
    
    Similarly, for element $c\otimes d\in C_p\otimes D_q$, we have
    \begin{align}
        \partial_{\cone(f\otimes \id_{D}), k}(c\otimes d)
        = \partial_{C\otimes D, k}(c\otimes d) 
    \end{align}
    and 
    \begin{align}
         \partial_{\cone(f)\otimes D, k}(c\otimes d) 
        &= \partial_{\cone(f),p}(c)\otimes d + (-1)^{p}c\otimes \partial_{D, q}(d) \\
        &= \partial_{C,p}(c)\otimes d + (-1)^{p}c\otimes \partial_{D, q}(d) \\
        &= \partial_{C\otimes D, k}(c\otimes d).
    \end{align}
    We therefore conclude that the two chain complexes are isomorphic, where the isomorphism is simply the identity map between their graded vector spaces. 
\end{proof}

\CompactedConeDistance*
\begin{proof}
For easier reference, we redraw the coned complex as in equation~\eqref{eq:cone-G}.
\begin{equation}
\begin{tikzcd}[column sep=large,row sep=large]
|[alias=A1]| \oplus_{i\in [t]} G[i]_1 
\arrow[r, "\partial_{G,1}"] 
\arrow[rd, "\bar{g}_1"'] &
|[alias=A0]| \oplus_{i\in [t]} G[i]_0 
\arrow[r, "\partial_{G,0}"] 
\arrow[rd, "\bar{g}_0"'] &
|[alias=Am1]| \oplus_{i\in [t]} G[i]_{-1} \\
|[alias=C2] |0 \arrow[r, "0"'] &
|[alias=C1]| C_1 \arrow[r, "\partial_{C}"'] &
|[alias=C0]| C_0
\end{tikzcd}
\end{equation}
The fact that $d^1(\cone(\bar{g})) \ge 1$ trivially holds, therefore we focus on analyzing $d_1(\cone(\bar{g}))$. 
From Proposition~\ref{prop:1D-cone-compacted-homology}, we know that the elements in $H_1(\cone(\bar{g}))$ are precisely the elements in $H_1(\cC)$ which are not measured by the ancilla systems $\cGi{i}$. 
Let $c_{t+1}, \cdots, c_{k_C}$ be a linearly independent basis of $\ker(\partial_C)/\Span\{\{c_1\},\cdots, \{c_t\}\}$.
Each vector $c_j$ for $t+1\le j\le k_C$ corresponds to a chain $(c_j,0,\cdots, 0)\in H_1(\cone(\bar{g}))$, where the $0$s are in the spaces $G[i]_0$.
It suffices for us to analyze how these chains deform given the newly added space $\oplus_{i\in [t]} G[i]_1$.

Fix such a chain $(c_j,0,\cdots, 0)$, consider an arbitrary element $(v_1, \cdots, v_t)\in \oplus_{i\in [t]} G[i]_1$ where $v_i \in G[i]_1$. 
We have 
\begin{equation}
\begin{split}
    \partial_{G, 1}((v_1, \cdots, v_t)) + (c_j,0,\cdots, 0) 
    &= (c_j + \sum_{i\in [t]}g[i]_1(v_i), \partial_{G[1],1}(v_1), \cdots, \partial_{G[t],1}(v_t)).
\end{split}
\end{equation}
From Lemma~\ref{lem:graph-surgery}, we know that $\partial_{G[i],1}$ has boundary Cheeger constant at least $1$. 
This implies that for all $i\in [t]$, 
\begin{align}
    |\partial_{G[i],1}(v_i)|\ge \min(|v_i|, \dim(G[i]_1) - |v_i|).
\end{align}
Since the maps $g[i]_1$ are $1$-sparse, and $g[i]_1$ applied to the all $1$s vector on $G[i]_1$ gives $c_i$, we see that $\dim(G[i]_1)\ge |c_i|$ and $|v_i|\ge g[i]_1(v_i)$.
It suffices for us to show that 
\begin{equation}
\begin{split}
\left|(c_j + \sum_{i\in [t]}g[i]_1(v_i), \partial_{G[1],1}(v_1), \cdots, \partial_{G[i],1}(v_i)) \right| 
&= \left| c_j + \sum_{i\in [t]}g[i]_1(v_i) \right| + \sum_{i\in [t]} |\partial_{G[i],1}(v_i)|\ge d. 
\end{split}
\label{eq:app-dist-inequality}
\end{equation}

\noindent Without loss of generality, suppose for $i\le \ell$ we have $|\partial_{G[i],1}(v_i)| = |v_i|$ and for $\ell+1\le i\le t$ we have $|\partial_{G[i],1}(v_i)| = \dim(G[i]_1) - |v_i|$.
By the triangle inequality, we have 
\begin{align}
\left| c_j + \sum_{i\in [t]}g[i]_1(v_i) \right| + \sum_{i\in [t]} |\partial_{G[i],1}(v_i)| 
&\ge \left| c_j + \sum_{i > \ell}g[i]_1(v_i) \right| - \sum_{i\le \ell}|g[i]_1(v_i)| + \sum_{i\in [t]} |\partial_{G[i],1}(v_i)|, \\
&\ge \left| c_j + \sum_{i > \ell}g[i]_1(v_i) \right| + \sum_{i > \ell} |\partial_{G[i],1}(v_i)|, \\
&\ge \left| c_j + \sum_{i > \ell}g[i]_1(v_i) \right| + \sum_{i > \ell} (\dim(G[i]_1) - |v_i|).
\label{eq:dist-inequality-1}
\end{align}
Note that the support of $g[i]_1(v_i)$ is strictly contained in the support of $c_i$. 
Let us write
\begin{align}
    c_j + \sum_{i > \ell}g[i]_1(v_i)
    &= c_j + \sum_{i > \ell} (c_i + (c_i - g[i]_1(v_i))).
\end{align}
Note here that the signs are flexible since we are working over $\bF_2$.
Let $\tilde{c} = c_j + \sum_{i > \ell} c_i$. We see that $\tilde{c}\in \ker(\partial_C)$, which means $|\tilde{c}|\ge d$. In other words, $\tilde{c}$ is another logical operator of the code $\cC$ and in $C_1$, so it therefore must have weight at least $q d$.
By the reverse triangle inequality, we have 
\begin{align}
    \left| c_j + \sum_{i > \ell}g[i]_1(v_i) \right|
    &=\left| \tilde{c} - \sum_{i>l}(c_i - g[i]_1(v_i))\right|\ge |\tilde{c}| - \sum_{i > \ell} |c_i - g[i]_1(v_i)|.
    \label{eq:dist-inequality-2}
\end{align}
Let $\mathbb{1}_i$ denote the all ones vector on $G[i]_1$, and note that $|\mathbb{1}_i - v_i| = \dim(G[i]_1) - |v_i|$. Moreover, $g[i]_1(\mathbb{1}_i - v_i) = c_i - g[i]_1(v_i)$. Since $g[i]_1$ is $1$-sparse, we have 
\begin{align}
    |c_i - g[i]_1(v_i)| &= |g[i]_1(\mathbb{1}_i - v_i)| 
    \le |\mathbb{1}_i - v_i| 
    =\dim(G[i]_1) - |v_i|
    \label{eq:dist-inequality-3}
\end{align}
Revisiting the left-hand side of the inequality in Eq.~\eqref{eq:app-dist-inequality}, and using Eqs.~\eqref{eq:dist-inequality-1},~\eqref{eq:dist-inequality-2}, and~\eqref{eq:dist-inequality-3}, we have
\begin{align}
    \left| c_j + \sum_{i\in [t]}g[i]_1(v_i) \right| + \sum_{i\in [t]} |\partial_{G[i],1}(v_i)|
    &\ge |\tilde{c}| - \sum_{i > \ell} (\dim(G[i]_1) - |v_i|)  + \sum_{i > \ell} (\dim(G[i]_1) - |v_i|)\ge|\tilde{c}|\ge d.
\end{align}
This proves our claim that $d_1(\cone(\bar{g})) \ge d$.
\end{proof}

\subsection{Relaxation to Relative Expansion}

We now relax our proof of Proposition~\ref{prop:1D-cone-compacted-distance} to using a weaker notion of expansion, called relative expansion, introduced in Ref.~\cite{swaroop2024universal}.

\begin{definition}[Relative Expansion]\label{def:relative_exp}
    Consider a matrix $M: \bF_2^n\rightarrow \bF_2^m$ such that the all ones vector $v$ is in the kernel of $M$. 
    Let $P\subseteq [n]$ be a subset of indices in $\bF_2^n$. 
    For a vector $v\in \bF_2^n$, let $S(v)$ denote the set of indices that $v$ has ones on.
    Let $t$ be an integer. 
    We define the relative Cheeger constant $\beta_t(M,P)$ to be the maximum value such that for all $v\in \bF_2^n$, we have
    \begin{align}
        |Mv| \ge \beta_t(M,P)\cdot \min(t, |S(v)\cap P|, |P\setminus S(v)|).
    \end{align}
\end{definition}

\begin{proposition}\label{prop:dist_relative_exp}
    Suppose the ancillary complexes $G[i]$ satisfy
    $\beta_d(G[i]_i,P_i) \ge 1$ for all $i$, where $P_i$ corresponds to the indices in $G[i]_1$ which are $1$-to-$1$ connected to $c_i$ by $\partial_{G[i],1}$. In other words,
    \[
    |\partial_{G[i],1}(v_i)| \ge  \min(d, |S(v_i)\cap P_i|, |P_i\setminus S(v_i)|).
    \]
    Then $d_1(\cone(\bar{g})\ge d$. 
\end{proposition}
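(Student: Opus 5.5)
I will follow the proof of Proposition~\ref{prop:1D-cone-compacted-distance} closely, replacing the global Cheeger bound with the relative one and adding a case where the cap $d$ is active. As there, Proposition~\ref{prop:1D-cone-compacted-homology} reduces the claim to bounding, for every unmeasured class $c_j\notin\Span\{c_1,\dots,c_t\}$ (more precisely any $c\in\ker(\partial_C)$ outside that span) and every $(v_1,\dots,v_t)\in\oplus_i G[i]_1$, the weight
\begin{align}
W := \Bigl|c + \sum_{i\in[t]} g[i]_1(v_i)\Bigr| + \sum_{i\in[t]}|\partial_{G[i],1}(v_i)|
\end{align}
from below by $d$. Since $g[i]_1$ is $1$-sparse and only the indices in $P_i$ are mapped (one-to-one) onto $\mathrm{supp}(c_i)$, I write $g[i]_1(v_i)=g[i]_1(v_i|_{P_i})$. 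Any component of $v_i$ outside $P_i$ contributes nothing to the $C_1$ part, so it only matters through $|\partial_{G[i],1}(v_i)|$, where it is already accounted for by the relative bound. Note also that $|g[i]_1(v_i)| = |S(v_i)\cap P_i|$ and $|c_i - g[i]_1(v_i)| = |P_i\setminus S(v_i)|$.

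The first case is when some index $i$ has $\min(d,|S(v_i)\cap P_i|,|P_i\setminus S(v_i)|)=d$. Then $|\partial_{G[i],1}(v_i)|\ge d$, so $W\ge d$ immediately. Otherwise, every $i$ has its minimum attained by one of the two $P_i$-terms. I partition $[t]$ into $i\le\ell$, where $|\partial_{G[i],1}(v_i)|\ge |S(v_i)\cap P_i| = |g[i]_1(v_i)|$, and $i>\ell$, where $|\partial_{G[i],1}(v_i)|\ge |P_i\setminus S(v_i)| = |c_i-g[i]_1(v_i)|$. The chain of inequalities \eqref{eq:dist-inequality-1}--\eqref{eq:dist-inequality-3} then goes through verbatim with $\dim(G[i]_1)-|v_i|$ replaced by $|P_i\setminus S(v_i)|$. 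The triangle inequality removes the $i\le\ell$ terms from the $C_1$ part at a cost of at most $\sum_{i\le\ell}|\partial_{G[i],1}(v_i)|$. Writing $\tilde c = c+\sum_{i>\ell}c_i$, the reverse triangle inequality gives
\begin{align}
W\ge |\tilde c| - \sum_{i>\ell}|c_i - g[i]_1(v_i)| + \sum_{i>\ell}|\partial_{G[i],1}(v_i)| \ge |\tilde c|.
\end{align}
Because $c$ lies outside $\Span\{c_i\}$, $\tilde c$ is a nonzero element of $\ker(\partial_C)$, so $|\tilde c|\ge d$.

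The step needing the most care is the bookkeeping of the case split. The relative expansion is a minimum of three quantities, so I must make sure that "otherwise" really gives one of the two $P_i$-bounds for each $i$ individually; it does, because the cap only appears when it is the minimum. I must also confirm that the identity $g[i]_1(\mathbb{1}_{P_i}) = c_i$ holds, which follows from condition~\ref{surgery-cond:ancilla-homology} and the one-to-one correspondence between $P_i$ and $\mathrm{supp}(c_i)$. Finally, the statement's hypothesis is written $\beta_d(G[i]_i,P_i)$, which I read as $\beta_d(\partial_{G[i],1},P_i)$ with $t=d$ in Definition~\ref{def:relative_exp}. The bound $d^1\ge1$ is trivial as before, and this concludes $d_1(\cone(\bar g))\ge d$.
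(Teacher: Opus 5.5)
Your proposal is correct and follows essentially the same route as the paper: dispose of the case where some $|\partial_{G[i],1}(v_i)|\ge d$, then split the remaining indices by which $P_i$-term of the relative bound applies, and run the triangle and reverse-triangle chain down to $|\tilde c|\ge d$. Your version is slightly cleaner than the paper's. It uses inequalities where the paper writes equalities, replaces the paper's leftover $\dim(G[i]_1)-|v_i|$ by $|P_i\setminus S(v_i)|$, and ranges over every $c\in\ker(\partial_C)$ outside $\Span\{c_1,\dots,c_t\}$ rather than only over basis vectors.
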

\begin{proof}
We follow the same proof as in Proposition~\ref{prop:1D-cone-compacted-distance}, and again refer to the complex in equation~\eqref{eq:cone-G}.
Let $c_{t+1}, \cdots, c_{k_C}$ be a linearly independent basis of $\ker(\partial_C)/\Span\{\{c_1\},\cdots, \{c_t\}\}$.
Each vector $c_j$ for $t+1\le j\le k_C$ corresponds to a chain $(c_j,0,\cdots, 0)\in H_1(\cone(\bar{g}))$, where the $0$s are in the spaces $G[i]_0$.
It suffices for us to analyze how these chains deform given the newly added space $\oplus_{i\in [t]} G[i]_1$.

Fix a chain $(c_j,0,\cdots, 0)$, consider an arbitrary element $(v_1, \cdots, v_t)\in \oplus_{i\in [t]} G[i]_1$ where $v_i \in G[i]_1$. 
We have 
\begin{equation}
\begin{split}
    \partial_{G, 1}((v_1, \cdots, v_t)) + (c_j,0,\cdots, 0)
    &= (c_j + \sum_{i\in [t]}g[i]_1(v_i), \partial_{G[1],1}(v_1), \cdots, \partial_{G[t],1}(v_t)).
\end{split}
\end{equation}
It suffices for us to show that 
\begin{equation}
\begin{split}
\left|(c_j + \sum_{i\in [t]}g[i]_1(v_i), \partial_{G[1],1}(v_1), \cdots, \partial_{G[i],1}(v_i)) \right| 
&= \left| c_j + \sum_{i\in [t]}g[i]_1(v_i) \right| + \sum_{i\in [t]} |\partial_{G[i],1}(v_i)|\ge d. 
\end{split}
\label{eq:app-dist-inequality-relative-expansion-prop}
\end{equation}

\noindent Given the relative expansion properties of $\partial_{G[i],1}$, we have for all $i\in [t]$,
\begin{align}
    |\partial_{G[i],1}(v_i)| \ge  \min(d, |S(v_i)\cap P_i|, |P_i\setminus S(v_i)|).
\end{align}

\noindent If there is $i$ such that $|\partial_{G[i],1}(v_i)|\ge d$, then equation~\eqref{eq:app-dist-inequality-relative-expansion-prop} holds. 
Therefore, without loss of generality, suppose for $i\le \ell$ we have 
\begin{align}
    |\partial_{G[i],1}(v_i)| = |S(v_i)\cap P_i| = |g[i]_1(v_i)|,
\end{align}
and for $\ell+1\le i\le t$ we have 
\begin{align}
    |\partial_{G[i],1}(v_i)| = |P_i\setminus S(v_i)| = |c_i| - |g[i]_1(v_i)|.
\end{align}
By the triangle inequality, we have 

\begin{align}
\left| c_j + \sum_{i\in [t]}g[i]_1(v_i) \right| + \sum_{i\in [t]} |\partial_{G[i],1}(v_i)| 
&\ge \left| c_j + \sum_{i > \ell}g[i]_1(v_i) \right| - \sum_{i\le \ell}|g[i]_1(v_i)| + \sum_{i\in [t]} |\partial_{G[i],1}(v_i)|, \\
&\ge \left| c_j + \sum_{i > \ell}g[i]_1(v_i) \right| + \sum_{i > \ell} |\partial_{G[i],1}(v_i)|, \\
&\ge \left| c_j + \sum_{i > \ell}g[i]_1(v_i) \right| + \sum_{i > \ell} (|c_i| - |g[i]_1(v_i)|).
\label{eq:dist-inequality-1-relative-expansion-prop}
\end{align}

Let us write
\begin{align}
    c_j + \sum_{i > \ell}g[i]_1(v_i)
    &= c_j + \sum_{i > \ell} (c_i + (c_i - g[i]_1(v_i))).
\end{align}
Note here that the signs are flexible since we are working over $\bF_2$.
Let $\tilde{c} = c_j + \sum_{i > \ell} c_i$. We see that $\tilde{c}\in \ker(\partial_C)$, which means $|\tilde{c}|\ge d$. In other words, $\tilde{c}$ is another logical operator of the code $\cC$ and in $C_1$, so it therefore must have weight at least $q d$.
By the reverse triangle inequality, we have 
\begin{align}
    \left| c_j + \sum_{i > \ell}g[i]_1(v_i) \right|
    &=\left| \tilde{c} - \sum_{i>l}(c_i - g[i]_1(v_i))\right|\ge |\tilde{c}| - \sum_{i > \ell} |c_i - g[i]_1(v_i)|.
\end{align}

Combining this with equation~\eqref{eq:dist-inequality-1-relative-expansion-prop}, we have
\begin{align}
    \left| c_j + \sum_{i\in [t]}g[i]_1(v_i) \right| + \sum_{i\in [t]} |\partial_{G[i],1}(v_i)|
    &\ge |\tilde{c}| - \sum_{i > \ell} (\dim(G[i]_1) - |v_i|)  + \sum_{i > \ell} (\dim(G[i]_1) - |v_i|)\ge|\tilde{c}|\ge d.
\end{align}

This proves our claim that $d_1(\cone(\bar{g})) \ge d$.
\end{proof}

\section{Constant spacetime overhead surgery for the toric code}
\label{apdx:toric-code-proof}

The toric code gadgets we present follow the same construction in the main text, except the $\cG$ complex we use differs from Lemma~\ref{lem:graph-surgery} in that condition~\ref{surgery-cond:expansion} is relaxed.
Since $\partial_{G[i],1}$ does not have a boundary Cheeger constant of at least 1, we must use a different approach than the proof of proposition~\ref{prop:1D-cone-compacted-distance}, which relies on expansion of $\partial_{G[i],1}$ to show $d_1(\cone(\bar{g}))\ge d$ and therefore $d_1(\cc_\bullet)\ge d$.
Previously, the proof of proposition~\ref{prop:1D-cone-compacted-distance} considered the one-dimensional cone $\cone(g[i])$.
Instead, we will jump straight to proving the $d_1(\cone(f[i]))\ge d$, then iteratively construct the compacted code $\cc_\bullet = \cone(f)=\bigoplus_{i\in [t]}\cone(f[i])$.

Toric codes are hypergraph product codes where $\cC$ and $\cD$ are cyclic repetition codes. 
Let us assume that $n_C=n_D=d_C = d_D =m_C=m_D= d$. 
This similarly holds for the dual codes, $C^\bullet:C^0\xrightarrow[]{\delta^C}C^1$ and $D^\bullet:D^0\xrightarrow[]{\delta^D}D^1$, where $\partial_C=\delta^C$ and $\partial_D=\delta^D$.
For reference, the chain complex of the toric code is:

\begin{equation}
\begin{tikzcd}[column sep=large,row sep=large, /tikz/execute at end picture={
    \node (huge) [fit=(z),label=above:{Z checks}] {};
    \node (huge) [fit=(qv) ,label=above:{$q^\mathrm{v}$}] {};
    \node (huge) [fit=(qh), label=below:{$q^\mathrm{h}$}] {};
    \node (huge) [fit=(x),label=above:{X checks}] {};
  }]
|[alias=z]| C_1\otimes D_1 \arrow[r, "\id_C \otimes \partial_{D}"] \arrow[rd, "\partial_C\otimes \id_D"'] &
|[alias=qv]| C_1\otimes D_0   \arrow[rd, "\partial_C\otimes \id_D"'] &
\\
&
|[alias=qh] |C_0\times D_1 \arrow[r, "\id_C\otimes \partial_D"'] &
|[alias=x]| C_1 
\end{tikzcd}
\end{equation}
Let $(q^\mathrm{v}, q^\mathrm{h})\in (\mathbb{F}_2^{n_Cm_D},\mathbb{F}_2^{m_Cn_D})$ be a vector that denotes the support of qubits in the two spaces $C_1\otimes D_0$ and $C_0 \otimes D_1$. 
The logical operators of the toric code can be written as
\begin{align}
    &\{(c \otimes e_l, 0), (0,e_h\otimes d)\} \in H_1((C\otimes D)_\bullet),\label{eq:canonical-toric-zlogical}\\
    &\{(e_g\otimes \bar{d},0),(0,\bar{c}\otimes e_k)\}\in H^1((C\otimes D)^\bullet),
    \label{eq:canonical-toric-xlogical}
\end{align}
where $c,e_g\in \mathbb{F}_2^{n_c}$, $\bar{d},e_l\in \mathbb{F}_2^{m_d}$, $\bar{c},e_h\in \mathbb{F}_2^{m_c}$, and $d,e_k\in \mathbb{F}_2^{n_d}$.
The vectors $c,d,\bar{c},$ and $
\bar{d}$ are in $\ker(\partial_C)$, $\ker(\partial_D)$, $\ker(\delta^C)$, and $\ker(\delta^D)$ respectively. 
Specifically, $c=d=\bar{c}=\bar{d}=\mathbb{1}_d$, or the all ones column vector in $\mathbb{F}_2^d$.
Let $e_g$ be the $g$th unit vector in $ \mathbb{F}_2^{n_c}$, $e_k$ be the $k$th unit vector in $\mathbb{F}_2^{n_d}$, $e_h$ the $h$th unit vector in $\mathbb{F}_2^{m_c}$, and $e_l$ the $l$th unit vector in $\mathbb{F}_2^{m_d}$.

We are interested in the distance of the compacted code for $t$ surgery operations, which will contain measurements over multiple toric codes.
Take $M$ toric codes that the $t$ surgery operations will act on,
\begin{align}
    \bigoplus_{j\in [M]}(\cC\otimes \cD)^{(j)} 
    = (\bigoplus_{j\in [M]}\cC^{(j)})\otimes \cD 
    = \cC\otimes (\bigoplus_{j\in [M]}\cD^{(j)}).
\end{align} 
and let $\cC' = C_1'\xrightarrow[]{\partial_C'} C_0'$, where
\begin{align}
    C_i'=\bigoplus_{j\in [M]} C_i ^{(j)},\quad \quad \partial_C' = \bigoplus_{j\in [M]} \partial_{C}^{(j)}=I_M\otimes \partial_C.
\end{align}
The last equality for $\partial_C'$ uses that fact that $\partial_C^{(j)}=\partial_C^{(j')}$ for all $j\ne j'$.
$I_M$ is the $M\times M$ Identity matrix.
Since $(C'\otimes D)_\bullet$ are just many copies of the toric code, we can write the logical operators as
\begin{align}
    &\{(c'_m \otimes e_l, 0), (0,e_H\otimes d)\} 
    \in H_1((C\otimes D)_\bullet),
    \label{eq:extended-toric-zlogical}\\
    &\{(e_G\otimes \bar{d},0),
    (0,\bar{c}_m'\otimes e_k)\}
    \in H^1((C\otimes D)^\bullet).
    \label{eq:extended-toric-xlogical}
\end{align}
where $\{e_H\}_{H=1}^{M\cdot m_c}$ is the unit vector basis for $\mathbb{F}_2^{m\cdot m_C}$, which we rewrite was the tensor product of two unit vectors, $\{e_m\otimes e_h\}$. 
Here, $\{e_m\}_{m=1}^M$ is the unit vector basis of $\mathbb{F}_2^{M}$.
Similarly, $\{e_G\}_{G=1}^{m\cdot n_C}$ is the unit vector basis for $\mathbb{F}_2^{m\cdot n_C}$, which we also rewrite as $\{e_m\otimes e_g\}$.
Let $\{b_m\}$ for $m\in [M]$ be a different choice of basis for $\mathbb{F}_2^M$.
We can show that $\{c'_m\}_{m=1}^M$ and $\{\bar c'\}_{m=1}^M$, which are bases for $\ker(\partial'_C)$ and $\ker(\delta'^C)$, can be chosen as
\begin{align}
    c'_m = b_m\otimes c,\quad\quad \bar{c}'_m=b_m\otimes \bar{c}.
\end{align}
We walk through this argument for $\ker(\partial_C')$, since the same argument applies for $\ker(\delta'^C)$.
\begin{lemma}
    Let $\{b_m\}$ be a basis for $\mathbb{F}_2^M$. 
    Let $\partial'_C=I_M\otimes \partial_C$.
    Let $c\in \ker(\partial_C)$.
    Then,
    \[
    \ker(\partial'_C)=\mathrm{Span}\{b_m\otimes c\}.
    \]
\end{lemma}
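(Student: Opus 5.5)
The plan is to prove the two inclusions separately and use a dimension count to close the argument. Throughout, I use the fact established just above in this appendix: $\cC$ is the cyclic repetition code, so $\ker(\partial_C)$ is one-dimensional and spanned by $c=\mathbb{1}_d$. The lemma is really a statement about this case, since for a general $c\in\ker(\partial_C)$ the span of the $b_m\otimes c$ would only be a subspace of the kernel.

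For the inclusion $\Span\{b_m\otimes c\}\subseteq \ker(\partial'_C)$, I would apply the mixed-product rule for Kronecker products:
\begin{align}
(I_M\otimes \partial_C)(b_m\otimes c) = (I_M b_m)\otimes(\partial_C c) = b_m\otimes 0 = 0 .
\end{align}
Linearity then extends this to the whole span.

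For the reverse inclusion, I would describe $\ker(\partial'_C)$ directly. The matrix $I_M\otimes\partial_C=\bigoplus_{j\in[M]}\partial_C^{(j)}$ is block diagonal. Write a vector $x\in C'_1$ as $x=\sum_{m} e_m\otimes x_m$ with $x_m\in C_1$. Then $\partial'_C x=\sum_m e_m\otimes \partial_C x_m$, which vanishes exactly when every $x_m\in\ker(\partial_C)=\Span\{c\}$. Hence $\ker(\partial'_C)=\bF_2^M\otimes \Span\{c\}$, which has dimension $M$.

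To finish, I need $\{b_m\otimes c\}_{m\in[M]}$ to be linearly independent, so that its span, which sits inside $\ker(\partial'_C)$, also has dimension $M$ and therefore equals it. If $\sum_m \alpha_m\, b_m\otimes c = (\sum_m\alpha_m b_m)\otimes c=0$ and $c\neq 0$, then $\sum_m\alpha_m b_m=0$. Since $\{b_m\}$ is a basis, all $\alpha_m=0$.

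I do not expect a real obstacle. The only point needing care is to state explicitly that $\dim\ker(\partial_C)=1$, which is the assumption that makes $c$ span the kernel. The same argument, with $\partial_C$ replaced by $\delta^C$ and $c$ by $\bar c$, gives the analogous statement for $\ker(\delta'^C)$.
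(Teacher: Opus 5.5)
Your proof is correct and follows the paper's argument. The forward inclusion uses $(I_M\otimes\partial_C)(b_m\otimes c)=b_m\otimes\partial_C c=0$. The reverse inclusion decomposes $x$ blockwise and forces each block into $\ker(\partial_C)=\Span\{c\}$; the paper expands directly in the $b_m$ basis instead of using the standard basis plus a dimension count, which is only a cosmetic difference.

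Your explicit remark that the statement needs $c=\mathbb{1}_d\neq 0$ to span the one-dimensional kernel is the same assumption the paper uses implicitly when it writes that $\{c\}$ is the basis of $\ker(\partial_C)$.
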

\begin{proof}
    Take any $b_m\otimes c$ where $c\in \ker(\partial_C)$.
    We see that
    \begin{align}
        (I_M\otimes \partial_C)(b_m\otimes c)= b_m\otimes \partial_C c = 0,
    \end{align}
    so $b_m\otimes c\in \ker(\partial_C')$ for all $m\in [M]$.
    Now, take any element of $x\in \mathbb{F}^M_2\otimes C_1$,
    \begin{align}
        x=\sum_{m=1}^{M} b_m\otimes c_m
    \end{align}
    for some vectors $c_m\in C_1$.
    If $x$ is in the kernel of $\partial_C'$, then that means
    \begin{align}
        (I_M\otimes \partial_C)x = \sum_{m=1}^M b_m\otimes \partial_C c_m = 0
    \end{align}
    Since $\{b_m\}$ is a basis, then this means that $\ \partial_Cc_m=0$ for all $m$.
    Thus, $x\in \mathbb{F}_2^M\otimes\ker(\partial_C)$.
    Since $\{c\}$ is the basis for $\ker(\partial_C)$, then $\{b_m\otimes c\}$ is spanning and linearly independent.
\end{proof}
We will consider constant time measurements of products of $Z$ logicals of these $M$ toric codes. For example, take the $Z$ logical $((b\otimes c)\otimes e_l,0)$, where $b\in \mathbb{F}_2^{M}$. 
This operator is the product of the $Z$ logicals $((e_m\otimes c)\otimes e_l,0)$ over all indices $m$ such that $b_m=1$.
It is convenient to switch to a basis for $\ker(\partial'_C)$ that contains $b$ as a basis vector instead of $\{e_m\otimes c\}$.
This choice makes it straightforward to describe the remaining $Z$ logicals of the deformed code.
These logicals include $((b^\perp_i\otimes c)\otimes  e_l,0)$, where $\{b^\perp_i\}$ forms a basis for $\mathbb{F}_2^M/\ \mathrm{Im}(b)$.
In Lemma~\ref{lem:orthogonal-basis}, we show a useful commutation relation using this new basis for logical operators. 

\begin{lemma}\label{lem:orthogonal-basis}
    Let $B=\{b_1,\dots,b_t\}\subset \mathbb{F}_2^M$ be a set of linearly independent vectors.
    Define the orthogonal subspace 
    \begin{align}
        B^\perp:=\{x\in \mathbb{F}_2^M: x^\top b_i=0 \text{ for all } i\in [t]\}.
    \end{align}
    Then, $B^\perp$ is a subspace of dimension $M-t$, and therefore admits a basis $\{b_i^\perp\}$ of $M-t$ linearly independent vectors.
    It follows that logical operators
    $\bar{Z}_{b_j}^{\mathrm{v}}=((b_j\otimes c)\otimes e_l,0)$ for $b_j\in B$ commutes with every $X$ logical that has the form $\bar{X}^{\mathrm{v}}_i=((b^\perp_i\otimes e_g)\otimes \bar{d},0)$ for $b_i^\perp\in B^\perp$.
\end{lemma}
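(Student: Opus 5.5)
The statement has two parts: a dimension count for $B^\perp$, and a commutation relation that follows from one symplectic inner-product computation. I will handle them in that order.

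For the dimension, view $B^\perp$ as the kernel of the $t\times M$ matrix $\mathcal{B}$ whose rows are $b_1^\top,\dots,b_t^\top$. These rows are linearly independent by assumption, so $\mathrm{rank}(\mathcal{B})=t$. Rank--nullity then gives $\dim B^\perp = \dim\ker(\mathcal{B}) = M-t$. Any basis of this kernel is the desired basis $\{b^\perp_i\}_{i=1}^{M-t}$. Note that orthogonality over $\mathbb{F}_2$ may have $B\cap B^\perp\neq 0$, but the rank count does not care about this.

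For the commutation claim, both operators are supported only on the first block $C'_1\otimes D_0$ of qubits, since their second components vanish. So they commute exactly when the $\mathbb{F}_2$ inner product of their first components is zero. I will compute this inner product using the fact that the standard inner product on a tensor product of based spaces factorizes, i.e. $\langle x\otimes y\otimes z, x'\otimes y'\otimes z'\rangle = \langle x,x'\rangle\langle y,y'\rangle\langle z,z'\rangle$, with the grouping $\mathbb{F}_2^M\otimes C_1\otimes D_0$. Applying this gives
\begin{align}
\langle (b_j\otimes c)\otimes e_l,\ (b_i^\perp\otimes e_g)\otimes \bar d\rangle = (b_j^\top b_i^\perp)\,(c^\top e_g)\,(e_l^\top \bar d).
\end{align}
The first factor vanishes by the definition of $B^\perp$, so the product is $0$ whatever $c^\top e_g$ and $e_l^\top\bar d$ are (here both equal $1$, since $c=\bar d=\mathbb{1}_d$). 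Hence $\bar Z^{\mathrm{v}}_{b_j}$ and $\bar X^{\mathrm{v}}_i$ commute.

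The only care needed is bookkeeping. The identification $C'_1=\mathbb{F}_2^M\otimes C_1$ must match the one used for $\partial'_C=I_M\otimes\partial_C$, so that the basis of $C'_1\otimes D_0$ is indeed the product basis and the inner product factorizes. There is no real obstacle here. I would also remark that both operators are valid logicals by the preceding lemma and equation~\eqref{eq:extended-toric-xlogical}, so the statement is really about commutation of logical representatives.
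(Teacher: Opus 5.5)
Your proof is correct and matches the paper's argument: the dimension of $B^\perp$ comes from $t$ independent linear constraints (your rank--nullity count). Commutation follows from factoring the inner product over the tensor product, so the factor $(b_i^\perp)^\top b_j=0$ makes it vanish regardless of $c^\top e_g$ and $\bar d^\top e_l$.
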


\begin{proof}
    The subspace $B^\perp$ is defined by $t$ independent linear constraints on vectors in $\mathbb{F}_2^M$, so $\dim(B^\perp)=M-t$.
    Thus, this space must have a basis of $M-t$ linearly independent vectors.
    To get the last result, we can straightforwardly see the $\bar{Z}^{\mathrm{v}}_b$ and $\bar{X}_i^{\mathrm{v}}$ operators commute,
    \begin{align}
        {(\bar{X}^\mathrm{v}_i)}^\top  \bar{Z}_{b_j}^\mathrm{v}=((({b_i^\perp})^\top \otimes e_g^\top )\otimes \bar{d}^\top )( ({b_j}\otimes c)\otimes e_l)  + 0 = {b_i^\perp} ^\top  b_j\otimes e_g^\top  c\otimes \bar{d}^\top e_l=0,
    \end{align}
    by using $({b_i^\perp})^\top  b_j=0$ for all $i,j$ since $b_i^\perp\in B^\perp$.  
\end{proof}
Without loss of generality, suppose that we measure a $Z$ logical operator $\bar{Z}_b^\mathrm{v} = (c' \otimes e_l, 0)$ for $c'=(b\otimes c)\in \ker(\partial'_C)$.
Here, $b\in \mathbb{F}_2^{M}$ denotes which of the $M$ toric codes the operator $\bar{Z}^\mathrm{v}_g $ acts non-trivially on.
The deformed code that mediates the measurement is the coned code $\cone(g[i])$,
\begin{equation}
\begin{tikzcd}[column sep=large,row sep=large]
    |[alias=A1]| G[i]_1 
    \arrow[r, "\partial_{G,1}"] 
    \arrow[rd, "g{[i]}_1"'] &
    |[alias=A0]| G[i]_0 
    \arrow[r, "\partial_{G,0}"] 
    \arrow[rd, "g{[i]}_0"'] &
    |[alias=Am1]| G[i]_{-1} \\
    |[alias=C2] |0 \arrow[r, "0"'] &
    |[alias=C1]| C'_1 \arrow[r, "\partial'_{C}"'] &
    |[alias=C0]| C'_0.
\end{tikzcd}
\end{equation}
Here $g[i]_\ell:\mathbb{F}_2^{|G[i]_\ell|}\rightarrow \mathbb{F}_2^{M\cdot |C_\ell|}$.
We additionally impose the condition that 
\begin{align}
    &g[i]_1^{-1}(b_i^\perp\otimes  e_g)=0, \quad b_i^\perp\in B^\perp\subset  \mathbb{F}_2^{M},\ e_g \in\mathbb{F}_2^{n_C}\label{eq:g1-condition}
\end{align}
where $b_i^\perp$ is a basis vector of the orthogonal complement of $b$, and $e_g$ is the $g$th unit vector of $\mathbb{F}_2^{n_c}$.
Let $\cone(f[i])=(\cone(g[i])\otimes D)_\bullet$. 
Note that relaxing condition~\ref{surgery-cond:expansion} of Lemma~\ref{lem:graph-surgery} and imposing equation~\eqref{eq:g1-condition} does not affect the proof or statements of Lemma~\ref{lem:cone_product}, proposition~\ref{prop:coned-homology} or proposition~\ref{prop:compacted-cone-0-codistance}.
Thus, the logical action and $d^1(\cc_\bullet)$ of these toric code gadgets is unchanged from section~\ref{sec:construction}.

\begin{remark}
    $Z$ logicals of the original toric code, $\bar{Z}^{\mathrm{h}}_m=(0,(e_m\otimes e_h)\otimes d)$ and $\bar{Z}^\mathrm{v}_i=((b_i^\perp\otimes c)\otimes e_l,0)$ are undeformed when switching to the deformed code because there are no new, ancilla $X$ checks that have support on $C_1\otimes D_0$ or $C_0\otimes D_1$ (Fig.~\ref{fig:explicit-toric-chain-map}).
    Therefore, the following trivial extensions of $\bar{Z}^{\mathrm{h}}_m$ and $\bar{Z}^\mathrm{v}_i$ to the full ancilla and code system gives homologically equivalent coned code logicals,
    \begin{align}
        \bar{Z}^{\mathrm{h},\cone}_m&=(0,0,0,(e_m\otimes e_h)\otimes d) \in (G[i]_0\otimes D_0,G[i]
        _{-1}\otimes D_0, C'_1\otimes D_0,C'_0\otimes D_1) \\
        \bar{Z}^\mathrm{v,\cone}_i&=(0,0,(b_i^\perp\otimes c)\otimes  e_l,0)\in (G[i]_0\otimes D_0,G[i]
        _{-1}\otimes D_0, C'_1\otimes D_0,C'_0\otimes D_1)
    \end{align}
    \label{rmk:undeformed-z-cone}
\end{remark}

\begin{proposition}
    Logicals $\bar{Z}^{\mathrm{h},\cone}_m=(0,0,0,(e_m\otimes e_h)\otimes d)\in H_1(\cone(f[i]))$ must have distance at least $d$.
    \label{prop:zh-distance}
\end{proposition}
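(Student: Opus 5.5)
We will use the disjoint conjugate representatives argument sketched in Section~\ref{sec:toric-code}: exhibit $d$ pairwise disjoint $X$-logical representatives of the deformed code $\cone(f[i])$, each anticommuting with $\bar{Z}^{\mathrm{h},\cone}_m$. Any $Z$-operator homologous to $\bar{Z}^{\mathrm{h},\cone}_m$ in $\cone(f[i])$ then anticommutes with all of them, so it must meet each on at least one qubit. Disjointness then forces weight at least $d$.

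The conjugate of $\bar{Z}^{\mathrm{h}}_m=(0,(e_m\otimes e_h)\otimes d)$ in the base code is $\bar{X}^{\mathrm{h}}=(0,(e_m\otimes\bar{c})\otimes e_k)$ on $C'_0\otimes D_1$. Since $\bar{c}=\mathbb{1}_d$ and $e_h^\top\bar{c}=1$, $d^\top e_k=1$, the symplectic product is $1$ for every $k\in[n_D]$. The $n_D=d$ choices of $k$ give supports in distinct columns of $C'_0\otimes D_1$, so they are pairwise disjoint. We extend each trivially by zero on $G[i]_0\otimes D_0$ and $G[i]_{-1}\otimes D_1$.

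The key step is to check that each extended $\bar{X}^{\mathrm{h}}_k$ lies in $\ker(\tilde{H}_Z)$ of $\cone(f[i])$, that is, commutes with every $Z$ check. The old checks $C'_1\otimes D_1$ are unchanged, and the base-code $\bar{X}^{\mathrm{h}}_k$ already commutes with them. The new checks $G[i]_1\otimes D_0$ act on qubits by $\partial_{G[i],1}\otimes\id_D$ into $G[i]_0\otimes D_0$ and by $g[i]_1\otimes\id_D$ into $C'_1\otimes D_0$. Neither space meets the support in $C'_0\otimes D_1$, so these checks commute trivially. The remaining checks $G[i]_0\otimes D_1$ act by $\id\otimes\partial_D$ into $G[i]_0\otimes D_0$, by $\partial_{G[i],0}\otimes\id_D$ into $G[i]_{-1}\otimes D_1$, and by $g[i]_0\otimes\id_D$ into $C'_0\otimes D_1$; only the last touches our support. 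For a basis check $u\otimes e_{k'}$ with $u\in G[i]_0$, the overlap is $\bigl((e_m\otimes\bar{c})^\top g[i]_0(u)\bigr)(e_k^\top e_{k'})$. So we need $(e_m\otimes\bar{c})^\top g[i]_0(u)=0$, equivalently $g[i]_0^\top(e_m\otimes\bar{c})=0$.

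We obtain this from the chain map identity $g[i]_0\partial_{G[i],1}=\partial'_C g[i]_1$. Transposing gives $\partial_{G[i],1}^\top g[i]_0^\top(e_m\otimes\bar{c})=g[i]_1^\top\delta'^C(e_m\otimes\bar{c})=0$, since $\bar{c}\in\ker(\delta^C)$. Hence $y:=g[i]_0^\top(e_m\otimes\bar{c})$ lies in $\ker(\partial_{G[i],1}^\top)=\ker(\delta^0_{G[i]})$. Because $H^0(\cGi{i})=0$, which is the cohomological form of condition~\ref{surgery-cond:no-gauge-qubits}, there is $z\in G[i]_{-1}$ with $y=\partial_{G[i],0}^\top z$. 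Vanishing of $y$ is not automatic, so if $y\ne0$ we modify the representative. We add the $X$ checks $z\otimes e_k\in G[i]_{-1}\otimes D_0$ to $\bar{X}^{\mathrm{h}}_k$; this changes only coordinates in $G[i]_{-1}\otimes D_1$ and the $g$-image. This step is the main obstacle: we must verify that the modified operators remain pairwise disjoint and still anticommute with $\bar{Z}^{\mathrm{h},\cone}_m$.

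In the toric example $G[i]_{-1}=0$, so $H^{-1}$ vanishing forces $\partial_{G[i],0}^\top$ to be zero on the trivial space. Then $y\in\ker\partial_{G[i],1}^\top$ with $H^0=0$ gives $y=0$ directly, and the obstacle disappears. In general, the added pieces lie in column $k$ of the $D$-factor, so the operators stay disjoint across $k$. They also stay off the support of $\bar{Z}^{\mathrm{h},\cone}_m$, which is zero on ancilla spaces, so the anticommutation is unchanged. This yields $d$ disjoint conjugate representatives and hence the bound $|\bar{Z}^{\mathrm{h},\cone}_m+\text{stabilizers}|\ge d$.
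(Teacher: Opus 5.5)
Your route is the paper's route: $n_D=d$ pairwise disjoint $X$-representatives, one for each $k$. Each is built from the base-code $\bar X^{\mathrm h}$ on $C'_0\otimes D_1$ plus a correction on $G[i]_{-1}\otimes D_1$, and each anticommutes with $\bar Z^{\mathrm{h},\cone}_m$. Your justification of the correction is actually sounder than the paper's. The paper takes the correction to be $\partial_{G,0}(g[i]_0^{-1}(\bar c'_m))\otimes e_k$. When it checks the $G[i]_0\otimes D_1$ rows, it replaces $\partial_{G,0}^\top\partial_{G,0}(g[i]_0^{-1}(\bar c'_m))$ by $g[i]_0^{-1}(\bar c'_m)$, and that identity does not hold in general. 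Your argument instead transposes the chain-map identity to get $y=g[i]_0^\top(e_m\otimes\bar c)\in\ker\partial_{G[i],1}^\top$, then uses $H^0(\cGi{i})=0$ to write $y=\partial_{G[i],0}^\top z$. This gives a correction term that provably works.

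There is one slip to repair. You describe the fix as adding ``the $X$ checks $z\otimes e_k\in G[i]_{-1}\otimes D_0$.'' Read literally, this cannot work: multiplying by $X$ stabilizers never changes the $Z$-syndrome, so the violation $y\otimes e_k$ on the $G[i]_0\otimes D_1$ checks would remain. What you need, and what your later sentences implicitly assume, is to add $z\otimes e_k$ (with $e_k\in D_1$) as qubit support on $G[i]_{-1}\otimes D_1$. The only $Z$ checks acting on those qubits are the ones in $G[i]_0\otimes D_1$, through $\partial_{G[i],0}\otimes\id_D$. So the syndrome there becomes $(\partial_{G[i],0}^\top z+y)\otimes e_k=0$. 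State this cancellation explicitly, because it is the real content of the step. Once it is in place, disjointness across $k$ and anticommutation with $\bar Z^{\mathrm{h},\cone}_m$ follow at once, as you say, so the ``main obstacle'' you flagged is not an obstacle.
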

\begin{proof}
    Take $\bar{X}^\mathrm{h}_m= (0,\bar{c}'_m\otimes e_k)$ logical operators of the original toric code(s) where $\bar{c}'_m\in \ker(\delta'^C)$ and $e_k$ is the $k$th unit vector of $\mathbb{F}_2^{n_d}$. 
    We claim that the following operator on the ancilla and code qubit vector space is an $X$ logical operator of the coned code,
    \begin{align}
        &\bar{X}^\mathrm{h,cone}_m=(0,\partial_{G,0}(g_0^{-1}[i](\bar{c}_m'))\otimes e_k,0,\bar{c}'_m\otimes e_k)\notag\\
        &\in (G[i]_0\otimes D_0,G[i]_{-1}\otimes D_1,C'_1\otimes D_0, C'_0\otimes D_1).
        \label{eq:xh-disjoint-logicals}
    \end{align}
    We can verify $\bar{X}^\mathrm{h,cone}_m$ is indeed in $ H^1(\cone(f[i]))$ because it is in $ \ker(\delta^{\cone(f[i]),1})$, where
        
    \begin{align}
        \delta^{\cone(f[i]),1} = \begin{pNiceMatrix}[first-row, first-col]
        & G[i]_0\otimes D_0 & G[i]_{-1}\otimes D_1 & C'_1\otimes D_0 & C'_0\otimes D_1\\
        G[i]_1\otimes D_0 & \partial_{G,1}^\top \otimes \id_D & 0 & g[i]_1^{-1}\otimes \id_D & 0\\
        G[i]_0\otimes D_1 & \id_{G}\otimes \partial^\top_D & \partial^\top_{G,0}\otimes \id_D & 0 &g[i]_0^{-1}\otimes \id_D \\
        C'_1\otimes D_1 & 0 & 0 & \id_{C'}\otimes \partial^\top_D & {\partial'}^\top_{C} \otimes \id_D\\
    \end{pNiceMatrix},
    \end{align}
    and so,
    \begin{align}
        \delta^{\cone(f[i]),1}\begin{pmatrix}
            0\\
            \partial_{G,0}(g_0^{-1}[i](\bar{c}_m'))\otimes e_k\\
            0\\
            \bar{c}'_m\otimes e_k
        \end{pmatrix}&= \begin{pmatrix}
            0\\
            \partial_{G,0}^\top (\partial_{G,0}(g[i]^{-1}_0(\bar{c}_m'))\otimes e_k+g[i]_0^{-1} (\bar{c}'_m)\otimes e_k\\
            \partial^\top_{C'} (\bar{c}'_m)\otimes e_k
        \end{pmatrix}\\
        &=\begin{pmatrix}
            0\\
            g[i]^{-1}_0(\bar{c}_m')\otimes e_k+g[i]_0^{-1} (\bar{c}'_m)\otimes e_k\\
            \partial^\top_{C'} (\bar{c}'_m)\otimes e_k
        \end{pmatrix}=\vec{0}.
    \end{align}
    In the last equality, we use 
    $\partial^\top_C (\bar{c}_m')=\delta^C (\bar{c}_m')=0$ since $\bar{c}_m'\in \ker({\delta^C}')$.
    For fixed $\bar{c}'_m$, varying $k\in[n_d]$ in $\bar{X}^\mathrm{h,cone}_m$ yields logicals in the same homology class.
    Therefore, there are $n_d=d$ disjoint logicals given by Eq.~\eqref{eq:xh-disjoint-logicals} for different $k\in[d]$.
    Each $\bar{X}^\mathrm{h,cone}_m$ has a conjugate $Z$ logical, $\bar{Z}^{\mathrm{h,cone}}_m=(0,0,0,(e_m\otimes e_h)\otimes d)$ (see Remark~\ref{rmk:undeformed-z-cone}).
    Every $Z$ logical must anticommute with every representative of its conjugate $X$ logical.
    Since there are $d$ disjoint $\bar{X}^\mathrm{h}_m$ operators, $\bar{Z}^{\mathrm{h,cone}}_m$ must have weight at least $d$ in order to intersect each $\bar{X}^\mathrm{h}_m$ for $k\in[d]$ once.
\end{proof}

\begin{proposition}
    Logicals $\bar{Z}^\mathrm{v}=(0,0,c'_m\otimes e_l,0)\in H_1(\cone(f[i]))$ for $c'_m\in \ker(\partial_C')/\{b\}$ must have distance at least $d$.
    \label{prop:zv-distance}
\end{proposition}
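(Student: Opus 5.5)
We follow the same strategy as in Proposition~\ref{prop:zh-distance}: exhibit $d$ pairwise disjoint $X$-logical representatives in $\cone(f[i])$, each anticommuting with $\bar{Z}^\mathrm{v}$. Then any representative of the class of $\bar{Z}^\mathrm{v}$ must meet each of them, so it has weight at least $d$. By Remark~\ref{rmk:undeformed-z-cone}, the unmeasured vertical $Z$ logicals are spanned by $\bar{Z}^{\mathrm{v},\cone}_i=(0,0,(b_i^\perp\otimes c)\otimes e_l,0)$. After a change of basis we may therefore take $c'_m=b_i^\perp\otimes c$ with $b_i^\perp\in B^\perp$, and we work in that basis.

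The candidate conjugate $X$ operators are the undeformed vertical $X$ logicals
\begin{align}
    \bar{X}^{\mathrm{v},\cone}_{i,g}=(0,0,(b_i^\perp\otimes e_g)\otimes \bar{d},0)\in (G[i]_0\otimes D_0,G[i]_{-1}\otimes D_1,C'_1\otimes D_0,C'_0\otimes D_1),
\end{align}
one for each $g\in[n_C]=[d]$. The key step is to check that each lies in $\ker(\delta^{\cone(f[i]),1})$, using the block matrix written in the proof of Proposition~\ref{prop:zh-distance}. Only the $C'_1\otimes D_0$ column is nonzero, so two blocks must vanish.

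The row $C'_1\otimes D_1$ gives $(b_i^\perp\otimes e_g)\otimes\partial_D^\top\bar{d}$. This is zero because $\bar{d}\in\ker(\delta^D)$.

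The row $G[i]_1\otimes D_0$ gives $(g[i]_1^\top\otimes\id_D)$ applied to $(b_i^\perp\otimes e_g)\otimes\bar{d}$, which equals $g[i]_1^\top(b_i^\perp\otimes e_g)\otimes\bar{d}$. This vanishes exactly by the imposed condition~\eqref{eq:g1-condition}, read as saying that no ancilla $Z$ check in $G[i]_1$ touches the qubits supporting $b_i^\perp\otimes e_g$.

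This verification is where I expect the main obstacle to lie. Condition~\eqref{eq:g1-condition} must be interpreted as $g[i]_1^\top(b_i^\perp\otimes e_g)=0$ for the chosen basis vectors, and we must confirm that such a basis exists compatibly with $g[i]_1$ mapping onto the support of $b\otimes c$. When the codes touched by $b$ and by $b_i^\perp$ are disjoint, this is immediate. In general one may need to choose the $b_i^\perp$ with support outside the support of $b$, or otherwise argue via $g[i]_1(\mathbb{1})=b\otimes c$ and the 1-sparsity of $g[i]_1$.

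Once membership is established, the rest is routine. The operators for different $g$ are disjoint, since they are supported on distinct unit vectors $e_g$. They are homologous to one another, differing by coboundaries of original $X$ checks, which are unchanged in the relevant block. Each pairs with $\bar{Z}^{\mathrm{v},\cone}_i$ as
\[
(b_i^{\perp\top}b_i^\perp)(e_g^\top c)(\bar{d}^\top e_l)=1
\]
after choosing dual bases, or more generally anticommutes with some element of the unmeasured class. Lemma~\ref{lem:orthogonal-basis} shows that these operators commute with the measured $\bar{Z}_b^\mathrm{v}$, as required for them to remain logicals. Since there are $d$ disjoint representatives, every representative of a nontrivial unmeasured vertical $Z$ class has weight at least $d$.
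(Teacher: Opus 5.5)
Your proposal is correct and follows the paper's proof essentially step for step: the trivially extended vertical $X$ logicals $(0,0,(b_i^\perp\otimes e_g)\otimes\bar d,0)$, the cocycle check via \eqref{eq:g1-condition} (with $g[i]_1^{-1}$ read as the transpose, as in the paper's block matrix) and $\delta^D\bar d=0$, and $n_C=d$ disjoint copies indexed by $g$, each pairing to $1$ with the unmeasured class. The paper simply imposes \eqref{eq:g1-condition} as a hypothesis, so your worry about whether it can be satisfied lies outside this proof; note, however, that your suggested remedies fail for $|b|\ge 2$, because the condition is equivalent to $\mathrm{im}(g[i]_1)\subseteq b\otimes C_1$, which no $1$-sparse $g[i]_1$ can satisfy, and vectors supported off $\mathrm{supp}(b)$ do not span $B^\perp$.
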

\begin{proof}
    Take the trivial extension of the logical operators $\bar{X}^{\mathrm{v}}_i$,
    \begin{align}
        \bar{X}^{\mathrm{v},\cone}_i=(0,0,(b_i^\perp\otimes e_g)\otimes \bar{d},0),\quad \bar{d}\in \ker(\delta^D),
    \end{align}
    where $b_i^\perp$ is a basis vector for $\mathbb{F}_2^M/\ \mathrm{Im}(b)$.
    This operator cannot be the conjugate $X$ logical of measured logical $\bar{Z}_b^\mathrm{v} = (0,0,(b\otimes c)\otimes e_l,0)$, because they commute by direct application of Lemma~\ref{lem:orthogonal-basis}.
    In fact, $\bar{X}^{\mathrm{v},\cone}_i$ are the conjugate $X$ logicals of the $\bar{Z}^{\mathrm{v},\cone}_{i} = (0,0,(b^\perp_i\otimes c)\otimes e_l,0)$ logicals that are not being measured. 
    \begin{align}
        (\bar{X}^{\mathrm{v},\cone}_i)^\top \bar{Z}^{\mathrm{v},\cone}_{i'} &= 0+0+ (({b_i^\perp})^\top \otimes e_g^\top \otimes \bar{d}^\top ) (b^\perp_{i'}\otimes c\otimes e_l)+0\\
        &=({b_i^\perp})^\top b_{i'}^\perp\otimes e_g^\top c\otimes \bar{d}^\top e_l 
        =\delta_{i i'} 
    \end{align}
    Thus, we see that $\bar{Z}^{\mathrm{v},\cone}_{i}$ and $\bar{X}^{\mathrm{v},\cone}_i$ satisfy the correct commutation relations to form conjugate pairs.
    We can verify that $\bar{X}^{\mathrm{v},\cone}_i$ is indeed in $ H^1(\cone(f[i]))$, since
    \begin{align}
        \delta^{\cone(f[i]),1}\begin{pmatrix}
            0\\
            0\\
            (b_i^\perp \otimes e_g)\otimes \bar{d}\\
            0
        \end{pmatrix}=
        \begin{pmatrix}
            g[i]_1^{-1}(b_i^\perp\otimes e_g)\otimes \bar{d}\\
            0\\
            e_g\otimes \partial^\top_D(\bar{d})
        \end{pmatrix}=
        \vec{0}.
    \end{align}
    \noindent In the last equality, we use $g[i]_1^{-1}(b_i^\perp \otimes e_g)=0$ by assumption (Eq.~\eqref{eq:g1-condition}) and $\partial^\top_D(\bar{d})=\delta^D(\bar{d})=0$. We see that there are $n_C=d$ disjoint  $\bar{X}^{\mathrm{v},\cone}_i$ logicals for $g\in[n_C]$. For each $i\in [M-1]$,  $\bar{Z}^{\mathrm{v},\cone}_i$ must anticommute with all $d$ disjoint conjugate logicals $\bar{X}^{\mathrm{v},\cone}_i$. Therefore, $\bar{Z}^{\mathrm{v},\cone}_i$ must have weight at least $ d$.
\end{proof}
\noindent Together, propositions~\ref{prop:zh-distance} and~\ref{prop:zv-distance} imply that $d_1(\cone(f[i])) \ge d$.
Let us take an extended chain map over all ancilla systems in $t$ surgeries,
\begin{align}
    f'[i]:(\bigoplus_{i\in[t]} G[i]]\otimes D)_\bullet  \rightarrow (C'\otimes D)_\bullet.
\end{align}
Further, let 
\begin{align}
    \bar{f}[\tau] := \sum_{\substack{i\in[t]\\i\le \tau}} f'[i].
    \label{eq:iterative-map}
\end{align}
$\cone(\bar{f}[\tau])$ is the compacted code if we constructed it to include all surgeries up to $\tau\le t$. In other words, $\cc_\bullet = \cone(\bar{f}[t])$.
\begin{proposition}
    $d_1(\cone(\bar{f}))\ge d$.
\end{proposition}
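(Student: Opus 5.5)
We prove $d_1(\cone(\bar f[\tau]))\ge d$ by induction on $\tau$, reusing the disjoint-conjugate-representative argument of Propositions~\ref{prop:zh-distance} and~\ref{prop:zv-distance}. The point is that this argument never used expansion. It only needs two things: (i) every unmeasured $Z$ logical of the compacted code has a conjugate $X$ logical with $d$ pairwise disjoint representatives, and (ii) these representatives remain in $\ker(\delta^{1})$ of the larger cone. The base case $\tau=1$ is exactly Propositions~\ref{prop:zh-distance} and~\ref{prop:zv-distance}. For the inductive step, $\cone(\bar f[\tau])$ is obtained from $\cone(\bar f[\tau-1])$ by attaching one more ancilla $(G[\tau]\otimes D)_\bullet$ via $f'[\tau]$. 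By Lemma~\ref{lem:cone_product} this equals $(\cone(\bar g[\tau])\otimes D)_\bullet$, with coboundary of the same block form as $\delta^{\cone(f[i]),1}$, the $G$-blocks now being direct sums over $i\le\tau$.

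\textbf{Step 1 (homology bookkeeping).} By Proposition~\ref{prop:1D-cone-compacted-homology}, which does not use expansion, the surviving $Z$ logicals are:
\begin{itemize}
\item the horizontal ones $(0,(e_m\otimes e_h)\otimes d)$, which are all untouched;
\item the vertical ones $((b^\perp\otimes c)\otimes e_l,0)$ with $b^\perp$ ranging over a complement of $\Span\{b[1],\dots,b[\tau]\}$.
\end{itemize}
Choose $B=\{b[1],\dots,b[\tau]\}$, assumed linearly independent, and a basis $\{b_i^\perp\}$ of $B^\perp$ as in Lemma~\ref{lem:orthogonal-basis}. Conjugate pairs are then $((b^\perp_i\otimes c)\otimes e_l,0)$ and $((b^\perp_i\otimes e_g)\otimes\bar d,0)$, up to a change of basis making the pairing $\delta_{ii'}$. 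A dual-basis choice on the span complement is needed here; over $\bF_2$, $B^\perp$ may intersect $\Span B$. I would instead pick $\{x_i\}\subset B^\perp$ dual to a chosen complement basis $\{y_i\}$, which is possible because the pairing $B^\perp\times \bF_2^M/\Span B\to\bF_2$ is nondegenerate.

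\textbf{Step 2 (the $X$ representatives survive).} Horizontal case: take
\[
\bar X^{\mathrm h}_m=\Bigl(0,\ \textstyle\sum_{i\le\tau}\partial_{G[i],0}\bigl(g[i]_0^{-1}(\bar c'_m)\bigr)\otimes e_k,\ 0,\ \bar c'_m\otimes e_k\Bigr).
\]
The same computation as in Proposition~\ref{prop:zh-distance}, done blockwise for each $i$, shows it lies in $\ker\delta^1$. These are disjoint for different $k$ because the supports are $(\cdot)\otimes e_k$.

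Vertical case: $((x_i\otimes e_g)\otimes\bar d,0)$ lies in $\ker\delta^1$ provided $g[j]_1^{-1}(x_i\otimes e_g)=0$ for all $j\le\tau$. This is condition~\eqref{eq:g1-condition} imposed for every gadget. It holds because $x_i\perp b[j]$ and $g[j]_1$ is supported only on blocks in $\mathrm{supp}(b[j])$, with $c=\mathbb 1$ per block. I would state this as the standing hypothesis: each $g[j]_1$ acts uniformly across the copies selected by $b[j]$. These representatives are disjoint for different $g\in[n_C]$.

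\textbf{Step 3 (conclude).} Any nontrivial class in $H_1(\cone(\bar f[\tau]))$ is a combination of the surviving basis, so it has odd overlap with some conjugate $X$ class. It therefore intersects each of that class's $d$ disjoint representatives, giving weight $\ge d$.

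\textbf{Main obstacle.} The main obstacle is Step 2 for the vertical logicals: checking the cokernel-type condition~\eqref{eq:g1-condition} simultaneously for all $\tau$ gadgets. It also requires correctly choosing conjugate $X$ logicals that are orthogonal to all measured $b[j]$ while pairing to the unmeasured $Z$'s. If the uniformity of $g[j]_1$ fails, I would instead restrict to $X$ representatives on the columns $e_g$ avoided by every $g[j]_1$-image. This would require showing that at least $d$ such columns exist.
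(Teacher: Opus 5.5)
Your proposal is essentially the paper's argument: attach the gadgets one at a time, let the horizontal conjugate $X$ representatives acquire support on every $G[i]_{-1}\otimes D_1$ while remaining disjoint in $k$, keep the vertical ones undeformed via \eqref{eq:g1-condition} imposed for every gadget, and conclude from $d$ pairwise disjoint conjugate representatives; your dual-basis choice inside $B^\perp$ and your Step~3 treatment of arbitrary (not only basis) classes tighten the paper's version, and one small fix is inherited from the paper, namely that the $G[i]_{-1}\otimes D_1$ component should be any $r_i$ with $\partial_{G[i],0}^{\top}r_i=g[i]_0^{\top}\bar c'_m$, which exists because $g[i]_0^{\top}\bar c'_m\in\ker\partial_{G[i],1}^{\top}$ and $H^0(G[i]_\bullet)=0$. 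However, your justification that \eqref{eq:g1-condition} follows from $x_i\perp b[j]$ is wrong in general: orthogonality only makes the entries of the vector $g[j]_1^{\top}(x_i\otimes e_g)\in G[j]_1$ (the paper's $g[j]_1^{-1}$) sum to zero, and for a $1$-sparse $g[j]_1$ its vanishing requires $\mathrm{supp}(x_i)\cap\mathrm{supp}(b[j])=\emptyset$ (your avoided-columns fallback also fails there, since such a map hits every column of the selected blocks), so keep \eqref{eq:g1-condition} as an explicit hypothesis, as the paper does, or make ``uniform'' precise as $g[j]_1=b[j]\otimes\tilde g_1$, which validates your reasoning at the cost of $1$-sparsity.
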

\begin{proof}
     Propositions~\ref{prop:zh-distance} and~\ref{prop:zv-distance} apply to the extension of  $\cone(f[i])$ to $\cone(f'[i])$. To see this, consider extended logical operators of the toric code and the coned code.
    \begin{align}
        \bar{X'}^{\mathrm{h}}_m&=(0,\dots,0,\bar{c}'_m\otimes e_k)\in (\bigoplus_{t'\in[t]}(G[t']_0\otimes D_0,G[t']_{-1}\otimes D_1,))\oplus (C'_1\otimes D_0,C'_0\otimes D_1)\\
        \bar{X'}^{\mathrm{h},\cone}_m&=(0,\dots, 0,\partial_{G,0}(g[i]_{0}^{-1}(\bar{c}'_m))0,\dots,0,\bar{c}'_m\otimes e_k)\notag\\&\in (\bigoplus_{t'\in[t]}(G[t']_0\otimes D_0,G[t']_{-1}\otimes D_1,))\oplus (C'_1\otimes D_0,C'_0\otimes D_1)
        \label{eq:extended}
    \end{align}
    $\delta^{\cone(f'[i]]),1}$ only has non-zero entries that are shared with $\delta^{\cone(f[i]),1}$. 
    This implies that the extended $\bar{X}^{\mathrm{h},\cone}$ is still in $\ker(\delta^{\cone(f'[1]),1})$, and the rest of the arguments follow.
    We repeat this for the extension $\bar{X'}^{\mathrm{v},\cone}_i$ in proposition~\ref{prop:zv-distance}.

    Suppose that the compacted code consists of $t$ surgeries, each measuring a different, potentially entangling toric code $\bar{Z}^\mathrm{v}_{b_i}$ operator, $(b_i\otimes c\otimes e_l,0)$. 
    Define a basis $\{b_j^{\perp}\}$ for the orthogonal complement of $\mathrm{Span}\{b_i\}$.
    Consequently, $\bar{X}^{\mathrm{v}}_j=(b_j^\perp\otimes e_g\otimes e_l,0)$ are the $X$ logicals conjugate to the $\bar{Z}^\mathrm{v}_j$ operators that were never measured throughout $t$ surgeries.
    We will now construct $\cc_\bullet$ iteratively, considering just one surgery.
    Recall from Eq.~\eqref{eq:iterative-map}, we use the notation $\bar{f}[\tau]$ to denote the cumulative chain map when adding the chain maps from $\tau$ surgeries.
    By equation~\eqref{eq:iterative-map}, $\bar{f}[1]=f'[1]$.
    Note that $\{{\bar{X'}}^{\mathrm{h},\cone}_m\}$ and $\{\bar{X'}^{\mathrm{v},\cone}_j\}$ remain logicals of $\cone(f'[1])$, thus, the distances of the extended $Z$ logicals $\{\bar{Z'}^{\mathrm{h,\cone}}_m\}$ and $\{\bar{Z'}^{\mathrm{v},\cone}_j\}$  must be greater than $d$ from the arguments presented in propositions~\ref{prop:zh-distance} and~\ref{prop:zv-distance}.
    Now, we take $\tau=2$ and consider $\cone(\bar{f}[2])$. 
    $\delta^{\cone(f'[2]),1}$ shares non-zero entries with $\delta^{\cone(f[2]),1}$ and is $0$ otherwise.
    Therefore, $\delta^{\cone(\bar{f}[2])),1}=\delta^{\cone(f'[1]),1}\oplus \delta^{\cone(f'[2]),1}$ must only contain non-zero entries already in $\delta^{\cone(f[1]),1}$ and $\delta^{\cone(f[2]),1}$. 
    This implies that $\bar{X'}^{\mathrm{v},\cone}_i$ is in $ \ker(\delta^{\cone(\bar{f}[2])),1})$ and is still a logical of $\cone(\bar{f}[2]))$.
    However, as before, $\bar{X'}^{\mathrm{h},\cone}_m$ must gain support on $G[2]_{-1}\otimes D_1$ such that it is in $\ker(\delta^{\cone(\bar{f}[2])),1})$,
    \begin{align}
        \bar{X'}^{\mathrm{h},\cone}_m = (0,\partial_{G,0}(g[1]_0^{-1}(\bar{c}'_m))\otimes e_k,0,\partial_{G,0}(g[2]_0^{-1}(\bar{c}'_m))\otimes e_k,0,\dots,0,\bar{c}'_m\otimes e_k)\\
        \in (G[1]_0\otimes D_0, G[1]_{-1}\otimes D_1, G[2]_0\otimes D_0, G[2]_{-1}\otimes D_1,\dots,G[t]_{-1}\otimes D_1,)\oplus(C'_1\otimes D_0,C'_0\otimes D_1).
    \end{align}
    Nevertheless, there are still $d$ disjoint representatives of $\bar{X'}^{\mathrm{h},\cone}_m $ for varying $k\in [d]$.
    Therefore, the arguments in propositions~\ref{prop:zh-distance} and~\ref{prop:zv-distance} still apply, and $d_1(\cone(\bar{f}[2]))\ge d$.
    Iterate this argument for $\tau=3,4,...$ in $\cone(\bar{f}[\tau])$ until $\tau=t$. When $\tau=t$, $d_1(\cone(\bar{f}[t]))=d_1(\cc_\bullet)\ge d$ is proven.
\end{proof}
\end{document}